\newtheorem{remark}{Remark}
\newtheorem{definition}{Definition}
\newtheorem{statement}{Statement}
\newtheorem{theorem}{Theorem}
\newtheorem{corollary}{Corollary}
\newtheorem{lemma}{Lemma}
\newtheorem{property}{Property}
\newcommand{\fillr}{\mathcal{F}}
\newcommand{\lnorm}[1]{\left\lVert #1 \right\rVert_{2}}
\newcommand{\step}{\Delta_N}
\renewcommand{\H}{\mathcal{H}}
\newcommand{\ket}[1]{|#1\rangle}
\newcommand{\bra}[1]{\langle #1 |}
\newcommand{\op}[2]{|#1\rangle \langle #2|}
\newcommand{\ip}[2]{\langle #1 | #2 \rangle}
\DeclarePairedDelimiter\ceil{\lceil}{\rceil}
\newcommand{\vertiii}[1]{{\left\vert\kern-0.25ex\left\vert\kern-0.25ex\left\vert #1 \right\vert\kern-0.25ex\right\vert\kern-0.25ex\right\vert}}
\begin{document}

\title{Preparing Arbitrary Continuous Functions in Quantum Registers With Logarithmic Complexity}

\newcommand{\equaltext}{Both authors contributed equally}
\author{Arthur G. Rattew}
\email{arthur.rattew@materials.ox.ac.uk}
\affiliation{Department of Materials, University of Oxford, Parks Road, Oxford OX1 3PH, United Kingdom}
\affiliation{JPMorgan Chase Bank N.A., Future Lab for Applied Research and Engineering}

\author{B\'alint Koczor}
\email{balint.koczor@materials.ox.ac.uk}
\thanks{\\\equaltext}
\affiliation{Department of Materials, University of Oxford, Parks Road, Oxford OX1 3PH, United Kingdom}

\begin{abstract} 
Quantum computers will be able solve important problems with significant polynomial and exponential speedups over their classical counterparts, for instance in option pricing in finance, and in real-space molecular chemistry simulations. However, key applications can only achieve their potential speedup if their inputs are prepared efficiently. We effectively solve the important problem of efficiently preparing quantum states following arbitrary continuous (as well as more general) functions with complexity logarithmic in the desired resolution, and with rigorous error bounds. This is enabled by the development of a fundamental subroutine based off of the simulation of rank-1 projectors. Combined with diverse techniques from quantum information processing, this subroutine enables us to present a broad set of tools for solving practical tasks, such as state preparation, numerical integration of Lipschitz continuous functions, and superior sampling from probability density functions. As a result, our work has significant implications in a wide range of applications, for instance in financial forecasting, and in quantum simulation.
\end{abstract}

\maketitle

\section{Introduction}\label{sec:background}
With rapidly accelerating progress in hardware development, quantum
computers are quickly becoming a realistic technology. New records are constantly being
set, demonstrating that these machines are capable of significantly outperforming the best classical computers in certain settings~\cite{aruteQuantumSupremacyUsing2019,zhongPhaseProgrammableGaussianBoson2021,wuStrongQuantumComputational2021,ebadiQuantumPhasesMatter2021,gongQuantumWalksProgrammable2021a}.
As the technology is scaled up, it will ultimately enable powerful quantum algorithms 
capable of both polynomial and exponential speedups over their classical counterparts.
However, many of these algorithms require efficient state preparation procedures to achieve their speedup, as is the case in: option pricing~\cite{chakrabarti2021threshold, stamatopoulos2020option, herman2022survey}, machine learning~\cite{lloyd2014quantum, mitarai2018quantum, pistoia2021quantum}, matrix inversion~\cite{harrow2009quantum}, quantum chemistry~\cite{chan2022grid, ward2009preparation}, and quantum Monte Carlo based algorithms~\cite{woerner2019quantum, montanaro2015quantum}. As such, it remains a crucial open question: can we \emph{efficiently} prepare quantum states given a set of prespecified amplitudes?

Unfortunately, a quantum circuit capable of producing an arbitrary quantum state necessitates exponential complexity in general~\cite{plesch2011quantum},  although recent works have shown that the exponential circuit depth cost can actually be made linear in exchange for utilizing an exponential number of ancillary qubits~\cite{sun2021asymptotically,rosenthal2021query,zhang2022quantum}.
As a result, in practice, it is essential to exploit specific properties of the state being produced.

In this work, we effectively solve the important problem of efficiently preparing continuous (as well as some more general) functions in quantum registers, and we do so with rigorous theoretical guarantees.
While we illustrate the core subroutine underpinning our novel technique in \cref{fig:fig1}, our main result is stated as \cref{theorem:total_query_complexity_informal}. 
Succinctly and informally stated, we prove that our approach has a constant query complexity (asymptotically in the number of qubits), and obtains a bounded error which can be made arbitrarily small. A remarkable feature of our approach is that the performance of our algorithm becomes asymptotically independent of the qubit count as we increase the number of qubits, and that we can reach this asymptotic independence exponentially quickly in the number of qubits.

The constant cost of our algorithm depends inverse-polynomially on the \textit{filling ratio} of the function -- a quantity that we define in a subsequent section and illustrate in \cref{fig:fig1a} as the ratio of blue area to the area of the total bounding box. While it is possible to construct pathological functions for which the filling ratio is arbitrarily small, for realistic functions encountered in practice this quantity is typically very reasonable (e.g. usually around $10^{-1}$ to $10^{-3})$. We additionally suggest ways that this constant cost can be made tractable even in pathological cases. Moreover, if future works improve the error-dependence of our simulation subroutine, the impact of the filling ratio will be even further negated.

The implications of our work go significantly beyond just enabling the speedup of existing algorithms via efficient state preparation. Functions have been used for centuries to model physical, chemical, biological, and financial phenomena, and have played crucial roles in the industrial revolution and in modern engineering. As the use of analytical techniques for functions modelling realistic systems rapidly becomes intractable, numerical techniques often become the only available option~\cite{hildebrand1987introduction}. As a result, computational tools accelerating such numerical analyses are of broad and vital importance. To this end, the techniques we develop in this paper not only allow for the preparation of broad classes of functions in quantum registers, but also enable a variety of tools such as for numerical integration of Lipschitz continuous functions, for the optimization of continuous functions, and even for improved sampling from probability density functions.

We begin by reviewing related work.
In the next section we briefly summarise our problem statement and our main results. There, we informally state our rigorous error bounds which confirm that our approach is efficient, noting that we defer all technical details to the appendix.
We then summarize the implementation of a key subroutine in our approach: the efficient time-evolution of rank-one dense matrices corresponding to projectors onto discretized continuous functions. We also discuss the limitations of our approach in a subsequent section.
We then demonstrate the main algorithm in relevant applications, and support the practicality of the approach with large-scale numerical simulations confirming theoretical expectations. We finally discuss further applications and then conclude.

\begin{figure}[tb]
	\begin{centering}
		\includegraphics[width=0.48\textwidth]{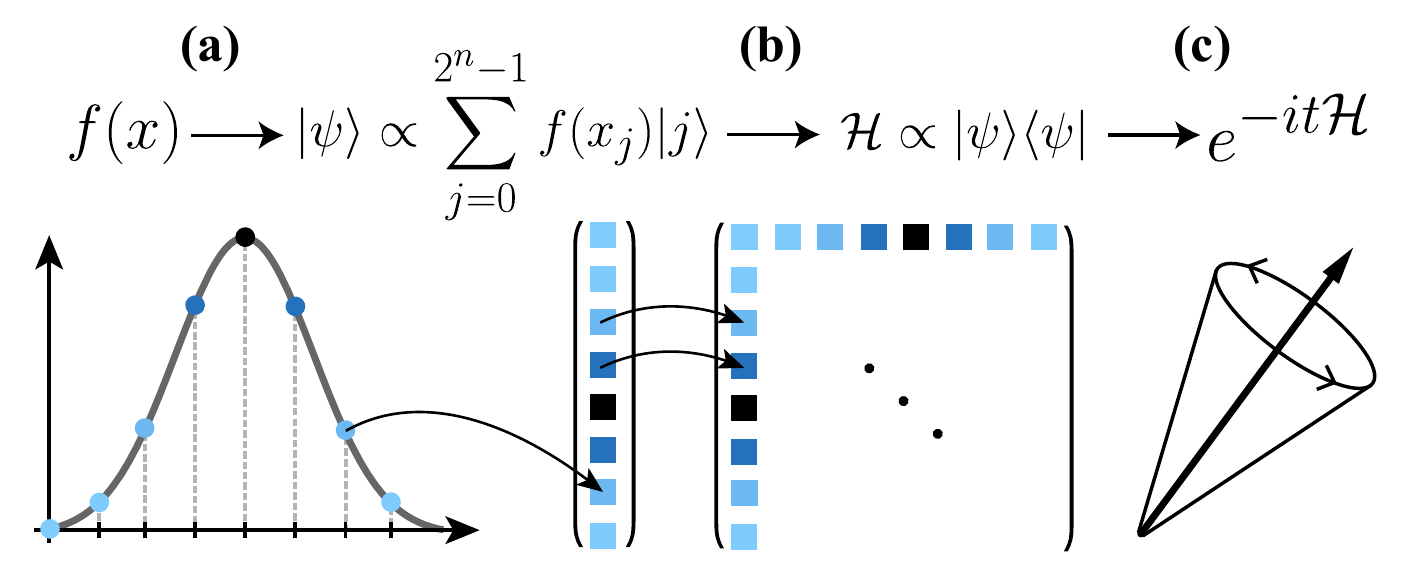}
		\caption{
		Key rank-1 Hamiltonian simulation subroutine.
			\textbf{(a)}
			 Samples (coloured circles) of the continuous function $f(x_k)$ (grey smooth curve) are proportional to the amplitudes (coloured squares)
			 of the quantum state $\ket{\psi}$ as per \cref{eq:psidef}.
			\textbf{(b)} We embed this vector into
			a rank-1 matrix $\mathcal{H} \propto \ket{\psi} \bra{\psi}$
			whose matrix entries are given as
			$f(x_k) f^*(x_l)$ (colours are only illustrative, not exact) and are efficiently computed by a quantum oracle, e.g., via arithmetic operations. 
			\textbf{(c)} We then efficiently simulate the
			time evolution	$e^{- i t \mathcal{H} }$
			via low-rank~\cite{rebentrost2018quantum} and	1-sparse~\cite{childs2004quantum,aharonov2003adiabatic,childs2003exponential,berry2007efficient}
			 simulation techniques.
			 This powerful time evolution resource allows us to prepare the ground state of $\mathcal{H}$ as our desired
			$\ket{\psi}$ using, e.g., adiabatic evolution, phase estimation
			and Hadamard tests. Note that each stage occurs coherently in superposition.
			\label{fig:fig1}
		}
	\end{centering}
\end{figure}

\subsection{Prior Work}

As early as in 2002 Grover and Rudolph presented a procedure for preparing an efficiently integrable probability density (i.e., non-negative, $L^1$-normalised) function~\cite{grover2002creating}, which we briefly summarise in \cref{si:grover_rudolph}.
Under certain assumptions on the efficient integrability of the function this approach theoretically has a complexity
polylogarithmic in the desired resolution, but exponentially scaling variants have also been suggested~\cite{marin2021quantum, holmes2020efficient, vazquez2021efficient}.
Nevertheless, in certain applications any variant of the procedure negates any potential quantum speedup due to its use of quantum integration~\cite{herbert2021no, chakrabarti2021threshold}. For example, quantum Monte Carlo based algorithms lose their speedup as pointed out by Herbert~\cite{herbert2021no} and similarly in derivative pricing as observed in the work of Chakrabarti \textit{et al.}~\cite{chakrabarti2021threshold}.
In contrast, the set of techniques presented in this work strictly generalize the Grover-Rudolph-type approaches, as they are not limited to efficiently integrable non-negative probability density functions -- they can be applied to any complex valued continuous function and beyond. Moreover, our techniques are not reliant on quantum integration, and thus do not suffer from the limitations of the Grover-Rudolph based approaches just mentioned.

We also note that Holmes \textit{et al.} recently presented an approximate procedure for preparing smooth analytic functions~\cite{holmes2020efficient} via a piecewise polynomial approximation. 
However, it is not straightforward to assess the approximation error of the approach, its overall complexity has not been established, and it is asymptotically limited in its use of the piecewise polynomial approximation.

\section{Algorithm Overview}
While our approach is applicable to the general class of efficiently computable mappings
$F^{(n)}: \{0,1\}^n \mapsto \mathbb{C}$ via \cref{sec:generalisation}, in the following we restrict ourselves
to the practically most important case of discretised continuous functions.
In particular, given an arbitrary continuous function $f : [a,b] \mapsto \mathbb{C}$ on the closed interval $[a, b]$, we wish to prepare the $n$-qubit normalized quantum state
\begin{align}\label{eq:psidef}
    \ket{\psi} =
    \frac{1}{\mathcal{N}}\sum_{j = 0}^{N - 1}f(x_j)\ket{j},
\end{align}
where $\mathcal{N}$ is the normalization factor, $N=2^n$, $\ket{j}$ is a standard basis vector with $j \in \{0, 1, ..., N - 1\}$, and $x_j$ is the $j^{th}$ grid point in a uniform discretization of the interval $[a,b]$---while we later remark that non-uniform grid spacings may sometimes be advantageous. 
In general, preparing an arbitrary state in a Hilbert space of $n$ qubits requires exponential complexity as the dimension of the space grows exponentially in $n$~\cite{plesch2011quantum, zhang2022quantum}. However, the algorithm we are about to present demonstrates that the very general set of states of the form shown in \cref{eq:psidef} (i.e. those corresponding to discretized continuous functions) can be prepared efficiently.

In the Appendix, we present our results for the analogous, but special, case of efficiently integrable probability density functions as in the case of the Grover-Rudolph~\cite{grover2002creating} approach. However, for brevity, here we focus only on the integration-free case where the state is prepared via samples of $f$ rather than through its integration. Moreover, we note that the point-wise state preparation approach is strictly more general than integration-based approaches.

Below we present a rank-1 simulation procedure which is both a fundamental subroutine and a primary contribution of this paper. In summary, we establish how the time evolution $e^{- i t \mathcal{H} }$ under the dense rank-one matrix $\mathcal{H} \propto \ket{\psi} \bra{\psi}$ can be efficiently implemented. Simulating this time evolution is a powerful resource and can be combined with diverse tools from quantum information processing to prepare the quantum states $\ket{\psi}$
efficiently. For instance, we consider phase estimation and the Hadamard test in the appendix, but this subroutine also enables other very useful algorithms, e.g. for efficiently integrating Lipschitz continuous functions.
Nevertheless, in the main text we focus on the conceptually most straightforward variant for preparing states of the form in \cref{eq:psidef}, based on adiabatic evolution.

We first formally define the filling ratio of the function being prepared as it is intrinsic to the asymptotic complexity of our state preparation procedures. We then informally state our main result for the adiabatic procedure, and then we describe the algorithm.

As stated in \cref{def:peakedness}, the filling ratio $\mathcal{F}$ of function $f$ is defined as $\fillr :=  \lVert f \rVert_1 / \lVert f \rVert_{max}$ and has a value $0 < \fillr \le 1$. Intuitively, it can be understood as the integral of the absolute function (i.e. $|f(x)|$) divided by the area of the box bounding the absolute function (i.e. the box with width $b - a$ and height $\max_{x\in[a, b]}|f(x)|$). The filling ratio is clearly pictured for some common functions in Figure~\ref{fig:fig1a}, as the ratio under the curves, to the total area of the bounding box. Finally, we note that we formally define the notion of efficient computability in \cref{subsec:efficient_time_evo} as well as in \cref{def:eff_comp}.

\begin{theorem}[informal version of \cref{theorem:total_query_complexity}]\label{theorem:total_query_complexity_informal}
	For an efficiently computable continuous function $f$, our approach prepares the $n$-qubit state $\ket{\psi} = \frac{1}{\mathcal{N}}\sum_{j=0}^{N-1}f(x_j)\ket{j}$ 
	with $N=2^n$ up to error $\epsilon$ with query complexity $O\left(
		\mathcal{F}^p / \epsilon^2
		\right)$,
	where $\mathcal{F}$ is the filling ratio and $p=-4$. The error $\epsilon$ is the deviation from an ideal unitary in terms of a spectral distance.
	The algorithm uses $O(n + d)$ ancillary qubits (where $d$ is the number of digits used in the discretization of $f$), and has a probability of failure bounded by $O\left(\epsilon^2\right)$.
\end{theorem}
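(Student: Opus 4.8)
The plan is to prove \cref{theorem:total_query_complexity} in three stages: (i) establish that the rank-1 time evolution $e^{-it\H}$ with $\H \propto \ket{\psi}\bra{\psi}$ can be implemented efficiently from oracle access to $f$; (ii) drive an adiabatic sweep whose instantaneous ground state tracks the unique nonzero eigenvector of $\H$, namely $\ket{\psi}$; and (iii) compose the simulation error, the adiabatic error, and the final measurement success probability into a single spectral-distance bound. The organizing observation is arithmetic: with the natural entrywise normalization one has $\H = P\,\ket{\psi}\bra{\psi}$, where $P = \norm^2/(N\lVert f\rVert_{max}^2) = \langle |f|^2\rangle/\lVert f\rVert_{max}^2$ is exactly the probability with which the amplitudes $f(x_j)/\lVert f\rVert_{max}$ can be loaded onto a flag qubit. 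Writing $u_j = |f(x_j)|/\lVert f\rVert_{max} \in [0,1]$, this same $P = \langle u_j^2\rangle$ is sandwiched by the filling ratio through $\fillr^2 \le P \le \fillr$ (the lower bound by Jensen, the upper bound since $u_j^2 \le u_j$), which is precisely the hook that converts every factor of $1/P$ into a power of $\fillr^{-1}$ and ultimately produces the exponent $p=-4$.

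First I would nail down the simulation subroutine of \cref{subsec:efficient_time_evo}. The plan is to write the dense matrix with entries $f(x_k)f^*(x_l)$ as a block encoding (equivalently, a decomposition into efficiently addressable 1-sparse Hermitian pieces), each accessed in $O(1)$ queries to the arithmetic oracle, and to exploit the rank-1 structure so that no exponential number of terms is incurred. The deliverable is a bound of the form ``$e^{-it\H}$ is realized to spectral-norm error $\delta$ in $O(g(t,\delta))$ oracle queries,'' where the block-encoding subnormalization scales as $\alpha \sim 1/P$ (loading the amplitudes succeeds only with probability $P$), so the evolution time enters as $\alpha t$, and where $g$ carries the $\delta^{-1}$ error dependence characteristic of the low-rank/1-sparse technique --- exactly the ``error dependence of our simulation subroutine'' flagged as improvable in the main text.

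Next I would set up the adiabatic interpolation $H(s) = (1-s)\,H_0 - s\,\H$ for $s\in[0,1]$, with $H_0$ chosen to have an easily prepared ground state overlapping $\ket{\psi}$. The crucial estimate is a lower bound on the minimum spectral gap $\gamma$ along the path. Since the only nonzero eigenvalue of $\H$ is $P$, separated from an entirely degenerate zero subspace, I expect $\gamma \gtrsim P$, and feeding this into the adiabatic theorem yields a required total evolution time $T = O\!\left(1/(P\,\epsilon)\right)$ (a power of $1/P$ divided by $\epsilon$). Multiplying the per-time simulation cost by $T$ then gives a simulation cost scaling as $\alpha\,T \sim 1/P^2$, and bounding $1/P^2 \le \fillr^{-4}$ via $P \ge \fillr^2$ produces the claimed $\fillr^{-4}$.

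Finally I would collect everything: the product of the block-encoding factor $1/P$, the adiabatic time $\sim 1/(P\epsilon)$, and the subroutine's $\delta^{-1}$ error cost, with $\delta$ tied to $\epsilon$, should assemble to $O(1/(P^2\epsilon^2)) \le O(\fillr^{-4}/\epsilon^2)$ queries; the $O(n+d)$ ancillas split into $n$ for the register and $d$ for the oracle's digits of precision; and the $O(\epsilon^2)$ failure probability arises from the residual population left outside $\ket{\psi}$ by the imperfect adiabatic passage, heralded by a single flag measurement. I expect the main obstacle to be the gap lower bound $\gamma \gtrsim P$ together with the careful \emph{additive} bookkeeping of the three independent error sources: the simulation error must be held to a fixed fraction of the adiabatic error budget on every Trotter-like segment, and making the exponents emerge exactly as $p=-4$ and $\epsilon^{-2}$ requires tracking how the single quantity $P$ feeds \emph{simultaneously} into the gap, the evolution time, and the normalization scale of the simulated Hamiltonian, while the two-sided sandwich $\fillr^2 \le P \le \fillr$ does the final conversion into the filling ratio.
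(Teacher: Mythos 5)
Your high-level architecture (rank-1 simulation, adiabatic sweep, error composition) matches the paper's, and your sandwich $\fillr^2 \le P \le \fillr$ is essentially the paper's Lemma~\ref{lemma:amatr_pointwise} bound $\lVert A \rVert_{max}/\lVert A/N\rVert_2 \le \fillr^{-2}$. But there is a genuine gap at the step you yourself flag as the main obstacle: the spectral gap along the path. You write the interpolation as $H(s) = (1-s)H_0 - s\H$, the conventional convex combination of \emph{Hamiltonians}, and then assert $\gamma \gtrsim P$ ``since the only nonzero eigenvalue of $\H$ is $P$.'' That reasoning only describes the endpoint $s=1$. For a convex combination of two rank-1 projectors, the minimum gap along the path is controlled by the overlap $|\ip{v_0}{v_1}|$ between the initial and final ground states (at $s=1/2$ the relevant splitting is essentially this overlap), which is a different quantity from $P$ and can be small. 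The paper avoids this entirely by interpolating at the level of the \emph{state}: $\ket{v_s} = (1-s)\ket{v_0} + s\ket{v_1}$ and $\H(s) \propto \op{v_s}{v_s}$, together with a phase choice on $f_0$ guaranteeing $\mathrm{Re}\ip{v_0}{v_1} \ge 0$; then $g(s) = \lVert v_s\rVert^2 \ge (1-s)^2 + s^2 \ge 1/2$ after rescaling $\mathcal{N}(1)=\sqrt{N}$ (Lemma~\ref{lemma:normalisation}). The gap is a \emph{constant}, not $\Theta(P)$, and $T = O(1)$ independent of both $\epsilon$ and $\fillr$. Without this construction (or a proof that your interpolation has gap $\gtrsim P$), your argument does not close.

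A second problem is the arithmetic that lands you on $p=-4$. With gap $\gamma \gtrsim P$ the standard criterion $\tau \gg \lVert \frac{d}{ds}\H(s)\rVert_2/g^2(s)$ gives $T \sim 1/P^2$, not your $1/(P\epsilon)$; chaining your own factors honestly would then give $1/P^3$ or worse, overshooting $\fillr^{-4}$. In the paper the exponent $-4$ has a different origin: after rescaling, the per-step low-rank simulation error is $\tfrac{5}{2}\Delta t^2 \lVert A\rVert_{max}^2$ with $\lVert A\rVert_{max} \le \fillr^{-2}$, and the accumulated Hamiltonian deviation feeds into van Dam's lemma $\lnorm{U(T)-U''(T)} \le \sqrt{2T(\delta_0+\delta_1)}$ to give $\epsilon \in O(\sqrt{\fillr^{-4}/r})$, hence $r \in O(\fillr^{-4}/\epsilon^2)$. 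Likewise the $O(\epsilon^2)$ failure probability is not the residual adiabatic leakage you describe but the probability of not obtaining the $\ket{+^n}$ ancilla outcome across all $r$ low-rank steps, bounded by $T^2\lVert A\rVert_{max}^2/r$. Your proposal identifies the right ingredients but the two quantitative pillars --- the constant gap and the source of the $\fillr^{-4}$ --- are not established by the route you sketch.
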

It is worth additionally noting that the success probability only enters our bound to simplify our proofs -- in practice our approach effectively succeeds with probability $1$. This is discussed comprehensively in the Appendix. 

To enable the adiabatic state preparation procedure, we define a parameterized family of continuous functions $f_s := (1-s)f_0 + sf_1$, where $f_0\propto 1$ (i.e. corresponds to an easy-to-prepare state) and $f_1$ corresponds to the state we wish to prepare (i.e. $f_1 := f$) with $0\le s \le 1$. We then define the parameterized quantum state, $\ket{\psi_s} \propto \sum_{j=0}^{N-1}f_s(x_j)\ket{j}$. The procedure begins by creating the state $\ket{\psi_0}$, which by construction of $f_0$ is simply $\ket{+}^{\otimes n}$. We then slowly adiabatically morph this starting state to the final state $\ket{\psi_1}$ by evolving according to the rank-1 projector $\H(s) \propto \op{\psi_s}{\psi_s}$. Here, one must pick the total evolution time $T$ so as to satisfy the adiabatic theorem and ensure that the state remains in the ground state of the morphing Hamiltonian~\cite{van2001powerful}. As we prove in the \cref{sec:direct_state_prep}, $T$ is constant bounded (and can therefore be selected utilizing that bound).
Of course, we can't implement a continuous time evolution in practice, and so we discretize the temporal grid uniformly into $r$ time-steps. In the $j^{th}$ time-step, we evolve according to the time-independent Hamiltonian $\H(\frac{j}{r})$, yielding the overall evolution:
\begin{equation}\label{eqn:adiabatic_discretization_informal}
	e^{-i \tfrac{T}{r}\H(\frac{r}{r})}e^{-i \tfrac{T}{r}\H(\frac{r-1}{r})}\cdots e^{-i \tfrac{T}{r}\H(\frac{1}{r})}.
\end{equation} 
As shown in \cref{sec:direct_state_prep} this discretization has constant bounded error, which can be made arbitrarily small. Naturally, it is important to explain how each evolution $e^{-i \frac{T}{r}\H(\frac{j}{r})}$ is implemented, and we do so in \cref{subsec:efficient_time_evo}.

However, we first explain the conceptual difference between ``conventional'' adiabatic quantum computation, and the approach we have just described. In the standard formulation of adiabatic quantum computation, one defines the interpolated Hamiltonian $\H(s) = (1-s) \mathcal{H}_0 + s \mathcal{H}_1$, where $\H_0$ has an easy to prepare ground-state, and $\H_1$ encodes the problem of interest~\cite{farhi2000quantum}. However, in this definition the energy gap often vanishes (or becomes exponentially small), resulting in the total evolution time $T$ becoming intractable. By defining our interpolated Hamiltonian as the rank-1 projector $\mathcal{H}(s) \propto \ket{\psi_s} \bra{\psi_s}$, we circumvent this limitation. Indeed, in the appendix we prove that the spectral gap $g(s)$ is generally lower bounded by a constant, e.g., $g(s) \geq 1/2$. This ensures that our total evolution time required is independent of the problem size $n$.

\subsection{Efficient Implementation of the Time Evolution}\label{subsec:efficient_time_evo}

Before describing our time evolution procedure, we outline our oracle query model. Our procedure allows the preparation of a quantum state following an arbitrary continuous distribution, so long as the function is \textit{efficiently computable}. Precisely, by efficiently computable, we mean that given an input $x$ represented with $n$ bits of precision, there exists a \textit{classical} (and thus a quantum algorithm) returning $f(x)$ with $d$ bits of precision with complexity scaling as $\text{poly}(n, d)$. Equivalently, an efficiently computable function is any function with the oracle implementation $O_f$ efficiently performing the mapping $O_f\ket{x}\ket{0} = \ket{x}\ket{f(x)}$. See \cref{def:eff_comp}. Without a loss of generality, both the input and output of $f$ are given in a binary encoding. Intuitively, if there is an efficient classical program (e.g. in Python) evaluating $f(x)$,
then in principle, the function is efficiently computable. Indeed optimised, efficient
quantum-arithmetic procedures are already available~\cite{haner2018optimizing}. Classically, to evaluate the function $f$ with $N$ distinct inputs would require $O(N)$ queries to the function -- in the quantum setting, only $O(1)$ queries are required. This exponential advantage is intuitively suggestive of the potential power of the applications utilizing this procedure.

Furthermore, the time complexity of our approach is ultimately determined by the circuit depth. Given in most practical cases the function values can be computed via arithmetic operations, we expect that the oracle implementation requires less than a quadratic circuit depth (in terms of $n$ and the number of digits $d$ used in the discretization of $f$)~\cite{haner2018optimizing}.

\begin{figure}[tb]
	\begin{centering}
		\includegraphics[width=0.4\textwidth]{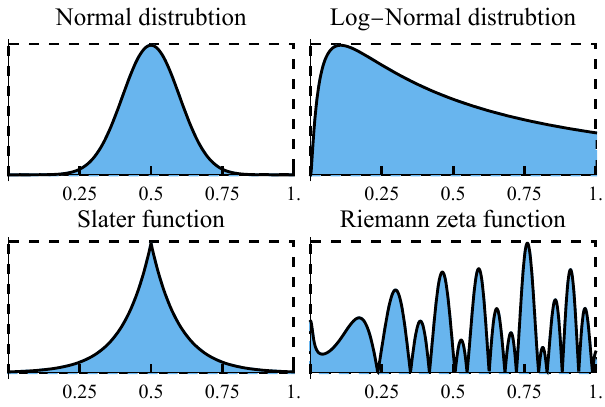}
		\caption{
			Applications of efficiently loading functions into quantum 
			registers include
			probability distributions (normal and log-normal distributions shown)
			Slater-type functions in  quantum chemistry \cite{mcardle2020quantum}
			as well as the Riemann zeta
			function~\cite{borwein2000efficient}.
			Our state-preparation error bounds depend on the filling ratio $\fillr$ as the absolute area under a function (light blue area)
			relative to its absolute maximum value (area in the dashed rectangle).		
			\label{fig:fig1a}
		}
	\end{centering}
\end{figure}

Given access to this oracle, we utilise the approach of Rebentrost \textit{et al.}~\cite{rebentrost2018quantum} to efficiently simulate quantum dynamics under the rank-$1$ matrix $A(s)$ whose entries are given by $[A(s)]_{kl}:= f_s(x_k) f^*_s(x_l)$. Note that the function values $f_s := (1-s)f_0 +s f_1$ can be computed by an oracle, but our definition also permits storing $f_0(x_j)$ and $f_1(x_j)$ in QRAM and computing linear combinations thereof using only addition and multiplication. Depending on the QRAM model, this could potentially make the cost of the oracle query $O(1)$, and thus the algorithm's circuit depth would be the same as its query complexity.

We proceed by encoding the matrix $A(s) \in \mathbb{C}^{N\times N}$ into a quadratically
larger one-sparse matrix $S_A \in \mathbb{C}^{N^2\times N^2}$ --  whereby a one-sparse matrix contains only a single non-zero entry in each row/column. This property allows us
to utilise powerful one-sparse simulation techniques~\cite{childs2004quantum,aharonov2003adiabatic,childs2003exponential,berry2007efficient} thereby \emph{exactly}
simulating the time evolution $e^{- i t S_A}$ which acts on the $n$-qubit main
register as well as on an $n$-qubit ancillary register -- and the simulation requires a further $d$-qubit, and two $1$-qubit ancilla registers. After measuring and discarding the ancilla register we implement the mapping
\begin{equation*}
	U''(\Delta t,s) |\psi_0\rangle   = e^{- i \Delta t \mathcal{H}(s) }  |\psi_0\rangle +  |\mathcal{E}\rangle
\end{equation*}
under the effective rank-1 Hamiltonian $\mathcal{H}(s) := A(s)/N$ up to a bounded error term $\lVert \mathcal{E} \rVert \in O(\Delta t^2)$. By setting the time step sufficiently small $\Delta t \ll 1$ we can simulate the
evolution $U'(\Delta t,s)  = e^{- i \Delta t \mathcal{H}(s) }$ up to an
arbitrarily small error. This allows us to implement the piecewise constant 
adiabatic evolution for overall time $T$ in $r$ iterations (timesteps) as per \cref{eqn:adiabatic_discretization_informal}.
This has the added benefit of also decreasing the adiabatic discretization error as we increase $r$.
In the Appendix, we prove that applying this unitary evolution to the initial state $|+\rangle^{\otimes n}$ indeed maps onto our desired state $\ket{\psi}$ with a bounded error that only depends on properties of the encoded function.

\subsection{Limitations}\label{subsec:error_analysis_and_limitations}
Our approach incurs two types of algorithmic error. First, we approximate an idealised continuous adiabatic evolution through a finite series of $r$ time-independent evolutions.
Second, our Hamiltonian simulation also incurs algorithmic error, as previously mentioned. However, both kind of errors can be suppressed arbitrarily by increasing the number of iterations $r$ and thus decreasing the time each time-independent unitary is evolved for (i.e. by increasing the resolution in the temporal discretization). 

While our approach is very general and is also applicable to non-continuous general functions (mappings),
its main practical limitation is that its complexity depends on
the filling ratio $\fillr$ of the function. In practice, we expect this filling ratio to be a \textit{modest constant}. See \cref{table:fillratios} for filling ratios of common probability distributions and other functions. Nevertheless, we can easily construct artificial, pathological instances where $\fillr$ can be arbitrarily small. 
As a result, while not essential, improving the algorithm's asymptotic dependence on $\mathcal{F}$ is a promising direction for future research.

\newcommand{\scform}[2]{#1{\times}10^{#2}}
\begin{table}[tb]
	\begin{tabular}{ @{\hspace{2mm}} c @{\hspace{3mm}}  c c @{\hspace{8mm}} c @{\hspace{3mm}} c c @{\hspace{2mm}}}
		\hline \\[-4mm]		
		\multicolumn{3}{l}{Normal distribution}& \multicolumn{3}{l}{Log-normal distribution}\\[-1mm]
		$\mu$ & $\sigma$	& filling ratio $\fillr$ 
		&	 $\mu$ & $\sigma$ & filling ratio $\fillr$\\[1mm] 
		\hline \\[-4mm]
		$1/2$ & $0.1$& $\scform{2.5}{-1}$
		& $0$ & $0.5$& $\scform{5.5}{-1}$ \\
		$1/2$ & $0.05$& $\scform{1.3}{-1}$
		& $0$ & $1.0$ & $\scform{7.6}{-1}$ \\
		$1/2$ & $0.01$& $\scform{2.5}{-2}$
		& $0$ & $1.5$ & $\scform{6.1}{-3}$
		\\[3mm] 
		\hline \\[-5mm]
		\multicolumn{3}{l}{Slater function}& \multicolumn{3}{l}{Riemann zeta function}\\[-1mm]
		-- & $\alpha$	& filling ratio $\fillr$ 
		&	 -- & $\alpha$ & filling ratio $\fillr$\\[1mm] 
		\hline \\[-4mm]
		& $5$ &  $\scform{3.7}{-1}$
		& & $10$ & $\scform{6.1}{-1}$
		\\
		& $10$ & $\scform{2.0}{-1}$
		& & $20$ & $\scform{5.0}{-1}$
		\\
		& $20$ & $\scform{1.0}{-1}$
		& & $100$ & $\scform{3.1}{-1}$\\[1mm]
		\hline
	\end{tabular}
	\caption{\label{table:fillratios}
		Filling ratios (as illustrated in \cref{fig:fig1a}) for common probability distributions
		as a function of their parameters.
		All data was computed for a domain $0 \leq x \leq 1$
		and the Slater function is $e^{-\alpha |x - 0.5|}$.
		The filling ratio of the Riemann zeta function along the critical line $\zeta(1/2+ i \alpha x)$
		is nearly independent from the scaling parameter $\alpha$ given its heavily oscillatory nature. 
	} 
\end{table}

\begin{figure*}[tb]
	\begin{centering}
		\includegraphics[width=\textwidth]{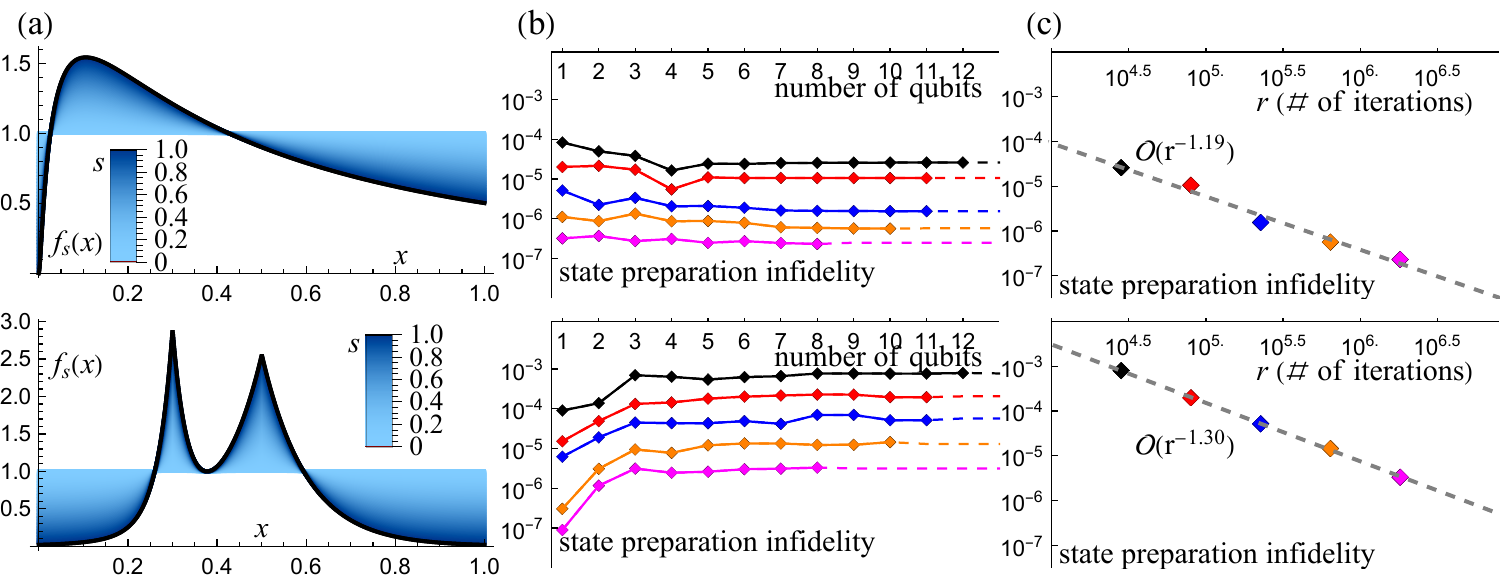}
		\caption{
			(a) Preparing a log-normal distribution (above) as relevant in financial systems~\cite{crow1987lognormal, stamatopoulos2020option, chakrabarti2021threshold} and (below) a Slater-type function  used in quantum chemistry.
			We efficiently simulate
			$e^{-i \Delta t \mathcal{H}(s)}$ where $\mathcal{H}(s)$ encodes the continuous function $f_s$
			as its ground state and we adiabatically evolve $f_s$ ($s$ is colour coded)
			from the constant function $f_0=1$ 
			to the desired function $f_1$ (black solid line).
			(b) State preparation error $1-\Phi$
			as the infidelity $\Phi = |\langle \psi | \tilde{\psi} \rangle|^2$
			at increasing numbers $n$ of qubits explicitly taking into
			account all sources of error by simulating dynamics of the $2n$-qubit
			\emph{exact} $1$-sparse simulation.
			Increasing $n$ (horizontal direction) the infidelity
			approaches a constant in exponential order -- solid diamonds are simulated while
			dashed lines are extrapolated given the high classical computational cost of simulation.  function.
			Increasing the number of iterations $r$ (varying colours in vertical direction) allows us to better approximate the ideal
			adiabatic evolution and decreases the error.
			(c) The theoretically
			leading error  $\mathcal{O}(1 /\sqrt{r})$ of 
			adiabatic evolution is dominated in the practically relevant region by the error of the low-rank simulation technique $\mathcal{O}(1/r)$ 
			due to a larger absolute factor as confirmed by the empirical fits (dashed grey line).
			\label{fig:fig2}
		}
	\end{centering}
\end{figure*}

For an example of a pathological instance that helps build intuition as to the limitation of $\fillr$, consider the unstructured search problem where we are given a function $F^{(n)}: \{0,1\}^n \mapsto \{0,1\}$ (where only one input $x=m$ corresponds to a non-zero output).
Preparing the function $F^{(n)}$ in an $n$-qubit quantum register would reveal the non-zero value at $m$ upon measuring the state
and thus would solve the unstructured search problem with high probability~\cite{grover1996fast, nielsen2002quantum}. While this is not a discretized (continuous) function, it is directly analogous to a box function of width $\approx 2^{-n}$ depending on the qubit count $n$.
This results in a generalised filling ratio $\tilde{\mathcal{F}} \approx 2^{-n/2}$ that decreases exponentially as we detail in \cref{sec:generalisation} and thus our approach would require $\mathcal{O}(2^{2n})$ calls to the oracle $f(x)$ and is thus less efficient than a classical direct search. Note that such a pathological problem instance was only constructed because the generalised filling ratio $\tilde{\mathcal{F}}$ is not an asymptotic constant (while the special-case filling ratio $\fillr$ for continuous functions is an asymptotic limit in $n$ with guaranteed convergence to a constant).

Finally, the impact of the filling ratio may in practice be circumvented. In one scenario, suppose the function being prepared has the majority of its density in an interval $[a', b'] \subset [a, b]$ that is relatively small compared to the domain.
Then, we can simply prepare the function in $[a', b']$ (at some smaller qubit count) with a proportionally larger filling ratio. 
We can then add further qubits in the $\ket{0}$ state 
and shift the resulting function via quantum addition.
If the function being prepared is non-zero on an exponentially small portion of the overall state space, and the location of the non-zero states are known, then this technique can yield an exponential improvement in the filling ratio. 
As a result, functions that are very narrow (i.e. very peaked), but have all of their density within a single small known 
region, including periodic narrow peaks (e.g. functions approaching a Dirac comb),
can have their effective filling ratios increased significantly.
We also note that $d$-sparse simulation techniques may also be useful for such very peaked functions if we can specify an oracle that efficiently computes the locations of non-zero function values~\cite{berry2015hamiltonian}.

Another technique one can employ to mitigate the constant cost of the filling ratio is to utilize a non-uniform grid. In such a grid, more samples of the function are taken in regions where the function is particularly narrow, thus giving a larger effective filling ratio. We leave such non-uniform grids as a topic for future exploration.

\section{Numerical Demonstration}

We now demonstrate our approach in the preparation of two single-variate continuous functions which are practically relevant. First, we consider the probability density function of a log-normal distribution in \cref{fig:fig2} (a, above) which is relevant in modelling a wide range of natural phenomena as well as financial systems~\cite{crow1987lognormal, stamatopoulos2020option, chakrabarti2021threshold}.
Second, we consider a Slater-type wave-function as a linear combination of primitive functions $\beta e^{-\alpha |x - x_0|}$ with constants $\alpha,\beta, x_0\in \mathbb{R}$. Gaussian and Fourier approximations to atomic orbitals are primarily used in quantum chemistry applications but despite their paramount practical utility they cannot accurately capture the crucial cusps at positions of atomic nuclei \cite{mcardle2020quantum}. Indeed our approach allows the efficient preparation of \emph{exact} Slater-type orbitals thereby increasing the accuracy of grid-based simulations~\cite{mcardle2020quantum, chan2022grid}.

In \cref{fig:fig2} (a) we initially prepare a uniform superposition $|+\rangle^{\otimes n}$ in a register of $n$ qubits and then adiabatically morph this into our desired function. In the figure, the color gradient represents the smooth interpolation of $f_s$ throughout the adiabatic trajectory, with the lightest shade representing $f_0$ and the black line representing $f_1$. Taking a slice in the figure where only a certain color is kept (corresponding to a time $s$), would yield a plot of the function $f_s$.

In \cref{fig:fig2} (b) we explicitly simulate the dynamics under $e^{- i T/r S_A}$ that acts on overall $2n$ qubits thereby taking into account all sources of error (including error in binary encoding of the function values $f(x)$). Here, each line corresponds to a series of simulations evaluated with a fixed time $T$ and a fixed number of iterations $r$ as we increase the number of qubits $n$. Our simulations nicely confirm our bounds in the following two distinct asymptotic limits.

First, as we increase the number of qubits $n$ the state-preparation error (solid lines) rapidly, in exponential order, approaches a constant value as expected from our asymptotic bounds -- the asymptotic constant error (dashed lines are extrapolations) is then determined by the filling ratio $\fillr$ and by $T$ and $r$.
Second, the series of lines from black to magenta in \cref{fig:fig2} (b) represent simulations with an increasing $r$ and confirm that as we increase the number of iterations $r$ the error decreases polynomially (as both the precision of the simulation step increases, and the discretization error of the adiabatic evolution decreases).

It is important to note that our error bounds in \cref{theorem:total_query_complexity_informal} are very pessimistic as we have used an unrealistic, worst-case scenario argument in our proofs. In particular, while our bounds scale as $O(\sqrt{1/r})$, in practically relevant cases the error is expected to be dominated by the algorithmic error of the low-rank simulation approach -- as it depends polynomially on the filling ratio. A simple argument shows that neglecting the adiabatic evolution error allows us to obtain the tighter error bound $\mathcal{O}(1/r)$. This is nicely confirmed in \cref{fig:fig2} (c) whereby we plot the asymptotic constant errors from \cref{fig:fig2} (b) as a function of the number of iterations $r$ and obtain an empirical scaling from approximately $\mathcal{O}(r^{-1.19})$ to $\mathcal{O}(r^{-1.3})$. We also note that the \emph{absolute} error of our procedure is below our error bound
by \textit{many} orders of magnitude, suggesting reasonably low absolute factors in practice.

\section{Further applications}

The techniques we have developed in this paper allow for a number of promising further applications. Some of these applications are related to the state preparation procedure, and some are consequences of the techniques we have developed. Of course, the adiabatic state preparation procedure that we present immediately enables applications that rely on continuous input states, such as those we have outlined in \cref{sec:background}.

However, the state preparation procedure can actually be an application in its own-right. For instance, an important problem is to draw samples following a particular given continuous probability density function. Classically, techniques to do so, such as inverse transform sampling, generally require integrating the probability density function in question, and thus have complexity that scales with the resolution of the grid discretization~\cite{olver2013fast}. 
Our state-preparation approach immediately enables such sampling, simply by preparing the square-root of the desired density function and then measuring the resulting state. Our approach has overall logarithmic complexity in doing so (and is thus an instance of exponential quantum advantage), and moreover produces \textit{truly random samples} from the density function, rather than just pseudo-random samples.  

Furthermore, given that our approach allows us to prepare a quantum state whose amplitudes are proportional to the samples $f(x_j)$, it follows
that the measurement outcome of the bitstring $j \in \{0,1\}^n$ has a probability proportional to $|f(x_j)|^2$.
As a result, we can directly use our approach to efficiently draw importance samples from efficiently computable functions, and if the function has one dominant maximum then importance sampling can even find this global maximum with a high probability. 

An alternative proposal for using our techniques for global optimization follows. Given a function $g:\mathbb{R}\mapsto \mathbb{C}$, preparing $f(x) = g(x)^k$ for some constant $k>1$ allows us to find the \textit{global maximum} of $g(x)$  with probability that improves exponentially in $k$ through direct sampling.
However, the filling ratio of $f(x)$ also decreases exponentially in $k$, and so the complexity of preparing the state negates any benefit of using $k>1$. On the other hand, if the algorithm's asymptotic dependence of $\fillr$ can be improved, then such an approach could be an extremely promising global optimization procedure.

Additionally, as discussed in the Appendix, the simulation technique we have developed enables the efficient integration of Lipschitz continuous probability density functions through the use of phase estimation. In particular, we can prepare the state $| \psi \rangle$ with our adiabatic approach and apply time evolution under $\H(s)$. By implementing phase estimation we can estimate the only non-zero eigenvalue of $\H(s)$ with near-perfect success rate and thus obtain an estimate of the $L^2$ norm of the encoded function.

\section{Discussion and Conclusion}
In this work we introduced a family of techniques that allow the \emph{efficient} loading of any efficiently computable function onto a quantum register such that amplitudes of the state vector $\ket{\psi}$ are proportional to samples of the function. Moreover, the state preparation technique is actually more general, and also applies to certain discontinuous functions.

Motivating the power of our techniques, the number of states in an $n$-qubit Hilbert space grows exponentially as $2^n$, so a quantum computer allows us to efficiently ``compute'' an exponential number of samples from our function with one oracle query. This property enables, amongst other things, efficient state preparation which is essential to a large number of key applications of quantum computers. For example, applications in fields ranging from the simulation of quantum physics and chemistry~\cite{chan2022grid} to financial forecasting~\cite{stamatopoulos2020option} will all benefit from our techniques.

A central idea in this paper is that we can embed the desired quantum state $\ket{\psi}$ into a rank-1 Hamiltonian $\mathcal{H} \propto \ket{\psi} \bra{\psi}$ whose ground state is exactly $\ket{\psi}$, and with all other orthogonal states having an eigenvalue of $0$. We then utilized powerful low-rank~\cite{rebentrost2018quantum} and 1-sparse~\cite{childs2004quantum,aharonov2003adiabatic,childs2003exponential,berry2007efficient}  Hamiltonian simulation techniques from the literature, allowing us to efficiently simulate $\mathcal{H}$. Among other variants, we presented an adiabatic-evolution approach that enables quasi-deterministic preparation of the ground state of the rank-1 matrix $\mathcal{H}$. Central to the argument of this procedure's efficiency is that the spectral gap of such a rank-1 matrix is the same as its spectral norm, leading to a rare case where an adiabatic algorithm has a provably bounded spectral gap (which is essential to proving the efficiency of the adiabatic procedure). We rigorously proved that the query complexity of our adiabatic
approach is a constant independent of the number of qubits. That is, we need only make a constant number of calls to the oracle computing the function amplitudes, regardless of the number of qubits the state is being prepared on. Moreover, we also prove that the state-preparation error and failure probability can be suppressed arbitrarily by increasing the temporal resolution (i.e. the number of steps in the adiabatic evolution $r$). It is worth noting that even though we perform measurements to reset the state of the ancilla register, our approach becomes reversible for a sufficiently high state-preparation precision given that the measurement outcomes are then
near-deterministic as per \cref{theorem:total_query_complexity_informal}. Moreover, we stress that the non-deterministic component of the algorithm is essentially an analytical convenience, and that in practice one can accept any outcome of the ancilla measurement. 

We would additionally like to emphasize that the presented ideas are ripe for future modification, optimization, and generalization. For example, in practice one can use our adiabatic approach to efficiently and quickly prepare a crude approximation of $\ket{\psi}$ (via low precision $\epsilon$) and then either refine it via our phase estimation or destructive interference procedures (with high probability of success). Moreover, further optimizing the algorithm's asymptotic dependence on both $\epsilon$ and $\fillr$ is a promising area of future exploration, with potentially significant ramifications. Indeed we expect with minor modifications one may be able to reduce the $O(\Delta t^2)$ error of the low-rank simulation approach -- which is currently a bottleneck.
As such, we expect our work will spark further developments, and can potentially lead to a rich set of quantum algorithms for manipulating and dealing with continuous functions by utilizing quantum parallelism.

Our work almost immediately allows for the preparation of continuous multivariate functions, albeit with the main limitation that the filling ratio may decrease asymptotically in the dimension of the function being prepared, as we discuss in \cref{sec:multivar}.

The presented algorithms (and related ones in the literature) assume execution on fault-tolerant quantum computers. However, the quantum devices of the near-future are error prone~\cite{preskill2018quantum} and there is limited work on preparing quantum states for NISQ devices. One possible approach utilizes quantum generative adversarial networks (qGAN) to tune the parameters in variational circuits to load a given distribution~\cite{zoufal2019quantum}. Some work on using qGANs to produce multivariate quantum states has also been conducted~\cite{zhu2021generative, agliardi2022optimal}. Alternatively, Rattew \textit{et al.} demonstrate how normal distributions may be efficiently produced in a manner resistant to most hardware errors~\cite{rattew2021efficient}. The techniques presented in that work can potentially be directly applied to the work presented in this paper to discard state preparation attempts in which hardware errors likely occurred. Furthermore, combination with powerful error mitigation strategies may also be possible, should the states produced be used to compute the expectation values of observables~\cite{koczor2021exponential}. However, to be useful with the algorithms presented in this work the infidelity of (uncorrected) hardware operations would likely need to be orders of magnitude lower than current state-of-the-art approaches (e.g. refs~\cite{pino2020demonstration, pelofske2022quantum}).

\section*{Acknowledgments}
The authors would like to thank Simon Benjamin, Marco Pistoia, Dylan Herman, Yue Sun, and the entire FLARE group at JPMorgan Chase Bank, N.A. for numerous insightful and helpful conversations.
The authors would like to give special thanks to Yue Sun and Dylan Herman for carefully checking the proofs.
This work was supported in part by JPMorgan Chase Bank, N.A. through the Future Lab for Applied Research and Engineering (FLARE) Ph.D. Fellowship program.
B.K. thanks the University of Oxford for
a Glasstone Research Fellowship and Lady Margaret Hall, Oxford for a Research Fellowship.
The authors would like to acknowledge the use of the University of Oxford Advanced
Research Computing (ARC) facility in carrying out this work.

\section*{Disclaimer}
This paper was prepared for information purposes by the teams of researchers from the various institutions identified above, including the Future Lab for Applied Research and Engineering (FLARE) group of JPMorgan Chase Bank, N.A..  This paper is not a product of the Research Department of JPMorgan Chase \& Co. or its affiliates.  Neither JPMorgan Chase \& Co. nor any of its affiliates make any explicit or implied representation or warranty and none of them accept any liability in connection with this paper, including, but not limited to, the completeness, accuracy, reliability of information contained herein and the potential legal, compliance, tax or accounting effects thereof.  This document is not intended as investment research or investment advice, or a recommendation, offer or solicitation for the purchase or sale of any security, financial instrument, financial product or service, or to be used in any way for evaluating the merits of participating in any transaction.

\bibliography{bibliography}

\clearpage
\onecolumngrid
\appendix

\begin{center}
     \Huge \textbf{Appendix}
\end{center}
\normalsize 

The structure of this appendix is as follows. We first introduce basic definitions and notation in \cref{sec:prelim}. In \cref{section:hamiltonian_simulation_techniques} we review relevant Hamiltonian simulation techniques of crucial importance which form the basis of
our state-preparation procedure. These include 1-sparse in \cref{sec:1sparse} and low-rank in \cref{subsection:lowrank_hamiltonian_simulation} simulation techniques as well as adiabatic simulation in \cref{section:adiabatic_background}.

We detail our adiabatic state-preparation approach in \cref{sec:direct_state_prep}; First we state a compact form of our main result in \cref{theorem:total_query_complexity} and then describe details of the procedure in \cref{sec:direct_prep_procedure}.
We then systematically derive our upper bounds, first for an arbitrary finite qubit count in \cref{sec:non-asymptotic}
and then we prove existence of constant asymptotic bounds in \cref{sec:asymptotic} for continuous functions. We then outline generalisations
to arbitrary functions (mappings) in \cref{sec:generalisation}.

In the next three sections we detail further applications beyond our main result: we propose a `self-verifying' state-preparation approach 
based on a phase estimation protocol in \cref{section:qpe_state_prep} as well as a more compact variant of it based on a Hadamard test in \cref{section:prep_via_destructive_interference}. We also discuss that these two variants can be used to efficiently
integrate functions while we point out the phase-estimation approach has a superior convergence rate. We finally discuss generalisation
to continuous multi-variate functions. The last section \cref{sec:further_technical} contains further technical details.

\section{Preliminaries and definitions \label{sec:prelim}}

\subsection{Continuous and efficiently computable functions}

While our approach naturally applies to arbitrary functions (mappings) as we detail in \cref{sec:generalisation},
we mostly focus on the pivotal case of continuous functions given these contain most practically important 
cases and they naturally admit convenient asymptotic properties as we show later.

In this work we refer to a continuous function $f: [a,b] \mapsto \mathbb{C}$ on the  bounded interval
$a \leq x \leq b$ as $f(x)$ whose general definition is $f\in \mathcal{C}$  as $\lim_{c \rightarrow x} f(c) = f(x)$.
Recall that due to the boundedness theorem all such functions are bounded as $|f(x)|\leq \lVert f \rVert_{max}$.
While we keep our results as general as possible and consider arbitrary continuous functions (unless stated otherwise), 
we note that so-called Lipschitz continuous functions cover nearly all instances one can encounter in practice,
and we detail in \cref{sec:further_asymptotic} that these Lipschitz continuous functions
additionally satisfy powerful exponential scaling properties.

In the rest of this work we will refer to continuous functions and their discretisations intergchangebly;
in a digital (quantum) computer we aim to represent both the argument $x$ and the function value $f(x)$ in a
binary encoding. While we explictly discuss our discretisations below, let us now define a subset of 
general continuous functions which additionally satisfy efficient computability constraints whith respect to some suitable
binary encoding.
\begin{definition}[efficiently computable functions]\label{def:eff_comp}
Given the above families of continuous functions, we consider $f$ to be efficiently computable if there exists an efficient quantum algorithm $O_f$ performing the mapping,
\begin{align}
    O_f\ket{x}\ket{0} = \ket{x}\ket{f(x)},
\end{align}
for any computational basis state $\ket{x}$
given a suitable binary encoding of the real and complex numbers x  and f(x), respectively. Such an oracle may be efficiently constructed for any function where an efficient classical circuit allows the evaluation of $f$ for any possible input in the domain. Intuitively, if there is a classical algorithm allowing the evaluation of $f$, then it is possible to construct $O_f$.
\end{definition}

Let us consider a simple example. Consider the quantum circuit $O_f$ implementing the function $f(x)=x$. This can be implemented by a constant-depth circuit as a sequence of CNOT gates that are controlled on the individual bits in the $\ket{x}$ register. Suppose the first register has $n$ bits and the second has $d$, if $n>d$, we can simply just control on the first $n$ bits of the first register. Furthermore, more involved functions can be computed efficiently by invoking addition and other arithmetic operations; these can be implemented as discussed in refs~\cite{vedral1996quantum, draper2000addition, draper2004logarithmic, takahashi2009quantum, bhaskar2015quantum}.

A large number of applications require that a quantum state encodes a probability distribution, i.e.,  a non-negative function.
Given a non-negative function, its integral over any domain is also non-negative and we will see later
that the ability to efficiently integrate a non-negative function greatly simplifies our general problem, i.e., through integration we effectively broaden possibly narrow peaks.
\begin{definition}[efficiently integrable functions] \label{def:integrable_fct}
We define the non-negative function $g: [a,b] \mapsto  \mathbb{R}_+ $ such that $g(x_j)$ gives the density in the grid interval between state $\ket{x_j}$ and $\ket{x_{j+1}}$,
\begin{align}
    g(x_j) = \left[ \int_{x_j}^{x_j + \step} f(x) \, \mathrm{d} x \right]^{1/2},
\end{align}
where $f:[a,b] \mapsto \mathbb{R}_+$ is a non-negative function and we assume that the resulting integral function
$g(x_j)$ is efficiently computable as in Definition~\ref{def:eff_comp}. Here, $\Delta_N \equiv \frac{b-a}{N}$.
As such, we can efficiently enforce the normalisation 
$\sum_j |g(x_j)|^2 = \lVert f \rVert_1 = 1 $ which we will assume is always the case.
\end{definition}

Let us consider the function $f(x)=x$ for example. We can analytically compute its integral as $g(x_j) =  \sqrt{\Delta_N (\Delta_N/2 +  x_j)}$ and implement the corresponding oracle by using arithmetics operations.

\begin{definition}[filling ratio]\label{def:peakedness}
Given the usual definition for $L^p$ function norms of a Riemann integrable, bounded function $f$, $\lVert f \rVert_p=\left[\int_{a}^{b} |f(x)|^p \, \mathrm{d} x\right]^{1/p}$, we define the filling ratio of $f$, 
\begin{equation*}
    \fillr :=  \lVert f \rVert_1 / \lVert f \rVert_{max}.
\end{equation*}
Note that for continuous functions over a closed interval $\lVert f \rVert_\infty =: \lVert f \rVert_{max}$ represents the absolute largest value of the function.
\end{definition}
Above the filling ratio $0 < \fillr \leq 1$ actually quantifies
the absolute area under the function relative to a rectangle of the same maximal height; the upper bound ($1$) is saturated by a maximum entropy uniform function $f(x) =\mathrm{const}$ for which
our presented techniques are most efficient. On the other hand, the filling ratio $\fillr$ can be arbitrarily small but is never zero --
the lower bound is approached by functions that consist of arbitrarily narrow peaks.

\subsection{Encoding functions into quantum states}

We are generally concerned with the preparation of quantum states following continuous functions. We now introduce some of the notation used throughout the remainder of this paper.
First, we assume that we have an $n$ qubit system, with $N=2^n$ computational basis states.
Second, as explained above we consider continuous functions $f(x)$ on some real interval $x \in [a, b]$ and we discretise the interval into $N$ equally spaced samples with $a=x_0$ is the leftmost value in the interval, and $x_{N-1}$ is the rightmost value via $b=x_{N}$.
The step size is thus given by $\step = (x_{N}-x_0)/N = (b-a)/N$.
Note that the binary integers $j \in \{0,1\}^n$ index the standard standard basis states $\{\ket{j}\}_j$ of our quantum computer
and we can equivalently refer to these in terms of the x values $\{\ket{x_j}\}_j$
given the relation 
$x_j = x_0 + j \step $
admits the inverse mapping 
$
    j = (x_j - x_0)\step^{-1}
$.

Suppose we have an $n$ qubit quantum system, prepared in the state $\ket{\psi}$.
We define two kinds of encodings as
\begin{equation}\label{eq:encoding}
 \textbf{pointwise:}   \quad \ket{\psi} = (\mathcal{N}_N)^{-1}  \sum_{j\in \{0,1\}^n}    f(x_j) \ket{j},
 \quad \quad
 \textbf{integral:} \quad   \ket{\psi} = \sum_{j\in \{0,1\}^n}    g(x_j)  \ket{j},
\end{equation}
where the integral encoding is identical to that in the Grover-Rudolph approach detailed in \cref{sec:further_technical}.
Here we assume that $f(x)$ is an efficiently computable function from Definition~\ref{def:eff_comp} and $\mathcal{N}_N :=  \sum_{j\in \{0,1\}^n}  |f(x_j)|^2$ is a normalisation constant which we cannot necessarily compute efficiently.
Here $g(x_j)$ are piecewise integrals of an efficiently integrable function from Definition~\ref{def:integrable_fct} and, as such, we can efficiently enforce normalisation.

While the two encodings may have very different behaviours for finite $N$ we
show in Property~\ref{prop:encodig_equiv} that for large $N$ asyomptotically the integral encoding becomes identical to the corresponding pointwise encoding of a non-negative function.

	\begin{remark}
		While in the present work we restrict ourselves to a standard, uniform grid representation as per \cref{eq:encoding},
		it is straightforward to implement non-uniform grids that are required for, e.g., computing quadrature integrals.
		One would then proceed by efficiently computing $f(q(x_j))$ where $q(x_j)$ efficiently maps the uniform grid
		to a non-uniform one and thus the composite function is efficiently computable.
	\end{remark}

Note that later we will also make use of a binary encoding of the function values $f(x)$, which we we discuss in detail in Sec.~\ref{sec:1sparse}.

\section{Hamiltonian Simulation Techniques}\label{section:hamiltonian_simulation_techniques}
\subsection{1-Sparse Hamiltonian Simulation \label{sec:1sparse}}

Let us now recollect the approach described in refs.~\cite{childs2004quantum,aharonov2003adiabatic,childs2003exponential,berry2007efficient}
for efficiently simulating the time evolution under 1-sparse Hamiltonians.

\begin{lemma}\label{lemma:one_sparse_simulation}
Suppose we are given a one-sparse Hermitian Hamiltonian $H$ acting on $n$ qubits. $H$ is specified by the oracles $O_f$ and $O_H$, defined such that 
\begin{align}
    O_f\ket{x}_n\ket{0}_n &= \ket{x}_n\ket{f(x)}_n,\\
    O_H\ket{x}_n\ket{y}_n\ket{0}_d &= \ket{x}_n\ket{y}_n\ket{H_{x,y}}_d,
\end{align}
where the function $f(x)$ gives the column of the only non-zero matrix element of $A$ in row $x$, $H_{x,y}$ corresponds to the element in the $x^{th}$ row and $y^{th}$ column of $H$, and we use $d$-bits of resolution to encode the matrix elements of $H$. Then, an initial state $\ket{\psi_0}$ can be simulated by $e^{-i H t}$ for an arbitrary $t$ with $0$ error using $O(1)$ oracle calls (and circuit depth equal to the circuit depth of the oracle implementations).
\end{lemma}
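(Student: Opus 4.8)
The lemma states that a 1-sparse Hermitian Hamiltonian $H$ on $n$ qubits, specified by oracles $O_f$ (giving the column index of the nonzero entry in each row) and $O_H$ (giving the matrix element value), can be simulated exactly (zero error) via $e^{-iHt}$ for arbitrary $t$ using $O(1)$ oracle calls.

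Let me think about the structure of a 1-sparse Hermitian Hamiltonian. A 1-sparse matrix has at most one nonzero entry per row/column. For Hermiticity, if $H_{x,y} \neq 0$ then $H_{y,x} = H_{x,y}^*$. So the function $f(x)$ giving the column of the nonzero entry in row $x$ must be an involution: $f(f(x)) = x$. This means the computational basis states pair up. Either $x$ is a fixed point ($f(x) = x$, diagonal entry, which must be real for Hermiticity), or $x$ and $y = f(x)$ form a 2-cycle with $x \neq y$.

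**Key decomposition:** The 1-sparse structure with the involution property means $H$ decomposes as a direct sum over these orbits. For a fixed point $x$: $H$ acts on $\text{span}\{|x\rangle\}$ as multiplication by the real number $H_{x,x}$. The time evolution is just $e^{-iH_{x,x}t}$, a phase. For a 2-cycle $\{x, y\}$: $H$ restricted to $\text{span}\{|x\rangle, |y\rangle\}$ is the $2\times 2$ Hermitian matrix
$$\begin{pmatrix} 0 & H_{x,y} \\ H_{x,y}^* & 0 \end{pmatrix}$$
(assuming the diagonal is zero on the 2-cycle; if there were nonzero diagonal it would violate 1-sparseness since there'd be two nonzero entries in row $x$). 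Wait — actually a 2-cycle with nonzero off-diagonal forces the diagonal entries to be zero by 1-sparseness. So each 2-cycle block is a $2\times 2$ matrix with zero diagonal and off-diagonal $H_{x,y} = re^{i\theta}$.

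**The simulation for a 2×2 block:** Writing $H_{x,y} = r e^{i\theta}$, the block is $r(\cos\theta \, X + \sin\theta \, (-Y))$... let me be careful. The matrix $\begin{pmatrix} 0 & re^{i\theta} \\ re^{-i\theta} & 0\end{pmatrix} = r(\cos\theta\, X + \sin\theta\, Y)$ using $X = \begin{pmatrix} 0&1\\1&0\end{pmatrix}$, $Y = \begin{pmatrix} 0 & -i\\ i & 0\end{pmatrix}$. Since $(\cos\theta X + \sin\theta Y)^2 = I$, we get $e^{-iHt} = \cos(rt)I - i\sin(rt)(\cos\theta X + \sin\theta Y)$ exactly.

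**The algorithm (coherent, across all orbits simultaneously):** The whole point is to do this in superposition over all $|x\rangle$ without knowing which orbit each $x$ lives in. The procedure:

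1. Query $O_f$ to compute $y = f(x)$ into an ancilla register, giving $|x\rangle|f(x)\rangle$.

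2. Query $O_H$ to compute the matrix element $H_{x,y}$ into a $d$-qubit register.

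3. Apply a controlled rotation that, conditioned on the value $H_{x,y}$, implements the appropriate $2\times 2$ rotation on the two-dimensional subspace $\text{span}\{|x\rangle, |y\rangle\}$. The standard trick is to coherently reorder the pair $(x, f(x))$ into a canonical form (e.g., put min and max into definite positions using a comparator) so that the "which state of the pair am I" information is isolated into a single qubit, on which one applies the $2\times 2$ rotation controlled by the matrix element value.

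4. Uncompute the ancillas by reversing $O_H$ and $O_f$.

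**Steps I'd carry out:**

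First, establish the involution property of $f$ from Hermiticity and 1-sparseness, and derive the orbit decomposition into fixed points and 2-cycles with vanishing diagonal on 2-cycles.

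Second, solve each block exactly: phase for fixed points, the closed-form $\cos(rt)I - i\sin(rt)(\ldots)$ for 2-cycles. This establishes that an *exact* simulation is possible in principle.

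Third — the real work — show how to implement this coherently and uniformly. The key gadget is to coherently identify, for each basis state $|x\rangle$, its partner $|f(x)\rangle$, determine a canonical ordering within the pair (so we can consistently assign "role 0" and "role 1"), and then apply the controlled $2\times 2$ unitary gated on the oracle-supplied value of $H_{x,y}$. One extra qubit flags whether $x$ is a fixed point (handle the diagonal phase) versus part of a 2-cycle.

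**Main obstacle:** The genuine difficulty is the coherent, in-superposition bookkeeping: implementing a *single* unitary that simultaneously applies the correct $2\times 2$ rotation to every 2-cycle subspace $\{|x\rangle,|f(x)\rangle\}$ at once, without ever measuring or branching on the orbit structure, and doing so reversibly so that all ancillas are cleanly uncomputed. The subtlety is that the rotation acts on a two-dimensional subspace spanned by two *different* computational basis states $|x\rangle$ and $|f(x)\rangle$, not on a tensor factor, so one cannot just apply a single-qubit gate. The standard resolution is to use a comparator to sort $(x,f(x))$ and route the amplitude so that the pair-distinguishing information lives on one ancilla qubit; the rotation angle and axis are read coherently from the $O_H$ register. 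Verifying that this gadget is unitary, acts correctly and identically on both members of each pair, and uncomputes perfectly is the crux — and this is exactly the construction carried out in refs.~\cite{childs2004quantum,aharonov2003adiabatic,childs2003exponential,berry2007efficient}, which I would cite and adapt rather than rebuild from scratch. Once that gadget is in place, exactness (zero Trotter-type error) and the $O(1)$ query count follow immediately, since everything is a single application of the block-diagonal closed-form evolution for any fixed $t$.
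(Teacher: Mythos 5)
Your proposal follows essentially the same route as the paper's proof: establish the involution property $f(f(x))=x$, decompose $H$ into $1\times 1$ and $2\times 2$ invariant subspaces, canonicalize each pair with a comparator and controlled swap, apply a rotation controlled on the $d$-qubit matrix-element register (a phase for fixed points, an $X$-type rotation for 2-cycles), and uncompute the ancillas. The only difference is that you sketch the general complex Hermitian case while the paper writes out the gadget explicitly for real symmetric $H$ and defers the general case to the cited references, so the two arguments coincide in substance.
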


\begin{proof}
As we include this proof simply to help the reader build intuition, we only consider the case where the matrix elements of $H$ are real, and defer the reader to the existing literature to see how this technique is adapted for the general case of a Hermitian $H$~~\cite{childs2004quantum,aharonov2003adiabatic,childs2003exponential,berry2007efficient}(noting that the overall asymptotic complexity remains the same)
\footnote{
	\begin{minipage}{0.4\textwidth}
	Nathan Weibe has an excellent video lecture covering 1-sparse Hamiltonian simulation in \href{https://youtu.be/tllz6y7WUUs}{[link]}.
	\end{minipage}
}. 
First, observe that for a one-sparse $H$ satisfying $H=H^{T}$, $H\ket{x} = \ket{f(x)}$ as $H$ maps $\ket{x}$ to the only non-zero row, which must be the same as the non-zero column in row $x$ since $H$ is symmetric. Moreover, it must also be the case that $f(f(x)) = x$, otherwise $H$ would not be one-sparse. As a result, $H$ can be thought of as an adjacency matrix describing a graph with $2^n$ vertices, such that the graph consists of a set of connected components with each component containing either one or two vertices. Therefore, $H$ acts on $1\times 1$ and $2\times 2$ invariant subspaces. Consequently, simulation of $e^{-i H t}$ is equivalent to simply performing simulations in each of these $1\times 1$ and $2\times 2$ subspaces. Thus, our simulation circuit needs the logic to identify if a given basis vector is in a $1\times 1$ or a $2\times 2$ block, and then needs to perform the corresponding simulation. 

First, we write our initial quantum state in the general form $\ket{\psi_0}=\sum_{x = 0}^{2^n -1}\alpha_x\ket{x}$. We then initialize our quantum state coupled to an $n$-qubit ancillary register, a single qubit ancillary register, a $d$-qubit ancillary register, and a final $1$-qubit ancillary register. Note that a more efficient implementation of this procedure exists, but we use the following for expository purposes. Our joint quantum state is thus given by,
\begin{align}
    \ket{\psi_0}_n\ket{0}_n\ket{0}_1\ket{0}_d\ket{0}_1 = \sum_{x = 0}^{2^n -1}\alpha_x\ket{x}_n\ket{0}_n\ket{0}_1\ket{0}_d\ket{0}_1. 
\end{align}
We begin the simulation procedure by applying the $O_f$ oracle to the first and second registers,
\begin{align}
    \ket{\psi_0}_n\ket{0}_n\ket{0}_1\ket{0}_d \ket{0}_1
    \xrightarrow{O_f}
    \sum_{x = 0}^{2^n -1}\alpha_x\ket{x}_n\ket{f(x)}_n\ket{0}_1\ket{0}_d\ket{0}_1.
\end{align}
We now define the comparison operator, $\text{CMP}$, which acts on the first three registers with the mapping $\text{CMP}\ket{x}_n\ket{y}_n\ket{0}_1 = \ket{x}_n\ket{y}_n\ket{x > y}_1$ (where $x>y$ is $1$ if true, and $0$ if false). Applying $\text{CMP}$ yields,
\begin{align}
    \xrightarrow{\text{CMP}}
    \sum_{x = 0}^{2^n -1}
    \begin{cases}
    \alpha_x\ket{x}_n\ket{f(x)}_n\ket{0}_1\ket{0}_d\ket{0}_1 & \text{ if } x \le f(x)\\
    \alpha_x\ket{x}_n\ket{f(x)}_n\ket{1}_1\ket{0}_d\ket{0}_1 & \text{ if } x > f(x).
    \end{cases}
\end{align}
We now define the controlled-swap operation, $\text{CSWAP}$, acting on the first three registers such that $\text{CSWAP}\ket{x}_n\ket{y}_n\ket{0}_1=\ket{x}_n\ket{y}_n\ket{0}_1$ and $\text{CSWAP}\ket{x}_n\ket{y}_n\ket{1}_1=\ket{y}_n\ket{x}_n\ket{1}_1$. Applying $\text{CSWAP}$ yields,
\begin{align}
    \xrightarrow{\text{CSWAP}}
    \sum_{x = 0}^{2^n -1}
    \begin{cases}
    \alpha_x\ket{x}_n\ket{f(x)}_n\ket{0}_1\ket{0}_d\ket{0}_1 & \text{ if } x \le f(x)\\
    \alpha_x\ket{f(x)}_n\ket{x}_n\ket{1}_1\ket{0}_d\ket{0}_1 & \text{ if } x > f(x).
    \end{cases}
\end{align}
We now apply $O_H$, acting on the first, second, and fourth registers.
\begin{align}
    \xrightarrow{O_H}
    \sum_{x = 0}^{2^n -1}
    \begin{cases}
    \alpha_x\ket{x}_n\ket{f(x)}_n\ket{0}_1\ket{H_{x, f(x)}}_d\ket{0}_1 & \text{ if } x \le f(x)\\
    \alpha_x\ket{f(x)}_n\ket{x}_n\ket{1}_1\ket{H_{f(x), x}}_d\ket{0}_1 & \text{ if } x > f(x).
    \end{cases}
\end{align}

\begin{figure*}[tb]\label{fig:crx_gate_definition}
    \[
        \Qcircuit @C=1em @R=0.7em {
            \lstick{\ket{a}_d}    & \qw & \multigate{1}{\text{CRx}(t)} & \qw & \rstick{\ket{a}_d} \qw \\
            \lstick{\ket{\psi}_1} & \qw & \ghost{\text{CRx}(t)}        & \qw & \rstick{e^{-i a t X}\ket{\psi}_1} \qw 
        }
    \]
    \caption{Quantum circuit representation of the $\text{CRx}(t)$ gate.}
\end{figure*}
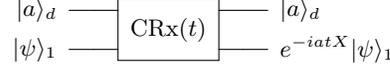
We must now take a brief interlude, and define two operations: $\text{CRx}(t)$ and $\text{CRz}(t)$ such that the gates act on a $d$-qubit control register and a $1$-qubit target register, as shown in Figure~\ref{fig:crx_gate_definition}. In particular, we define the mapping when the input state is some basis vector $\ket{a}_d$, and clearly this definition holds in general by linearity. 
First, allowing the binary expansion $a = a_1a_2...a_d$ we note,
\begin{align}
    e^{-i a t X}
    =
    e^{-i (a_1 2^{d-1} + a_2 2^{d-2} + ... + a_d 2^{d-d}) t X}
    =
    e^{-i a_1 2^{d-1} t X}e^{-i a_2 2^{d-2} t X} ... e^{-i a_d 2^{0} t X}.
\end{align}
Thus, we can clearly implement this gate by a sequence of $d$ single-qubit controlled $\text{Rx}$ gates, with $\text{Rx}(t) = \cos(\frac{t}{2})I-i\sin(\frac{t}{2})X$, where the $j^{th}$ gate is controlled on bit $a_j$ and the time is set to $t 2^{d - j}$. Thus, the cost of this operation scales as $O(d)$ in terms of circuit depth and required parameterized gates. The same argument applies to creating the $\text{CRz}(t)$ gate (although it can trivially be created from the $\text{CRx}(t)$ gate by applying a Hadamard before and after the gate on the target qubit).

We now must determine whether a given state is in a $1\times 1$ or $2\times 2$ invariant subspace, to do this we use the final ancillary qubit (note that this qubit is not necessary, and this operation can be done in place, however we include it for notational clarity). Define a new operator, $\text{EQ}$, which acts on the first two registers and the fifth register, with the mapping $\text{EQ}\ket{x}_n\ket{y}_n\ket{0}_1 = \ket{x}_n\ket{y}_n\ket{x = y}$, where $x=y$ is $1$ if true, and $0$ otherwise. Then,
\begin{align}
    \xrightarrow{\text{EQ}}
    \sum_{x = 0}^{2^n -1}
    \begin{cases}
    \alpha_x\ket{x}_n\ket{x}_n\ket{0}_1\ket{H_{x, f(x)}}_d\ket{1}_1 & \text{ if } x = f(x) \\ 
    \alpha_x\ket{x}_n\ket{f(x)}_n\ket{0}_1\ket{H_{x, f(x)}}_d\ket{0}_1 & \text{ if } x < f(x)\\
    \alpha_x\ket{f(x)}_n\ket{x}_n\ket{1}_1\ket{H_{f(x), x}}_d\ket{0}_1 & \text{ if } x > f(x).
    \end{cases}
\end{align}
Since the first two registers are only equal if $f(x)=x$, they are only equal if $\ket{x}$ is in a $1\times 1$ invariant subspace, and as such the correct evolution prescribes a rotation by $\text{CRz}$. If the first two registers are not equal, then we must be in a $2\times 2$ invariant subspace, and so the correct evolution follows a rotation by $\text{CRx}$.
Noting that both the $\text{CRx}$ and $\text{CRz}$ gates use the fourth register as their ``pass-through`` input, and the third register as their target input, we can then apply $\text{CRx}$ gate conditioned on the $\ket{0}_1$ state of the fifth register followed by a $\text{CRz}$ gate conditioned on the $\ket{1}_1$ state of the fifth register (we refer to these controlled-controlled rotation gates as $\text{CCRx}$ and $\text{CCRz}$). Note that this is equivalent to only performing an uncontrolled $\text{CRx}$ gate, and then applying Hadamard gates on the third register before and after the $\text{CRx}$ gate conditioned on the fifth register, however the preceding description yields a clearer analysis. We then obtain,
\begin{align}
    \xrightarrow{\text{CCRx(t) CCRz(t)}}&
    \sum_{x = 0}^{2^n -1}
    \begin{cases}
    \text{CRz}(t)\alpha_x\ket{x}_n\ket{x}_n\ket{0}_1\ket{H_{x, f(x)}}_d\ket{1}_1 & \text{ if } x = f(x) \\ 
    \text{CRx}(t)\alpha_x\ket{x}_n\ket{f(x)}_n\ket{0}_1\ket{H_{x, f(x)}}_d\ket{0}_1 & \text{ if } x < f(x)\\
    \text{CRx}(t)\alpha_x\ket{f(x)}_n\ket{x}_n\ket{1}_1\ket{H_{f(x), x}}_d\ket{0}_1 & \text{ if } x > f(x).
    \end{cases}\\
    &=
    \sum_{x = f(x)}\text{CRz}(t)\alpha_x\ket{x}_n\ket{x}_n\ket{0}_1\ket{H_{x, f(x)}}_d\ket{1}_1\\
    &\ \ +
    \sum_{x < f(x)}
    \text{CRx}(t)\left( 
    \ket{x}_n\ket{f(x)}(\alpha_x\ket{0}_1 + \alpha_{f(x)}\ket{1}_1)\ket{H_{x, f(x)}}_d\ket{0}_1
    \right)\\
    &=
    \sum_{x = f(x)}e^{-i H_{x, x} t}\alpha_x\ket{x}_n\ket{x}_n\ket{0}_1\ket{H_{x, f(x)}}_d\ket{1}_1 \\
    &\ \ +
    \sum_{x < f(x)}
    \ket{x}_n\ket{f(x)}\left(e^{-i H_{x, f(x)} Xt}(\alpha_x\ket{0}_1 + \alpha_{f(x)}\ket{1}_1)\right)\ket{H_{x, f(x)}}_d\ket{0}_1
\end{align}
where the second last equation follows from the fact that if $x < f(x)$, then $\ket{x}_n\ket{f(x)}_n$ is equal to $\ket{f(x)}_n\ket{x}_n$ when $f(x) > x$, and the notation $\sum_{x=f(x)}$ means sum over all $x$ such that $x=f(x)$, and similarly $\sum_{x<f(x)}$ means sum over all $x$ such that $x < f(x)$.
Finally, we uncompute all the ancillary operations (EQ, $O_H$, CSWAP, CMP, $O_f$), obtaining the state
\begin{align}
    \sum_{x = f(x)}e^{-i H_{x, x} t}\alpha_x\ket{x}_n\ket{0}_n\ket{0}_1\ket{0}_d\ket{0}_1 +
    \sum_{x \neq f(x)}
    \alpha_x\left(e^{-iH t} \ket{x}_n\right)\ket{0}_n\ket{0}_1\ket{0}_d\ket{0}_1\\
    =
    \sum_{x=0}^{2^n - 1}\alpha_x\left(e^{-iH t} \ket{x}_n\right)\ket{0}_n\ket{0}_1\ket{0}_d\ket{0}_1
    =
    \left(e^{-iH t} \ket{\psi_0}_n\right)\ket{0}_n\ket{0}_1\ket{0}_d\ket{0}_1.
\end{align}
where this simply follows from the observation that $e^{-i H_{x, x} t}$ correctly describes the evolution of a $1\times 1$ invariant subspace, while $e^{-i H_{x, f(x)} Xt}$ correctly describes the evolution of a $2\times 2$ invariant subspace, and then we simply apply linearity of $e^{-iH t}$.

It is worth briefly mentioning that if the matrix elements of $H$ can be exactly encoded in $d$ bits, then this procedure incurs $0$ error, and if they cannot, then the error vanishes exponentially in $d$ and so is effectively negligible. 
\end{proof}
Note that if the elements of $H$ are efficiently computable, then a quantum circuit implementing $O_H$ can clearly be implemented efficiently (i.e. with polynomial bounded complexity). Similarly, if the location of the non-zero element in any given row can also be computed efficiently, then the overall one-sparse simulation circuit can also be implemented efficiently.

\subsection{Low-rank Hamiltonian Simulation}\label{subsection:lowrank_hamiltonian_simulation}

As low-rank Hamiltonian simulation is an essential subroutine in our algorithm, we now recollect results of Rebentrost \textit{et al.} and summarize their proof~\cite{rebentrost2018quantum}.

\begin{statement}\label{statement:low_rank_simulation}
The procedure presented in~\cite{rebentrost2018quantum} allows the dynamics under an arbitrary low-rank dense matrix $A \in \mathbb{C}^{n\times n}$ to be simulated efficiently if the matrix satisfies the two conditions: $\lVert A \rVert_{max} = \Theta(1)$ and 
$\lVert A \rVert_{2} = \Theta(N)$ as we increase the dimension $N$. 
Moreover, we suppose we have oracular access to the elements of $A$.
Then, for time $\Delta t$, we can simulate the dynamics under $e^{-i \Delta t A}$ using $4$ oracle queries, with trace-norm error bounded by $2\lVert A \rVert_{max}^2 \Delta t^2$.
\end{statement}
\begin{proof}
We now summarize the results of~\cite{rebentrost2018quantum}.
First, we begin by embedding the elements of $A$ in a matrix $S_A\in \mathbb{C}^{N^2\times N^2}$ as follows,
\begin{align}
    S_A = \sum_{j=0}^{N - 1}\sum_{k=0}^{N-1} A_{jk} \op{k}{j}\otimes \op{j}{k}.
\end{align}
Clearly, $S_A$ is one-sparse. Moreover, element $x, y$ of $S_A$, $[S_A]_{xy}$ can be computed easily and is given by $A_{jk}$, where $x=jN + k$ and $y=jN + k$. Thus, $O_{S_A}$ is clearly efficiently implementable, given access to the elements of $A$. Moreover, given the structure of $S_A$ it is also straightforward to implement $O_f$, and as a result we can perform the simulation $e^{-i S_A \Delta t}$ using Lemma~\ref{lemma:one_sparse_simulation} with a total of $4$ oracle calls, and with $0$ error (assuming no discretization error in the matrix elements of $A$). We now assume that the state we wish to evolve has a density operator given by $\sigma$, and we have a separate $n$-qubit register in state $\rho = \op{\phi}{\phi}$ such that $\ket{\phi}=\frac{1}{\sqrt{N}}\sum_{j=0}^{N-1}\ket{j}$ (i.e. a uniform superposition).
The exact evolution for time $\Delta t$ under $S_A$ of state $\rho\otimes \sigma$ generates the following dynamics under the partial trace,
\begin{align} \label{eq:low_rank_partial_tr}
    \mathrm{tr}_1 [e^{-i \Delta t S_A} \rho \otimes \sigma e^{i \Delta t S_A}]
    =
    e^{-i \Delta t A/N} \sigma e^{i \Delta t A/N} + \epsilon_0.
\end{align}
The leading error in $\Delta t$ in terms of a trace distance was established in~\cite{rebentrost2018quantum}
as
\begin{equation} \label{eq:low_rank_error}
    \epsilon_0 \leq  2  [ \lVert A \rVert_{max} \,   \Delta t]^2,
\end{equation}
which depends on the absolute largest entry of the matrix $\lVert A \rVert_{max}$.
We will use this approach to simulate the dynamcis under our rank-1 matrix as
$A/N \propto | \psi \rangle \langle \psi | $.
\end{proof}

\subsection{Adiabatic Computation}\label{section:adiabatic_background}
Given an initial Hamiltonian $\H(0)=\H_0$ whose ground state is easy to prepare, and a final Hamiltonian $\H(1) = \H_1$ whose ground state we would like to prepare, we define the time-dependent Hamiltonian $\H(s)$, with $0\le s \le 1$.
If we evolve the state according to $\H(s)$ ``sufficiently slowly'', the adiabatic theorem guarantees that the state will remain in the instantaneous ground state of $\H(s)$ throughout the evolution, and so at the end of the procedure we will have obtained the ground state of $\H_1$. For an insightful discussion of adiabatic quantum computation, see the paper by van Dam \textit{et al.}~\cite{van2001powerful}, or the comprehensive review by Albash \textit{et al.}~\cite{albash2018adiabatic}. As introduced by Farhi \textit{et al.}~\cite{farhi2000quantum}, $\H(s)$ is often defined as an interpolation between $\H_0$ and $\H_1$ in the form $\H(s)=(1-s)\H_0 + s\H_1$, however, this generally need not be the case.
Precisely, we perform the evolution obtained by solving the Shr\"odinger equation,
\begin{align}
    \frac{d}{ds}\ket{\psi(s)} = -i\H(s)\ket{\psi(s)},
\end{align}
where we select $\ket{\psi(s)}$ such that $\ket{\psi(0)}$ is the ground state of $\H_0$. Then, our objective is to obtain a circuit performing the evolution under the time-dependent Hamiltonian $\H(s)$. Of note, the adiabtic evolution is considered sufficiently slow if throughout the evolution the adiabatic schedule $\tau(s)$ satisfies,
\begin{align}\label{eq:gap_def}
    \frac{
    \lVert   \frac{d}{ds}\H(s)  \rVert_2}{g^2(s)}
    \ll
    \tau(s),
\end{align}
where $g(s)$ is the spectral gap of $\H(s)$ (i.e. the difference in the two lowest energy levels), and $\lVert \frac{d}{ds}\H(s)  \rVert_2$ is the spectral norm of the derivative of our time-dependent Hamiltonian. The spectral norm of a matrix $M$ is defined as $\max_{||x||_2=1}||Mx||_2$. In  the simplest implementation it is sufficient to use a constant schedule throughout, defined as
\begin{align}
    \max_{s\in [0, 1]}\frac{\lVert   \frac{d}{ds}\H(s)   \rVert_2}{g^2_{min}}
    \ll
    \tau
\end{align}
where $g_{min}$ is the minimum spectral gap of $\H(s)$ for any time $s$. Then, for the constant schedule, the delay factor is given by $T = \int_0^1 \tau ds = \tau$ and so we have the bound $T = O\left(\max_{s\in [0, 1]}\frac{\lVert  \frac{d}{ds}\H(s)   \rVert_2}{g^2_{min}}\right)$.
As described by van Dam \textit{et al.}, when using a constant schedule, a discretized unitary transformation induced by $\H(t)$ may be written as,
\begin{align}
    U'(T) =
    \exp\left[
    -i\frac{T}{r}\H(\frac{r}{r})
    \right]
    \cdot ... \cdot
    \exp\left[
    -i\frac{T}{r}\H(\frac{1}{r})
    \right],
\end{align}
That is, we approximate the overall transformation by a sequence of $r$ time-independent evolutions, each applied for an equal amount of time $\frac{T}{r}$.
An important fact we utilize in bounding the error from such a discretization comes from Lemma 1 presented by van Dam \textit{et al.}~\cite{van2001powerful}. We summarize their lemma and describe its proof for completeness.

\begin{lemma}\label{lemma:implied_unitary_deviation_bound}
    Suppose you have two time-dependent Hamiltonians, $\H(t)$ and $\H'(t)$ such that they induce the unitaries $U(T)$ and $U'(T)$ respectively. If the spectral norm of the difference of the Hamiltonians is bounded by at most $\delta$ for all times $t$, i.e. $
    \lVert \H(t) - \H'(t)  \rVert_2 \le \delta$, then $\lVert U(T) - U'(T) \rVert_2 \le \sqrt{2T\delta}$.
\end{lemma}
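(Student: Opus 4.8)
The plan is to reduce the operator-norm bound to a statement about individual state trajectories, and then to control the growth of the trajectory error via a differential inequality. First I would fix an arbitrary normalized initial state $\ket{\phi}$ and define the two evolved states $\ket{\psi(t)} := U(t)\ket{\phi}$ and $\ket{\psi'(t)} := U'(t)\ket{\phi}$, which satisfy the Schr\"odinger equations $\frac{d}{dt}\ket{\psi(t)} = -i\H(t)\ket{\psi(t)}$ and $\frac{d}{dt}\ket{\psi'(t)} = -i\H'(t)\ket{\psi'(t)}$ with the common initial condition $\ket{\psi(0)} = \ket{\psi'(0)} = \ket{\phi}$. Writing $\ket{e(t)} := \ket{\psi(t)} - \ket{\psi'(t)}$ for the error vector, with $\ket{e(0)} = 0$, the goal becomes to bound $\lnorm{\ket{e(T)}}$ uniformly over $\ket{\phi}$, since $\lnorm{U(T) - U'(T)} = \sup_{\ket{\phi}}\lnorm{\ket{e(T)}}$.

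Next I would differentiate the squared error norm $D^2(t) := \ip{e(t)}{e(t)}$. Using the two Schr\"odinger equations one finds $\frac{d}{dt}\ket{e} = -i\H(t)\ket{e} - i(\H(t)-\H'(t))\ket{\psi'}$, so that $\frac{d}{dt}D^2 = 2\,\mathrm{Re}\,\bra{e}\tfrac{d}{dt}\ket{e}$. The contribution proportional to $\H(t)$ equals $2\,\mathrm{Re}(-i\bra{e}\H(t)\ket{e})$, and this vanishes because $\bra{e}\H(t)\ket{e}$ is real by Hermiticity of $\H$, so only the perturbation term survives.

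The crux---and the step I expect to be the main obstacle---is to bound the surviving cross term $2\,\mathrm{Re}(-i\bra{e}(\H-\H')\ket{\psi'}) = 2\,\mathrm{Im}\,\bra{e}(\H-\H')\ket{\psi'}$ by a constant \emph{independent} of $D(t)$; a naive Cauchy--Schwarz estimate yields $2\delta D(t)$ and only a Gr\"onwall-type/linear-in-$T$ bound rather than the stated square root. To avoid this I would exploit Hermiticity of the perturbation $\mathcal{E} := \H - \H'$ a second time. Splitting $\ket{e} = \ket{\psi} - \ket{\psi'}$ gives $\bra{e}\mathcal{E}\ket{\psi'} = \bra{\psi}\mathcal{E}\ket{\psi'} - \bra{\psi'}\mathcal{E}\ket{\psi'}$, and since $\bra{\psi'}\mathcal{E}\ket{\psi'}$ is real its imaginary part drops out. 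Hence $\frac{d}{dt}D^2 = 2\,\mathrm{Im}\,\bra{\psi}\mathcal{E}\ket{\psi'} \le 2\,|\bra{\psi}\mathcal{E}\ket{\psi'}| \le 2\lnorm{\mathcal{E}} \le 2\delta$, where the penultimate step uses $\lnorm{\ket{\psi}} = \lnorm{\ket{\psi'}} = 1$ (unitary evolution preserves the norm) and the last uses the hypothesis $\lnorm{\H(t)-\H'(t)} \le \delta$.

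Finally I would integrate the uniform differential inequality $\frac{d}{dt}D^2 \le 2\delta$ from $0$ to $T$ with $D(0) = 0$, obtaining $D^2(T) \le 2T\delta$, i.e. $\lnorm{\ket{e(T)}} \le \sqrt{2T\delta}$. Taking the supremum over normalized $\ket{\phi}$ then yields $\lnorm{U(T) - U'(T)} \le \sqrt{2T\delta}$, as claimed. For the piecewise-constant schedule relevant here the Schr\"odinger equations hold on each subinterval while $\ket{e(t)}$ stays continuous across the break points, so the integration goes through without modification.
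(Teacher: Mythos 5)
Your proof is correct and is essentially the paper's argument in differential form: since $D^2(t)=2-2\,\mathrm{Re}\ip{\psi'(t)}{\psi(t)}$, your bound $\frac{d}{dt}D^2\le 2\delta$ is exactly the paper's step of computing $\frac{d}{dt}\ip{\psi'(t)}{\psi(t)}=-i\bra{\psi'(t)}(\H(t)-\H'(t))\ket{\psi(t)}$ and bounding the resulting integral by $\delta T$. The Hermiticity manipulations you use to avoid a Gr\"onwall-type estimate accomplish precisely what the paper does implicitly by bounding the matrix element between the two normalized trajectories directly.
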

\begin{proof}
Define the states $\ket{\psi'(t)}$ and $\ket{\psi(t)}$, such that $\ket{\psi'(0)}=\ket{\psi(0)}$ which are respectively obtained by evolving the starting state according to $\H'(t)$ and $\H(t)$.
We then start by computing the operator distance via the (maximal) vector norm $\lVert \left(U(T) - U'(T)\right)\ket{\psi(0)} \rVert_2 = \sqrt{\bra{\psi(0)}\left(U(T)-U'(T) \right)^{\dagger}\left(U(T) - U'(T)\right)\ket{\psi(0)}}$, we have
\[
\lVert U(T) - U'(T) \rVert_2  \le \sqrt{2 - 2\mathrm{Re} \ip{\psi'(T)}{\psi(T)}}.
\]
Here we consider how the inner product of the two states change with respect to time (and note the parametrisation
is smooth in our case), obtaining
\begin{align}
    \frac{d}{dt}\ip{\psi'(t)}{\psi(t)} = -i\bra{\psi'(t)}
    \left(\H(t) - \H'(t) \right)\ket{\psi(t)}.
\end{align}
We can then integrate both sides with respect to time from $0$ to $T$ and lower bound its absolute value
as
\[
|\ip{\psi'(T)}{\psi(T)}| 
= |1 - i\int_{0}^T \bra{\psi'(t)}\left(\H(t) - \H'(t) \right)\ket{\psi(t)}\, \mathrm{d}t|
\geq 
1- 
|\int_{0}^T \bra{\psi'(t)}\left(\H(t) - \H'(t) \right)\ket{\psi(t)} \, \mathrm{d}t|
\]
Of course, by assumption the spectral norm of the difference of the two Hamiltonians is bounded by $\delta$, and thus $|\int_{0}^T \bra{\psi'(t)}\left(\H(t) - \H'(t) \right)\ket{\psi(t)}  \, \mathrm{d}t| \le \delta T$. Substituting this bound back
we obtain
\begin{align}\label{eqn:evolution_inner_product_ldeviation_lowerbound}
|\ip{\psi'(T)}{\psi(T)}| \ge 1 - \delta T,
\end{align}
and we thus immediately obtain the bound $\lVert U(T) - U'(T) \rVert_2 \le \sqrt{2\delta T}$.
\end{proof}

\begin{figure*}[tb]
	\begin{centering}
		\includegraphics[width=0.7\textwidth]{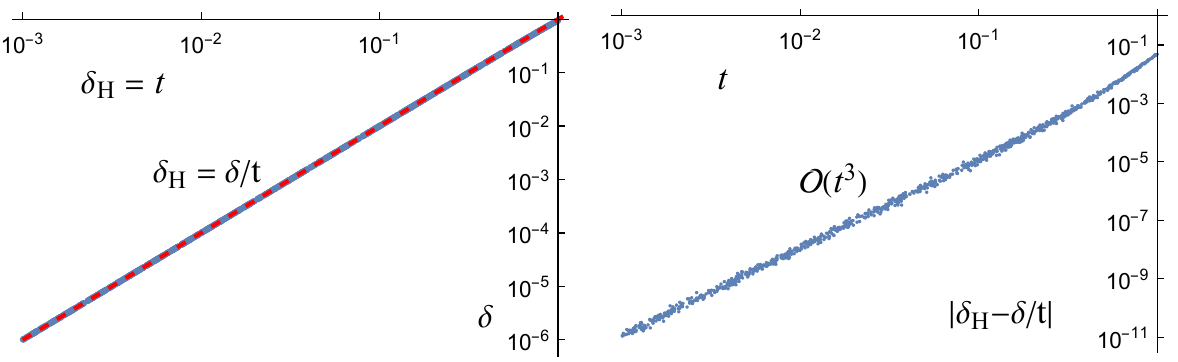}
		\caption{
		Verifying the bounds in \cref{corollary:implied_unitary_deviation_bound}.
		We randomly generate 1000 matrices (blue dots) $H'$ and $H''$ such that their distance is $t$ and we randomly select $t$ between $10^{-3}$ and $10^0$, and compute the generated unitaries $U'$ and $U''$.
		(left) The distance of two unitaries $\lVert U'- U''\rVert_2  \equiv \delta$
		is related to the  distance of the generator Hamiltonians  $\delta_H \equiv 	\lVert H'- H''\rVert_2$
via \cref{eq:norm_equivalence}
as $\delta_H   = \delta/t + O(t^2) $ (equation shown with red dashed line).
The error term in our equation is computed as
the distance $|\delta_H   - \delta/t|$
and we empirically find a better error-term scaling of $O(t^3)$.
}
	\end{centering}
\end{figure*}

\begin{corollary}\label{corollary:implied_unitary_deviation_bound}
Suppose we have the ideal unitary evolution $U':=e^{-it H'}$ for fixed time $t$ generated by the time-independent Hamiltonian
	$H'$ and we have an approximation to this unitary as $U''$
	with the error bound $\lVert U' - U'' \rVert_2 \leq \kappa t^2 $ 	for some $\kappa \geq 0$.
	The error bound on the unitaries implies that any effective Hamiltonian $H''$ that
	generates the approximate unitary $U'':=e^{-it H''}$ 
	satisfies
	$\lnorm{H' - H''} \leq \kappa t  + O(t^2)$.
\end{corollary}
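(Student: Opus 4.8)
The plan is to Taylor-expand both unitaries around $t=0$ and invert the leading linear relation between the generators and the unitaries, then clean up the quadratic correction via a self-consistency argument. Since $U'=e^{-itH'}$ and $U''=e^{-itH''}$ with $H',H''$ bounded Hermitian operators, I would first write
\begin{equation*}
U' - U'' = -it(H' - H'') - \tfrac{t^2}{2}\big((H')^2 - (H'')^2\big) + (R' - R''),
\end{equation*}
where the remainders satisfy $\lVert R' \rVert_2, \lVert R'' \rVert_2 \in O(t^3)$, with implied constants depending on $\lVert H' \rVert_2$ and $\lVert H'' \rVert_2$ via the standard tail bound $\lVert \sum_{k \geq 3} \tfrac{(-it)^k}{k!}H^k \rVert_2 \leq \sum_{k\geq 3}\tfrac{t^k}{k!}\lVert H\rVert_2^k$. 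Solving for the linear term and taking the spectral norm gives
\begin{equation*}
t\,\lVert H' - H'' \rVert_2 \leq \lVert U' - U'' \rVert_2 + \tfrac{t^2}{2}\big\lVert (H')^2 - (H'')^2 \big\rVert_2 + O(t^3).
\end{equation*}

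Next I would insert the hypothesis $\lVert U' - U'' \rVert_2 \leq \kappa t^2$. The term demanding care is the quadratic contribution $\lVert (H')^2 - (H'')^2 \rVert_2$: bounding it by a bare constant would leave an $O(t)$ term after dividing by $t$ and thereby contaminate the leading coefficient $\kappa$. The main obstacle is thus to show this term is genuinely higher order. I would resolve it with the factorisation $(H')^2 - (H'')^2 = H'(H' - H'') + (H' - H'')H''$, which yields $\lVert (H')^2 - (H'')^2 \rVert_2 \leq (\lVert H' \rVert_2 + \lVert H'' \rVert_2)\,\lVert H' - H'' \rVert_2$ --- i.e. this term is itself proportional to the quantity $\delta_H := \lVert H' - H'' \rVert_2$ I am trying to bound.

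This converts the estimate into a self-consistent (bootstrap) inequality. Writing $C := \lVert H' \rVert_2 + \lVert H'' \rVert_2 = O(1)$, I obtain
\begin{equation*}
\delta_H \leq \kappa t + \tfrac{Ct}{2}\,\delta_H + O(t^2),
\end{equation*}
hence $\delta_H\,(1 - \tfrac{Ct}{2}) \leq \kappa t + O(t^2)$. For $t$ small enough that $Ct/2 < 1$, dividing and expanding $1/(1 - Ct/2) = 1 + O(t)$ gives $\delta_H \leq (\kappa t + O(t^2))(1 + O(t)) = \kappa t + O(t^2)$, which is precisely the claimed $\lVert H' - H'' \rVert_2 \leq \kappa t + O(t^2)$. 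Equivalently one can first extract the crude bound $\delta_H \in O(t)$ from the same inequality and substitute it back to see directly that $\tfrac{t^2}{2}\lVert (H')^2 - (H'')^2\rVert_2 \in O(t^3)$; either route isolates $\kappa$ cleanly. I expect the only real subtlety to be this bootstrap step --- the rest is the routine matrix Taylor expansion --- and I would record that the argument tacitly uses $t \ll 1$ and bounded generators, both of which hold in the intended application where $t = \Delta t$ and $H', H''$ are fractions of the bounded rank-1 Hamiltonian.
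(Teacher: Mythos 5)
Your proof is correct, but it takes a genuinely different route from the paper's. The paper works with the product $(U')^\dagger U'' = e^{itH'}e^{-itH''}$, recombines it via the Zassenhaus form of the Baker--Campbell--Hausdorff formula, and expands $2-2\,\mathrm{Re}\langle \psi|(U')^\dagger U''|\psi\rangle$: Hermiticity kills the first-order term, the second-order term produces exactly $t^2\lVert (H'-H'')|\psi\rangle\rVert^2$, and the third-order term is absorbed as $O(t^3\delta_H)$, yielding the asymptotic relation $\delta_H = \delta/t + O(t\,\delta_H)+O(t^2)$ before the same multiplicative-error rearrangement you perform. You instead Taylor-expand the difference $U'-U''$ directly, isolate the linear term, and control the quadratic term via the factorisation $(H')^2-(H'')^2 = H'(H'-H'')+(H'-H'')H''$ --- which is the step that makes your argument close, since it shows the would-be $O(t)$ contamination of the leading coefficient is itself proportional to $\delta_H$ and hence removable by bootstrap. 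Both arguments rest on the same self-consistency resolution (the paper's $\delta_H(1+O(t))$ versus your $\delta_H(1-Ct/2)$) and the same tacit hypotheses of small $t$ and bounded generators, which the paper flags in the remark following the corollary with its $t=2\pi$ counterexample, and which you correctly record as well. Your version is more elementary (no BCH machinery) and delivers exactly the one-sided bound the corollary asserts; the paper's version additionally gives the two-sided asymptotic equality $\delta_H = \delta/t + O(t^2)$, which is what it verifies numerically in the accompanying figure.
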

\begin{proof}
	
We define the distance $\delta$ as $\delta := \lVert U' - U'' \rVert_2$,
where by formulation we know that $\delta \leq   \kappa t^2 $ for some $\kappa \geq 0$.
Given a unitary $U''$ and the fixed time $t$ there exists an effective Hamiltonian 
$\H''$ that generates $U'':=e^{-it H''}$. Given the two unitaries $U':=e^{-it H'}$ and $U'':=e^{-it H''}$ (satisfying the property $\lVert U' - U'' \rVert_2 = \delta$) generated by the time-independent Hamiltonians $H'$ and $H''$ respectively, we aim to bound the distance $\delta_H := \lVert H'- H'' \rVert_2$.
We can express this operator distance in terms of the (maximum of the) vector norm 
\begin{equation}\label{eq:scal_prod}
    \lVert U' - U'' |\psi \rangle  \rVert^2
    =
    \langle \psi | (U' - U'')^\dagger (U' - U'') |\psi \rangle 
    = 2 -2 \mathrm{Re}  \langle  \psi | (U')^\dagger U'' | \psi \rangle.
\end{equation}
We now use the Baker-Campbell-Hausdorff formula, in the form of the Zassenhaus formula~\cite{casas2012efficient} $e^{t (X+Y)}  =  e^{tX} e^{tY}   e^{-\tfrac{t^2}{2} [X,Y] } e^{\tfrac{t^3}{6} Q} \cdots$ (with $Q=2 [Y,[X,Y]]+ [X,[X,Y]]]$), which we can rearrange to obtain
\begin{equation*}
e^{tX} e^{tY}   =  	e^{t (X+Y)} \cdots  e^{  -\tfrac{t^3}{6} Q }   e^{\tfrac{t^2}{2} [X,Y] }.
\end{equation*}
We apply this formula by substituting 
$X := i H'$ and $Y:= -i H''$ as
\begin{equation*}
 (U')^\dagger U'' = e^{it H'} e^{-it H''}  =	   e^{ it (H'- H'')} \cdots e^{  -\tfrac{t^3}{6} Q }   e^{\tfrac{t^2}{2} [ H' , H''] }.
\end{equation*}
We now take the power series expansion of each exponential, keeping only terms of order $t^3$ and below,
\begin{align*}
	(U')^\dagger U''
	&=
	\Big(  1+ it (H'- H'') -\tfrac{t^2}{2} (H'- H'')^2 -i \tfrac{t^3}{6} (H'- H'')^3 \Big)
	\Big(  1 -\tfrac{t^3}{6} Q \Big) 
	\Big(  1+ \tfrac{t^2}{2} [ H', H''] \Big) 
	+\mathcal{O}(t^4)\\
	&=
	1 + 
	it (H'- H'') 
	+ \tfrac{t^2}{2} \Big( [ H' , H'']  - (H'- H'')^2  \Big)
	+
	t^3 \Xi
	+\mathcal{O}(t^4),
\end{align*}
where we have denoted the operator with factor $t^3$ as
\begin{equation*}
	\Xi:=
	-\tfrac{1}{6} Q 
	+ i \tfrac{1}{2} [ H' , H'']  (H'- H'')
	-i \tfrac{1}{6} (H'- H'')^3.
\end{equation*}
We now want to compute the expected value term from \cref{eq:scal_prod}
and  obtain
\begin{equation}
	\mathrm{Re}  \langle  \psi | (U')^\dagger U'' | \psi \rangle 
	=
	1
	- \tfrac{t^2}{2} \mathrm{Re}  \langle  \psi |  (H'- H'')^2 | \psi \rangle 
	+ t^3 \mathrm{Re}  \langle  \psi |  \Xi | \psi \rangle
		+\mathcal{O}(t^4), 
\end{equation}
using that the expected value of any Hermitian operator is real and thus
$\mathrm{Re}\, i \langle  \psi |  (H'- H'') | \psi \rangle = 0 $,
and similarly the expected value of any anti-Hermitian operator is imaginary and thus
$\mathrm{Re} \langle  \psi | [ H' , H'']  | \psi \rangle =0$.
Furthermore, rather than explicitly computing the value of the third-order term, we simply bound it as $|\mathrm{Re}  \langle  \psi |  \Xi | \psi \rangle| \in \mathcal{O}( \delta_H  )$.
Therefore, we obtain
\begin{equation*}
\mathrm{Re}  \langle  \psi | (U')^\dagger U'' | \psi \rangle
=
1- \frac{t^2}{2} \langle  \psi | (H'- H'')^2| \psi \rangle + O(t^3 \delta_H) + O(t^4)
=
1- \frac{t^2}{2} \lVert (H'- H'')| \psi \rangle\rVert^2 + O(t^3 \delta_H) + O(t^4).
\end{equation*}
We substitute this back and obtain our bound as
\begin{equation*}
    \lVert U' - U'' |\psi \rangle  \rVert^2
    =
    t^2 \lVert (H'- H'')| \psi \rangle\rVert^2
    + O(t^3 \delta_H) + O(t^4)
\end{equation*}
where we now divide by $t^2$ and take the square root 
\begin{equation*}
	\lVert U' - U'' |\psi \rangle  \rVert/t
	=
    \sqrt{\lVert (H'- H'')| \psi \rangle\rVert^2
    + O(t \delta_H) +O(t^2)}
\end{equation*}
which can be computed using the expansion $\sqrt{a + \epsilon } = \sqrt{a} + \mathcal{O}(\epsilon)$
as
\begin{equation*}
	\lVert U' - U'' |\psi \rangle  \rVert/t
	=
	\lVert (H'- H'')| \psi \rangle\rVert
		+ O(t \delta_H) +O(t^2).
\end{equation*}
Given that this holds for any state $|\psi \rangle$, the above equation implies the equivalence of the norms as
\begin{equation*}
	\delta_H  = \delta /t +  O(t \delta_H) + O(t^2),
\end{equation*}
where we have substituted our expressions for the norms as $\delta \equiv \lVert U'- U''\rVert_2$
and  $\delta_H \equiv 	\lVert H'- H''\rVert_2$.
Furthermore, we conclude by re-expressing the additive error as a multiplicative error $\delta_H +  O(t \delta_H) = \delta_H  (1 +  O(t))$
as
\begin{equation}\label{eq:norm_equivalence}
	\delta_H   = \Big( \delta/t + O(t^2) \Big)  \Big( 1 +  O(t) \Big)
	=
	 \delta/t + \delta + O(t^2) 
	\leq 
	\kappa t  + O(t^2) ,
\end{equation}
where in the last equation
we substituted our upper bound
on the unitary distance as
$\delta \leq   \kappa t^2 $.

\end{proof}

\begin{remark}
   \cref{corollary:implied_unitary_deviation_bound} bounds the maximum deviation of the Hamiltonians generating unitaries with bounded distance. However, this bound only holds for small $t<1$, and such a general statement is not possible for non asymptotic small $t$, as the following counterexample proves. Suppose we have $H' = \mathrm{diag}(1,\lambda_2, \dots \lambda_d )$ and $H'' = \mathrm{diag}(2,\lambda_2, \dots \lambda_d )$ which share the same eigenvalues except for one eigenvalue and thus the distance is $\lVert H' - H'' \rVert_2 = 1$. However, the induced unitaries at $t=2\pi$ are identical as $U' = \mathrm{diag}(1,e^{-i 2\pi \lambda_2}, \dots e^{-i 2\pi \lambda_3} ) =  U''$. 
\end{remark}

\begin{lemma}\label{lemma:double_deviation_induced_unitary_bound}
Suppose you have Hamiltonians $\H(t)$, $\H'(t)$ and $\H''(t)$, which induce the unitary transformations $U(T)$, $U'(T)$ and $U''(T)$ respectively, and that $\lnorm{\H(t)-\H'(t)} \le \delta_0$ and $\lnorm{\H'(t)-\H''(t)} \le \delta_1$. Then,
$\lnorm{U(T) - U''(T)} \le \sqrt{2T(\delta_0 + \delta_1)}$.
\end{lemma}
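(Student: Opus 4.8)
The plan is to reduce this directly to \lemmaref{lemma:implied_unitary_deviation_bound}, which already converts a uniform (in $t$) spectral-norm bound on a Hamiltonian difference into a $\sqrt{2T\delta}$ bound on the induced unitaries. The essential observation is that I should combine the two given deviations \emph{at the level of the Hamiltonians}, before invoking the earlier lemma, rather than at the level of the induced unitaries afterwards. Concretely, I would first apply the triangle inequality for the spectral norm pointwise in $t$:
\begin{equation*}
	\lnorm{\H(t) - \H''(t)} \leq \lnorm{\H(t) - \H'(t)} + \lnorm{\H'(t) - \H''(t)} \leq \delta_0 + \delta_1,
\end{equation*}
which holds for every time $t$ by the hypotheses. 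Thus $\H(t)$ and $\H''(t)$ are two time-dependent Hamiltonians whose spectral-norm difference is uniformly bounded by the single constant $\delta := \delta_0 + \delta_1$.

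With this in hand, I would simply apply \lemmaref{lemma:implied_unitary_deviation_bound} to the pair $\H(t)$, $\H''(t)$ with deviation $\delta = \delta_0 + \delta_1$, which immediately yields
\begin{equation*}
	\lnorm{U(T) - U''(T)} \leq \sqrt{2T(\delta_0 + \delta_1)},
\end{equation*}
establishing the claim. Note that the intermediate Hamiltonian $\H'(t)$ and its induced unitary $U'(T)$ never need to be referenced directly in the argument; $\H'(t)$ serves only as the waypoint through which the triangle inequality passes.

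The only point requiring a moment's care — and the reason the statement is worth isolating as a lemma — is that the naive route of applying \lemmaref{lemma:implied_unitary_deviation_bound} twice and then using the triangle inequality on the unitaries gives the strictly weaker bound $\lnorm{U(T) - U''(T)} \leq \sqrt{2T\delta_0} + \sqrt{2T\delta_1}$. By subadditivity of the square root, $\sqrt{a} + \sqrt{b} \geq \sqrt{a+b}$, so this would not match the desired $\sqrt{2T(\delta_0+\delta_1)}$. Hence there is no genuine obstacle here beyond ordering the two operations correctly: collapsing the two Hamiltonian bounds into one via the triangle inequality \emph{first}, and only then converting to a unitary bound, is what produces the tighter, combined $\delta_0 + \delta_1$ inside the radical.
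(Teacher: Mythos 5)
Your proof is correct and follows exactly the same route as the paper's: apply the triangle inequality to bound $\lnorm{\H(t)-\H''(t)} \le \delta_0+\delta_1$ pointwise in $t$, then invoke \lemmaref{lemma:implied_unitary_deviation_bound} once on the pair $\H(t),\H''(t)$. Your closing remark about why combining at the Hamiltonian level first yields the tighter bound is a correct and sensible addition, but the core argument is identical to the paper's.
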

\begin{proof}
First, by the triangle inequality,
\begin{align}
    \lnorm{\H(t) - \H''(t)} = \lnorm{\H(t) - \H'(t) + \H'(t) - \H''(t)} \le \lnorm{\H(t) - \H'(t)} + \lnorm{\H'(t) - \H''(t)} \leq \delta_0 + \delta_1.
\end{align}
Applying Lemma~\ref{lemma:implied_unitary_deviation_bound} immediately then gives the result $\lnorm{U(T) - U''(T)} \le \sqrt{2T(\delta_0 + \delta_1)}$.
\end{proof}

\section{General distribution loading through adiabatic evolution: direct state preparation on $n$ qubits\label{sec:direct_state_prep}}

Let us now compactly present the main result of the present section only in the asymptotic limit $N \rightarrow \infty$.

\begin{theorem}\label{theorem:total_query_complexity}
Asymptotically in $N$, for an efficiently computable continuous function $f: [a,b] \mapsto \mathbb{C}$, the direct state preparation algorithm prepares the $n$ qubit state (with $N=2^n$), $\ket{\psi} = \frac{1}{\mathcal{N}(1)}\sum_{j=0}^{N-1}f(x_j)\ket{j}$ (with $f$ replaced with $g$ in the case of the integral encoding, as defined in \cref{def:integrable_fct}), up to error $\epsilon$ with query complexity bounded as
\begin{align*}
    O\left(
        \frac{\mathcal{F}^p}{\epsilon^2}
    \right),
\end{align*}
where $\mathcal{F}$ is the filling ratio from~\cref{def:peakedness}, and $p=-2$ if using the integral function encoding, and $p=-4$ if using the point-wise function encoding. The error $\epsilon$ is the deviation from an ideal unitary in terms of a spectral distance. Moreover, the algorithm uses $O(n + d)$ ancillary qubits (where $d$ is the number of digits used in the discretization of $f$), and has a probability of failure bounded by
\begin{align*}
    O\left(
        \epsilon^2
    \right).
\end{align*}
Thus the probability of failure decreases as we increase the target precision $\epsilon$, a feature that is highly advantageous.
\end{theorem}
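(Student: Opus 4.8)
The plan is to view the algorithm as a product of $r$ time-independent evolutions (\cref{eqn:adiabatic_discretization_informal}), each realised by one low-rank simulation step that itself costs only $O(1)$ oracle calls by \cref{statement:low_rank_simulation} and \cref{lemma:one_sparse_simulation}; hence the query complexity is $O(r)$ and the whole task reduces to bounding $r$ in terms of $\fillr$ and $\epsilon$. I would first record the structural fact underlying everything: writing $f_s=(1-s)f_0+sf_1$ and $[A(s)]_{kl}=f_s(x_k)f_s^\ast(x_l)$, the generator is $\H(s)=A(s)/N=c(s)\op{\psi_s}{\psi_s}$, a scalar multiple of the instantaneous projector whose unique nonzero eigenvalue is $c(s)=N^{-1}\sum_k|f_s(x_k)|^2$. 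Because every state orthogonal to $\ket{\psi_s}$ has eigenvalue $0$, the spectral gap is exactly $g(s)=c(s)$, which is what tames the adiabatic schedule. There are three error sources to control: the adiabatic-tracking error of the ideal continuous evolution, the temporal discretisation into $r$ steps, and the low-rank simulation error per step; the quantity $\epsilon$ in the theorem is the spectral deviation of the implemented unitary from the \emph{continuous} adiabatic unitary, while the tracking error is absorbed by choosing $T$ large enough.

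The first and hardest ingredient is to bound the total time $T=O(\max_s\lnorm{\tfrac{d}{ds}\H(s)}/g_{min}^2)$ from \cref{eq:gap_def} in terms of the filling ratio. Passing to the limit $N\to\infty$ turns the sums into integrals, so $g(s)\to\lnorm{f_s}^2/(b-a)$, while $\tfrac{d}{ds}A(s)$ is rank-$2$ with bounded factor norms and hence $\lnorm{\tfrac{d}{ds}\H(s)}=O(1)$ independently of $\fillr$. The crux is the lower bound on $g_{min}=\min_s g(s)$. For the point-wise encoding Cauchy--Schwarz gives $\lnorm{f}^2\ge\lonenorm{f}^2/(b-a)=\fillr^2\lVert f\rVert_{max}^2/(b-a)$, and a short convexity estimate for the quadratic $s\mapsto\lnorm{f_s}^2$ extends this to $\min_s\lnorm{f_s}^2=\Omega(\fillr^2)$, so $g_{min}=\Omega(\fillr^2)$ and $T=O(\fillr^{-4})$. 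For the integral encoding of \cref{def:integrable_fct} the (rescaled) amplitudes are square-roots of piecewise integrals, so the relevant squared norm is $N^{-1}\sum_j|\hat g(x_j)|^2\to\lonenorm{f}=\fillr\lVert f\rVert_{max}$; the $L^1$ norm now appears with a single power, improving the gap to $\Omega(\fillr)$ and giving $T=O(\fillr^{-2})$. This is precisely what produces the two exponents $p=-4$ and $p=-2$.

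Next I would bound the two $r$-dependent deviations. Splitting $[0,T]$ into $r$ equal steps, the piecewise-constant Hamiltonian differs from the continuous one by at most $\delta_0=\max_s\lnorm{\tfrac{d}{ds}\H(s)}/r=O(1/r)$ in spectral norm, so \cref{lemma:implied_unitary_deviation_bound} gives an induced-unitary error $\sqrt{2T\delta_0}=O(\sqrt{T/r})$. Each low-rank step implements $e^{-i(T/r)\H(s)}$ with a state error $\lVert\mathcal{E}\rVert=O((T/r)^2)$ by \cref{statement:low_rank_simulation}, and these accumulate to $O(T^2/r)$, which is subdominant once $r\gtrsim T^3$. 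Here I would deliberately avoid routing the simulation error through \cref{corollary:implied_unitary_deviation_bound} and \cref{lemma:double_deviation_induced_unitary_bound}, since that converts it into an effective Hamiltonian deviation $\delta_1=O(T/r)$ and would inflate the bound to $O(T/\sqrt{r})$; the direct state-error accumulation keeps the dominant term at $O(\sqrt{T/r})$. Equating this with $\epsilon$ yields $r=O(T/\epsilon^2)$, and substituting $T=O(\fillr^{p})$ with the exponents above gives the claimed query complexity $O(\fillr^{p}/\epsilon^2)$.

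Finally, the ancilla count is pure bookkeeping: embedding $A(s)$ into the one-sparse $S_A$ on $\mathbb{C}^{N^2\times N^2}$ needs one extra $n$-qubit register, the matrix-element register costs $d$ qubits, and the comparison/flag qubits cost $O(1)$, for $O(n+d)$ in total. For the failure probability I would identify the discarded ancilla branch with the error vector $\ket{\mathcal{E}}$, whose weight is $O(\lVert\mathcal{E}\rVert^2)$; since $\epsilon$ is an $\ell_2$-type (amplitude-scale) distance while probabilities are its square, the accumulated failure probability is $O(\epsilon^2)$ and therefore shrinks as the target precision improves. I expect the gap-versus-filling-ratio estimate of the second paragraph to be the main obstacle, both because it needs the $N\to\infty$ conversion of the discrete norms into their continuous counterparts (with controlled finite-$N$ corrections) and because the minimum of $\lnorm{f_s}$ along the interpolation path must be controlled uniformly in $s$.
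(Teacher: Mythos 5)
Your proposal is correct in its conclusions and follows the same high-level architecture (query cost $O(r)$ per \cref{lemma:hamiltonian_simulation_query_complexity}, three error sources, solve for $r$), but it routes the filling-ratio dependence through a genuinely different channel than the paper. The paper rescales $f_1$ so that $\mathcal{N}(1)=\sqrt{N}$ (Remark~\ref{remark:easy_rescaling_remark}), which makes the gap a universal constant $g(s)\geq 1/2$ and $T\in O(1)$; the factor $\fillr^{p}$ then enters exclusively through $\lVert A\rVert_{max}^2$ in the per-step simulation error, converted into an effective Hamiltonian deviation $\delta_1=\tfrac{5}{2}\lVert A\rVert_{max}^2 T/r$ via \cref{corollary:implied_unitary_deviation_bound} and fed into $\sqrt{2T(\delta_0+\delta_1)}$ (\cref{lemma:joint_error_bound}, \cref{cor:total_error_deviaton}). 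You instead max-normalize, so $\lVert A\rVert_{max}=O(1)$ but $g_{min}=\Omega(\fillr^{2})$ (pointwise) and hence $T=O(\fillr^{-4})$, with $\fillr$ entering through the adiabatic discretization term $\sqrt{2T\delta_0}$. Since $\fillr$ is scale-invariant the two bookkeepings are dual and both land on $r=O(\fillr^{p}/\epsilon^{2})$; your choice to accumulate the simulation error directly as $O(T^2/r)$ rather than through the Hamiltonian-deviation corollary is in fact necessary under your normalization (the corollary route would give $r=O(\fillr^{-8}/\epsilon^2)$ there), and the paper itself notes the direct-accumulation variant in a remark.

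Two loose ends are worth tightening. First, your claim that the accumulated simulation error $O(T^2/r)=O(\fillr^{-8}/r)$ is subdominant to $O(\sqrt{T/r})$ at $r=O(\fillr^{-4}/\epsilon^{2})$ only holds when $\epsilon\lesssim\fillr^{4}$; this is fine asymptotically in $\epsilon\to 0$ but should be stated, since otherwise the required $r$ is governed by $\max(T/\epsilon^{2},\,T^{2}/\epsilon)$. Second, your failure-probability argument misidentifies the weight of the discarded ancilla branch as $O(\lVert\mathcal{E}\rVert^{2})$: per \cref{lemma:low_rank_sim_error} both the rejection probability and $\lVert\mathcal{E}\rVert$ are $O(\Delta t^{2}\lVert A\rVert_{max}^{2})$, i.e.\ the same order, not one the square of the other. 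The correct route (as in \cref{lemma:prob_failure_bound}) is to accumulate the per-step rejection probability to $O(T^{2}\lVert A\rVert_{max}^{2}/r)$ and then substitute the lower bound on $r$; this still yields $O(\epsilon^{2})$, so your conclusion stands but the justification as written does not.
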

\begin{proof}
From \cref{lemma:hamiltonian_simulation_query_complexity} we know that the algorithm has query complexity $O(r)$, and uses $O(n + d)$ ancillary qubits. 
From \cref{cor:total_error_deviaton}, we have that $\epsilon \in O (\sqrt{\mathcal{F}^p/r} )$, and so it immediately follows that to obtain an error $\epsilon$ we must have $r \in O(\mathcal{F}^p/\epsilon^2)$, yielding the query bound. Finally, from \cref{lemma:prob_failure_bound} we have that the probability of failure is bounded by $O(\mathcal{F}^p/r)$, which immediately gives the probability failure bound of $O(\epsilon^2)$ when substituting our above upper bound for $r$
as a \emph{lower bound}, i.e., we use $r$ larger than the upper bound to guarantee a worst-case error $\epsilon$.
Similar statements can be made without assuming large $N$ by directly combining the corresponding non-asymptotic lemmas
from \cref{sec:non-asymptotic}.
\end{proof}

Note that in \cref{sec:non-asymptotic} we will indeed derive explicit upper bounds for arbitrary finite qubit count $n$
and above we have only stated the main result in the limiting scenario for brevity.
Furthermore, we note that while the above  asymptotic limits exist for any continuous functions, in the specific case of Lipschitz continuous
functions (which cover nearly all scenarios in practice)
these limits are approached rapidly in exponential order in $n$. For commonly encountered non-pathological functions, the asymptotic limit is usually reached with $n<30$.

\begin{remark}\label{remark:comment_on_knowing_filling_ratio}
When implementing the approach in practice explicit knowledge of the filling ratio $\mathcal{F}$ is not actually required.
The above result merely establishes the scaling of the number of timesteps 
$r$ required to obtain an error $\epsilon$, and note that in practice a bound on $r$ is straight-forward to obtain by combining Lemma~\ref{lemma:simulation_algorithmic_error_bound} and Remarks~\ref{remark:asymptotically_constant_ratio} and~\ref{remark:easy_rescaling_remark}
as long as one can either bound or over-estimate the maximum value of $f$ -- which indeed bounds $\fillr$.
We present the bounds in terms of $\mathcal{F}$ because it makes the intuition clear as to when the constant cost of the algorithm will be large or small.
\end{remark}

\begin{remark}
    In general, while not necessary, QRAM can provide an asymptotic improvement in the circuit depth by alleviating the need for most of the quantum arithmetic involved in implementing the oracle (e.g. by storing $f_0$ and $f_1$ for all values in the grid, and then using basic addition and multiplication to compute $f_s$ as needed). 
    Moreover, in certain QRAM architectures, (namely where each quantum register in the architecture has its own controlling classical mini-cpu and batch instructions can be issued to all the classical controllers in parallel) the cost of the oracle access can be made $O(1)$, thus rendering the algorithm's circuit complexity the same as its query complexity. 
\end{remark}

\subsection{The procedure\label{sec:direct_prep_procedure}}
In the standard model of adiabatic quantum computation originally proposed by Farhi \textit{et al.} in 2000~\cite{farhi2000quantum}, one defines a time-dependent interpolated Hamiltonian $H(s)=(1-s)H_0 + s H_1$ where $H_0$ is some initial Hamiltonian with an easy to prepare ground state, and $H_1$ is a final Hamiltonian whose ground state we wish to prepare. One then evolves the initial state according to $H(s)$ to obtain the final ground state of the problem. However, a challenge with this approach is that the spectral gap can become exponentially small in general (thus necessitating exponential evolution time).

However, in some cases this problem can be circumvented by viewing the paradigm of adiabatic quantum computation more generally -- we need not limit ourselves to using a time-dependent Hamiltonian of the form $H(s)=(1-s)H_0 + s H_1$. 
In our case, we begin at $n$ qubits (i.e. use a resolution of $N=2^n$) and define a time-dependent \textit{rank-one} Hamiltonian. That is, rather than interpolating between two separate Hamiltonians, we define our Hamiltonian as proportional to a projector onto a discretized time-dependent quantum state the encodes a parametrised function. Precisely, we define
\begin{align}\label{eq:param_hamil_def}
\mathcal{H}(s) := - A(s)/N
\quad \quad
    \text{pointwise: \quad \quad  $A(s)$ encodes $f_s$ as }
    \quad \quad
    & f_s :=  (1-s) f_0
    + s f_1.
    \\
    \text{integral: \quad \quad  $A(s)$ encodes $g_s$ as }
    \quad \quad
     & g_s :=  (1-s) g_0
    + s g_1.\nonumber
\end{align}
Here $A(s)$ is a rank-one matrix with entries $[A(s)]_{kl}:= f_s(x_k) f^*_s(x_l)$ ($[A(s)]_{kl}:= N g_s(x_k) g_s(x_l)$).
Furthermore, $f_0(x)$ and  $g_0(x)$ are trivial functions, such as the uniform distribution, that we can easily prepare as our initial state
while  $f_1(x)$ and $ g_1(x)$ are our final desired functions.
This definition of $\H(s)$ has numerous beneficial properties, which enable general analytic performance guarantees, which we now derive. 
As a matter of notation, we define the normalization constant $\mathcal{N}(s)$ as
\begin{align}\label{eq:normalisation}
    \mathcal{N}(s) := \sqrt{\sum_{j=0}^{N-1}|f_s(x_j)|^2},
\end{align}
for the interpolated function $f_s$ (and similarly for the integration case $g_s$) as per \cref{eq:param_hamil_def}. $\mathcal{N}(1)$ is the main quantity we are concerned with in our proofs, and we derive all of our general bounds in terms of it (noting that as per Remark~\ref{remark:easy_rescaling_remark} it grows as $O(\sqrt{N})$).

In the case of the pointwise state preparation approach, while it is not a limitation of the presented algorithm, it is important to keep in mind that we define our problem such that we aim to represent a continuous function over a finite grid -- which is only meaningful if the grid is fine enough to not miss features of the function. While we prove asymptotic properties for an increasing resolution $N \rightarrow \infty$, the approach might fail in ill-defined cases when $N$ is too small. In particular, if the function $f_1$ consists of only narrow peaks of width $\epsilon$ then we need at least $n \in O[ \log_2(\epsilon^{-1}) ]$ qubits to resolve these peaks and for qubit counts below this threshold our discretisation may in a worst-case scenario correspond to the null-vector (i.e., the function value is $0$ at all grid points). A particular example could be the function $f(x) = \sin(2^{20} \pi x )$ which requires at least $21$ qubits of resolution in the range $0\leq x \leq 1$ using our standard grid.
It is important to stress that this is not a limitation of the algorithm being presented -- when preparing a function according to point-wise samples no algorithm can handle such ill-defined cases.
However, in practice this is unlikely to ever be a problem, because of the exponential growth of the grid in the number of qubits. For example, at $128$ qubits, $N$ is sufficiently large to resolve a function defined on a kilometer interval with grid points at the Planck scale.

Let us now summarize the algorithm, along with the main proofs. 
\begin{itemize}
	\item We begin by preparing the state $\ket{\psi_0} = \frac{1}{\mathcal{N}(0)}\sum_{j=0}^{N-1}f_0(x_j)\ket{j}$, which by definition of $f_0$ is easy to prepare.
	\item We then implement adiabatic evolution for a total time $T$ according to the time-dependent rank-one Hamiltonian $\H(s)$ as defined in Lemma~\ref{lemma:normalisation}.
	\item We prove that $T$ is constant bounded in Lemma~\ref{lemma:total_evolution_time} (to see that this is indeed a constant bound, consider Remark~\ref{remark:easy_rescaling_remark}).
	\item The adiabatic evolution is implemented by the sequence of time-independent unitary transformations given by $e^{-i \tfrac{T}{r}\H(\frac{r}{r})}e^{-i \tfrac{T}{r}\H(\frac{r-1}{r})}\cdots e^{-i \tfrac{T}{r}\H(\frac{1}{r})}$.
	\item We prove a bound on the error incurred by this discretization of the adiabatic evolution in Lemma~\ref{cor:discret_error}.
	\item In Lemmas~\ref{lemma:low_rank_sim_error} and~\ref{lemma:simulation_algorithmic_error_bound} we demonstrate how each of the time-independent terms $e^{-i \tfrac{T}{r}\H(\frac{j}{r})}$ may be implemented using the low-rank Hamiltonian simulation procedure, and bound the error and probability of failure incurred in the simulation.
	\item In \cref{lemma:prob_failure_bound}, we bound the cumulative probability of failure from implementing the low-rank Hamiltonian simulation technique $r$ times in a row (in $e^{-i \tfrac{T}{r}\H(\frac{r}{r})}e^{-i \tfrac{T}{r}\H(\frac{r-1}{r})}\cdots e^{-i \tfrac{T}{r}\H(\frac{1}{r})}$).
	\item In Lemma~\ref{lemma:joint_error_bound} we bound the total error incurred in the procedure. 
	\item While we present these bounds for an arbitrary, finite qubit count $n$ in \cref{sec:non-asymptotic} without assuming continuity of the functions, we establish asymptotic limits in \cref{sec:asymptotic} using properties of continuous functions.
	\item Finally, in \cref{sec:generalisation} we establish that our approach is efficient not only for continuous but also for arbitrary functions (mappings) that satisfy certain properties.	
	\end{itemize}

\subsection{Proofs for finite qubit count $n$ \label{sec:non-asymptotic}}

As we noted above we are primarily concerned with the normalisation factor from
\cref{eq:normalisation} when proving performance properties of the adiabatic approach.
Let us here point out that we can efficiently estimate this normalisation constant.

\begin{remark}[Efficiently computable normalisation]\label{remark:easy_rescaling_remark}
As demonstrated in Appendices~\ref{subsec:estimating_normalization_factor} and~\ref{subappendix:estimating_normalization_factor_v2}, it is possible to efficiently compute the normalization factor for an arbitrary $f_1$, and thus to re-scale $f_1$ such that $\mathcal{N}(1) = \sqrt{N}$. 
\end{remark}
The above property will ensure that our following upper bounds on the performance of adiabatic evolution are simplified and tight. In the following subsection we first prove that the adiabatic evolution needs to run only for a constant bounded time $T$ by bounding the delay factor. The main results are illustrated in \cref{fig:adiabatic_bound_plot} (left). In in the second subsection we prove bounds on the algorithmic errors incurred by our procedure. The algorithmic error of our adiabatic evolution is illustrated in \cref{fig:adiabatic_bound_plot} (right).

\subsubsection{Proofs on boundedness of the spectral gap}

\begin{figure*}[tb]
	\begin{centering}
		\includegraphics[width=0.7\textwidth]{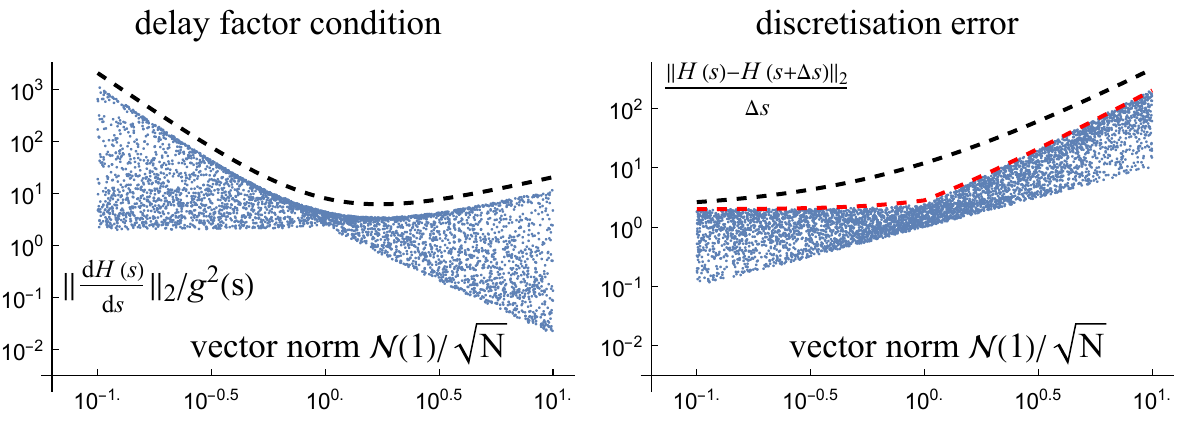}
		\caption{
	    Numerically verifying bounds that guarantee the efficient performance of adiabatic evolution from \cref{sec:non-asymptotic}.
	    (left)
	    The energy gap $g(s)$ of our rank-1 Hamiltonian $H(s)=-A(s)/N$ from \cref{eq:param_hamil_def} is determined by the
	    normalisation $\mathcal{N}(s)$ from \cref{eq:normalisation} and is bounded via \cref{lemma:normalisation}.
		The 
	    matrix norm of the derivative $\lVert \frac{d}{ds}\H(s) \rVert_2/g^2(s)$ relative to this spectral gap determines the overall evolution time $T$ required for adiabatic evolution.
	    Upper bound (dashed black line) from \cref{lemma:total_evolution_time}.
	    (right)
	    The matrix distance $ \lVert H(s) - H'(s) \rVert_{2} \propto \Delta s$ determines the error when approximating a continuous adiabatic evolution via a piecewise constant evolution.
	    We note that with similar proof techniques we could derive tighter asymptotic (in $\Delta s$, not presented in this work) upper bounds (dashed red line) but we prefer the upper bound (dashed black line) in \cref{cor:discret_error} due to its more compact expression.
	    Given \emph{all bounds} in \cref{sec:non-asymptotic} are valid to arbitrary rank-1 matrices $A$ (i.e., we need only invoke continuity of functions when computing asymptotic limits in \cref{sec:asymptotic}), we simulated $5000$ unitary Haar-random discretisation 
	    vectors $v_1$ at $n=10$ qubits for uniformly randomly selected $0 \leq s \leq 1$ and logarithmically uniformly randomly selected vector norms $\mathcal{N}$.	\label{fig:adiabatic_bound_plot}
		}
	\end{centering}
\end{figure*}

\begin{lemma}[Spectral gap bound]\label{lemma:normalisation}
We choose the phase of our trivial initial function  properly via the efficient verification protocol in \cref{sec:verification} as one of the four possibilities $f_0(x) = (\pm 1 \pm i)/\sqrt{2}$.
Given a parameterised rank-1 encoding $A(s)$ from Eq.~\eqref{eq:param_hamil_def}, the spectral gap of our Hamiltonian $\mathcal{H}(s) = - A(s)/N =: |v_s \rangle \langle v_s|$ is determined by the vector norm $g(s) = ||\ket{v_s}||_2^2$ and is bounded as
\begin{equation*}
    g(s) \ge \frac{\mathcal{N}(1)^2/N}{\mathcal{N}(1)^2/N + 1}.
\end{equation*} 
\end{lemma}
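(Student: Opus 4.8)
The plan is to exploit the rank-one structure of $A(s)$ to read off the full spectrum of $\mathcal{H}(s)$ explicitly — reducing the gap to a single vector norm — and then to control that norm as a function of $s$ by a one-variable convexity argument. First I would write $A(s) = \op{w_s}{w_s}$ with the unnormalised vector $\ket{w_s} := \sum_{j=0}^{N-1} f_s(x_j)\ket{j}$, so that $\lnorm{w_s}^2 = \mathcal{N}(s)^2$. Setting $\ket{v_s} := \ket{w_s}/\sqrt{N}$ gives $\mathcal{H}(s) = -A(s)/N = -\op{v_s}{v_s}$. Since $A(s)$ is positive semidefinite of rank one, $\mathcal{H}(s)$ has a single nonzero (negative) eigenvalue $-\mathcal{N}(s)^2/N$ with eigenvector $\ket{w_s}$, while the $N-1$ orthogonal eigenvectors all have eigenvalue $0$. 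The two lowest levels are therefore $-\mathcal{N}(s)^2/N$ and $0$, so the spectral gap is exactly $g(s) = \mathcal{N}(s)^2/N = \lnorm{v_s}^2$, which establishes the claimed identity and reduces the whole problem to lower bounding $\mathcal{N}(s)^2$ over $s\in[0,1]$.

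Next I would expand the quadratic $\mathcal{N}(s)^2 = \lnorm{(1-s)\,w_0 + s\,w_1}^2 = (1-s)^2\lnorm{w_0}^2 + 2s(1-s)\,\mathrm{Re}\ip{w_0}{w_1} + s^2\lnorm{w_1}^2$. Because $f_0$ is constant, $\lnorm{w_0}^2 = N$ (each of the four allowed values satisfies $|f_0(x)|^2 = 1$), and by definition $\lnorm{w_1}^2 = \mathcal{N}(1)^2$. The cross term is $\mathrm{Re}\ip{w_0}{w_1} = \mathrm{Re}\big(c^*\sum_j f_1(x_j)\big)$, where $c$ is the constant value of $f_0$; this is precisely where the choice $f_0(x)\in\{(\pm 1 \pm i)/\sqrt{2}\}$ enters. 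These four phases are spaced $\pi/2$ apart on the unit circle, so one of them always lies within $\pi/4$ of the phase of $\sum_j f_1(x_j)$, guaranteeing $\mathrm{Re}\ip{w_0}{w_1}\ge 0$. Discarding this nonnegative term yields the clean lower bound $\mathcal{N}(s)^2 \ge (1-s)^2 N + s^2\,\mathcal{N}(1)^2$.

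Finally I would minimise the right-hand side $h(s) = N(1-s)^2 + \mathcal{N}(1)^2 s^2$ over $[0,1]$: it is a strictly convex parabola with interior minimiser $s^\star = N/(N+\mathcal{N}(1)^2)\in(0,1)$ and minimum value $h(s^\star) = N\,\mathcal{N}(1)^2/(N+\mathcal{N}(1)^2)$. Dividing by $N$ then gives $g(s) \ge \mathcal{N}(1)^2/(N+\mathcal{N}(1)^2) = (\mathcal{N}(1)^2/N)/(\mathcal{N}(1)^2/N + 1)$, exactly the stated bound; the integral-encoding case follows identically after substituting $g_s$ for $f_s$. I expect the genuinely delicate step to be the cross-term argument: one must confirm that the verification protocol referenced in the statement really does select a phase making $\mathrm{Re}\ip{w_0}{w_1}$ nonnegative, since an uncontrolled negative cross term could in principle let the quadratic dip below the claimed bound at intermediate $s$. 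The eigenvalue bookkeeping and the parabola minimisation are routine by comparison.
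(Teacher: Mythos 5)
Your proof is correct and follows essentially the same route as the paper's: identify the gap of the rank-one Hamiltonian with $\lVert \ket{v_s}\rVert_2^2 = \mathcal{N}(s)^2/N$, expand the quadratic in $s$, drop the cross term after arguing it is nonnegative by the phase choice, and minimise the resulting parabola at $s^\star = N/(N+\mathcal{N}(1)^2)$. Your explicit justification of the cross-term sign (one of the four phases, spaced $\pi/2$ apart, lies within $\pi/4$ of the phase of $\sum_j f_1(x_j)$) is a nice touch the paper leaves implicit, deferring instead to testing all four options via its verification protocol.
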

\begin{proof}
First, observe that since $\H(s)$ is rank-one, its spectral gap is the same as its spectral norm, and thus
\begin{align}
    g(s) = \lVert H(s) \rVert_2
    =
    \left\lVert \frac{A(s)}{N} \right\rVert_2 
    =
    \left\lVert \op{v_s}{v_s} \right\rVert_2 
    =
    \left\lVert \ket{v_s} \right\rVert_2^2
    .
\end{align}
Moreover, the definition for $\H(s)$ in Equation~\ref{eq:param_hamil_def} is equivalent to the following,
\begin{align}\label{eqn:expanded_state_from_projector}
    \ket{v_s} = (1-s)\ket{v_0} + s\ket{v_1},
\end{align}
where $\ket{v_0} = \frac{1}{\sqrt{N}}\sum_{j=0}^{N-1}f_0(x_j)\ket{j}$ and $\ket{v_1} = \frac{1}{\sqrt{N}}\sum_{j=0}^{N-1}f_1(x_j)\ket{j}$ (and thus $\ket{v_0}$, $\ket{v_1}$ and $\ket{v_s}$ are not normalized). In the integration case, simply replace each $f(x)$ with the appropriate integral of $f$ (as per Equation~\ref{eq:encoding}). Then, using the Equation~\ref{eqn:expanded_state_from_projector},
\begin{align}\label{eq:spec_gap}
    g(s) = \left\lVert \ket{v_s} \right\rVert_2^2
    = (1-s)^2\ip{v_0}{v_0} + s^2 \ip{v_1}{v_1} + 2s(1-s)\text{Re}\ip{v_0}{v_1}.
\end{align}
Noting that $\ip{v_0}{v_0} = \frac{1}{N}\mathcal{N}(0)^2$, $\ip{v_1}{v_1} = \frac{1}{N}\mathcal{N}(1)^2$, and that by selecting the appropriate phase and sign in $f_0(x) = (\pm 1 \pm i)/\sqrt{2}$ we ensure that $2\text{Re}\ip{v_0}{v_1} \ge 0$, we get
\begin{align}
    g(s) \ge (1-s)^2\frac{\mathcal{N}(0)^2}{N} + s^2 \frac{\mathcal{N}(1)^2}{N}.
\end{align}
Noting that it is trivial to select an $f_0$ (or $g_0$) such that $\frac{\mathcal{N}(0)}{\sqrt{N}}=1$, the above simplifies to
\begin{align}
    g(s) \ge (1-s)^2 + s^2 \frac{\mathcal{N}(1)^2}{N}.
\end{align}
Minimizing the expression with respect to $s$, we find the minimum at $s=\frac{1}{\mathcal{N}(1)^2/N + 1}$. Substituting this value for $s$, we then obtain the bound
\begin{align}\label{eqn:direct_spectral_gap}
    g(s) \ge \frac{\mathcal{N}(1)^2}{\mathcal{N}(1)^2 + N}.
\end{align}
\end{proof}

As an aside, $2\text{Re}\ip{v_0}{v_1} \ge 0$ only holds if we have properly chosen the phase of our initial trivial function $f_0(x) = (\pm 1 \pm i)/\sqrt{2}$. If we select the phase incorrectly, it is possible that $\text{Re}\sum_{j}f_0(x_j)f_1(x_j) < 0$, which negative term can decrease the spectral gap in \cref{eq:spec_gap} below our lower bound (which assumes a non-negative $\text{Re}\sum_{j}f_0(x_j)f_1(x_j)$). 
Indeed for real functions we need only consider the sign $f_0(x) = \pm 1$ while for the integral encoding we trivially set $f_0(x) =1$. Note that we can efficiently decide the sign (and phase) of the trivial function given we need only test the four possibilities for a phase  and we can efficiently verify the result of our state-preparation procedure via the destructive interference circuit in Fig.~\ref{fig:destructive} as discussed in \cref{sec:verification}. 

\begin{corollary}
By Lemma~\ref{lemma:normalisation}, if we re-scale $f_1$ as per Remark~\ref{remark:easy_rescaling_remark}, the bound on the spectral gap of the Hamiltonian may be written as,
\begin{align*}
    g(s) \ge \frac{1}{2},
\end{align*}
which is indeed a constant. 
\end{corollary}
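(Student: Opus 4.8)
The plan is to derive this corollary as a direct consequence of the spectral gap bound established in \cref{lemma:normalisation} together with the re-scaling guaranteed by \cref{remark:easy_rescaling_remark}. Recall that \cref{lemma:normalisation} furnishes the lower bound
\begin{equation*}
    g(s) \ge \frac{\mathcal{N}(1)^2/N}{\mathcal{N}(1)^2/N + 1},
\end{equation*}
which holds for any choice of the final function $f_1$ (with the appropriate phase choice for $f_0$). The only free quantity appearing here is the ratio $\mathcal{N}(1)^2/N$, so the whole corollary reduces to fixing the value of this ratio.

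First I would invoke \cref{remark:easy_rescaling_remark}, which asserts that the normalisation factor $\mathcal{N}(1)$ can be computed efficiently and that $f_1$ can therefore be re-scaled so that $\mathcal{N}(1) = \sqrt{N}$. Since the spectral gap of a rank-1 projector is invariant under an overall positive rescaling of the vector only up to the corresponding change in $\mathcal{N}(1)$, the point is precisely that this normalisation choice makes $\mathcal{N}(1)^2 = N$ and hence $\mathcal{N}(1)^2/N = 1$. Substituting this single value into the bound above gives
\begin{equation*}
    g(s) \ge \frac{1}{1 + 1} = \frac{1}{2},
\end{equation*}
which is a constant independent of $s$, of $N$, and of the particular (efficiently computable) function being prepared.

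There is no genuine obstacle in the corollary itself; it is a one-line substitution. The real content lies entirely in the two results being assumed: the spectral gap bound of \cref{lemma:normalisation} (which in turn relied on choosing the phase of $f_0$ so that $2\,\mathrm{Re}\ip{v_0}{v_1} \ge 0$) and the efficient normalisation of \cref{remark:easy_rescaling_remark}. The only subtlety worth flagging in writing it up is to make explicit that the re-scaling is legitimate precisely because $\mathcal{N}(1)$ can be estimated efficiently, so that the constant gap $g(s) \ge 1/2$ is achievable in practice rather than merely hypothetically; this constant lower bound is exactly what decouples the required total adiabatic evolution time $T$ from the qubit count $n$ in the subsequent analysis.
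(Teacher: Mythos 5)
Your proposal is correct and matches the paper's (implicit) argument exactly: the corollary is a one-line substitution of $\mathcal{N}(1)^2/N = 1$, obtained from the re-scaling in Remark~\ref{remark:easy_rescaling_remark}, into the bound $g(s) \ge \frac{\mathcal{N}(1)^2/N}{\mathcal{N}(1)^2/N + 1}$ from Lemma~\ref{lemma:normalisation}. The paper treats this as immediate and provides no separate proof, so your write-up, including the observation that the real content lies in the two cited results, is faithful to the intended reasoning.
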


\begin{lemma}[Delay factor bound]\label{lemma:total_evolution_time}
Given the Hamiltonian $\H(s)$ as defined in Lemma~\ref{lemma:normalisation}, we have the bound
\begin{align}
    \frac{\left\lVert \frac{d}{ds}\H(s) \right\rVert_2}{g(s)^2}  \le 2\frac{\left(\mathcal{N}(1)^2/N + 1\right)^2}{\left( \mathcal{N}(1)/\sqrt{N}\right)^3}.
\end{align}
This implies that the total adiabatic evolution time required for the procedure is asymptotically given by $T\in O\left(\frac{\left(\mathcal{N}(1)^2/N + 1\right)^2}{\left(\mathcal{N}(1)/\sqrt{N}\right)^3} \right)$.
\end{lemma}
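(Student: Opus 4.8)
The plan is to bound the two factors in $\lVert \tfrac{d}{ds}\H(s)\rVert_2 / g(s)^2$ separately, using the rank-1 structure of $\H(s)$ that was already exploited in \cref{lemma:normalisation}, and then combine them. For the denominator I would reuse the spectral-gap computation: from \cref{eq:spec_gap} we have $g(s) = \lVert \ket{v_s}\rVert_2^2 = (1-s)^2 \tfrac{\mathcal{N}(0)^2}{N} + s^2\tfrac{\mathcal{N}(1)^2}{N} + 2s(1-s)\mathrm{Re}\ip{v_0}{v_1}$, and with the normalisation convention $\mathcal{N}(0)/\sqrt{N}=1$ together with the phase choice making $\mathrm{Re}\ip{v_0}{v_1}\ge 0$, the gap is lower bounded by the minimised value $g(s)\ge \mathcal{N}(1)^2/(\mathcal{N}(1)^2+N)$, i.e. $g_{\min} = \tfrac{1}{\mathcal{N}(1)^2/N+1}$ up to the factor $\mathcal{N}(1)^2/N$. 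So $1/g(s)^2$ contributes a factor of order $(\mathcal{N}(1)^2/N+1)^2 / (\mathcal{N}(1)/\sqrt N)^4$ in the worst case over $s$.

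For the numerator I would compute the derivative directly from $\H(s) = -\op{v_s}{v_s}$ with $\ket{v_s}=(1-s)\ket{v_0}+s\ket{v_1}$, giving $\tfrac{d}{ds}\H(s) = -\big(\op{w}{v_s}+\op{v_s}{w}\big)$ where $\ket{w}:=\tfrac{d}{ds}\ket{v_s}=\ket{v_1}-\ket{v_0}$. This is at most rank-2, Hermitian, and its spectral norm is controlled by $\lVert\ket{v_s}\rVert_2$ and $\lVert\ket{w}\rVert_2$ via the triangle inequality $\lVert \tfrac{d}{ds}\H(s)\rVert_2 \le 2\lVert\ket{v_s}\rVert_2\,\lVert\ket{w}\rVert_2$. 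Since $\lVert\ket{v_s}\rVert_2 = \sqrt{g(s)}$ and $\lVert\ket{w}\rVert_2 \le \lVert\ket{v_0}\rVert_2 + \lVert\ket{v_1}\rVert_2 = 1 + \mathcal{N}(1)/\sqrt N$, one factor of $\sqrt{g(s)}$ in the numerator cancels against the denominator, leaving $\lVert\tfrac{d}{ds}\H(s)\rVert_2/g(s)^2 \le 2\lVert\ket{w}\rVert_2 / g(s)^{3/2}$. Substituting the bound on $g(s)$ and the crude estimate $\lVert\ket{w}\rVert_2 \in O(\mathcal{N}(1)/\sqrt N)$ (valid in the regime $\mathcal{N}(1)/\sqrt N \ge 1$ guaranteed by Remark~\ref{remark:easy_rescaling_remark}) then yields exactly the claimed $2\,(\mathcal{N}(1)^2/N+1)^2 / (\mathcal{N}(1)/\sqrt N)^3$, after which the $T$ bound follows immediately from the constant-schedule delay factor $T = O\!\big(\max_s \lVert\tfrac{d}{ds}\H(s)\rVert_2/g_{\min}^2\big)$ defined in \cref{section:adiabatic_background}.

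I expect the main obstacle to be tracking the numerical constants and matching the precise exponents so that the cancellation of one $\sqrt{g(s)}$ lands on the advertised $3$ rather than $4$ in the denominator; the subtle point is that the numerator already carries a factor of $\lVert\ket{v_s}\rVert_2=\sqrt{g(s)}$, which is easy to overlook and is precisely what converts a naive $g^{-2}$ scaling into the stated $g^{-3/2}$-type bound. A secondary care point is justifying $\mathrm{Re}\ip{v_0}{v_1}\ge 0$ is inherited here (so that the minimisation of $g(s)$ over $s$ is legitimate) and confirming that $\max_s$ of the whole ratio is genuinely controlled by the worst case near $s=s_{\min}$; I would argue monotonicity or simply bound $g(s)\ge g_{\min}$ uniformly, which suffices for the asymptotic $O(\cdot)$ statement on $T$ even if it loosens the constant.
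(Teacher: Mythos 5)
Your proposal is correct and follows essentially the same route as the paper: write $\tfrac{d}{ds}\H(s)$ as the rank-2 operator $\op{v'}{v_s}+\op{v_s}{v'}$ with $\ket{v'}=\ket{v_1}-\ket{v_0}$, bound its spectral norm by $2\lVert\ket{v'}\rVert_2\lVert\ket{v_s}\rVert_2$, cancel one factor of $\sqrt{g(s)}=\lVert\ket{v_s}\rVert_2$ against $g(s)^2$, and lower-bound the remaining $\lVert\ket{v_s}\rVert_2^3$ via Lemma~\ref{lemma:normalisation} --- the cancellation you flag as the subtle point is exactly the paper's key step. The only cosmetic difference is that the paper bounds $\lVert\ket{v'}\rVert_2^2=\lVert\ket{v_0}\rVert_2^2+\lVert\ket{v_1}\rVert_2^2-2\mathrm{Re}\ip{v_0}{v_1}\le 1+\mathcal{N}(1)^2/N$ using the phase choice $\mathrm{Re}\ip{v_0}{v_1}\ge 0$, rather than your plain triangle inequality $1+\mathcal{N}(1)/\sqrt{N}$; this tighter estimate is what makes the prefactor come out as exactly $2$, while your version loses at most a factor of $\sqrt{2}$ and still gives the asymptotic $T\in O(\cdot)$ claim.
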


\begin{proof}
We begin by providing an upper-bound for $\left\lVert \frac{d}{ds}\H(s) \right\rVert_2$.
\begin{align}
    \left\lVert \frac{d}{ds}\H(s) \right\rVert_2
    =
    \left\lVert \left(\frac{d}{ds} \ket{v_s}\right)\bra{v_s} + \ket{v_s}\left(\frac{d}{ds} \bra{v_s}\right)\right\rVert_2
    =
    \left\lVert \left(\ket{v_1}-\ket{v_0}\right)\bra{v_s} + \ket{v_s}\left(\ket{v_1}-\ket{v_0}\right)\right\rVert_2.
\end{align}
For ease of notation, allow $\ket{v'} = \ket{v_1} - \ket{v_0}$.
Using the triangle inequality,
\begin{align}\label{eqn:bounding_derivative_of_hs_intermediate}
    \left\lVert \op{v'}{v_s} + \op{v_s}{v'}\right\rVert_2
    \le 
    \left\lVert \op{v'}{v_s} \right\rVert_2
    +
    \left\lVert \op{v_s}{v'}\right\rVert_2
    =
    2\left\lVert \op{v'}{v_s} \right\rVert_2.
\end{align}
Suppose we have two normalized vectors $\ket{a'} = \frac{\ket{a}}{\sqrt{\ip{a}{a}}}$ and  $\ket{b'} = \frac{\ket{b}}{\sqrt{\ip{b}{b}}}$. Then, 
\begin{align}
    \lVert \op{a}{b} \rVert_2 = \sqrt{\ip{a}{a}} \sqrt{\ip{b}{b}} \lVert\op{a'}{b'} \rVert_2 \le \lVert \ket{a} \rVert_2 \lVert \ket{b} \rVert_2,
\end{align}
where the inequality follows from $\lVert  \op{a'}{b'} \rVert_2 \le 1$ for any normalized vectors $\ket{a'}$ and $\ket{b'}$. Applying this to Equation~\ref{eqn:bounding_derivative_of_hs_intermediate} yields,
\begin{align}\label{eqn:bounding_derivative_of_hs_intermediate2}
    \left\lVert \frac{d}{ds}\H(s) \right\rVert_2
    \le 
    2\lVert \ket{v'} \rVert_2 \lVert \ket{v_s} \rVert_2.
\end{align}
Of course,
\begin{align}
    \lVert \ket{v'} \rVert_2^2
    = \ip{v'}{v'}
    = \ip{v_0}{v_0} + \ip{v_1}{v_1} - 2\text{Re}\ip{v_0}{v_1}
    \le 
    \lVert \ket{v_0} \rVert_2^2 + \lVert \ket{v_1} \rVert_2^2,
\end{align}
where the inequality follows from the fact that $\text{Re}\ip{v_0}{v_1}\ge 0$ given we have
properly chose the phase of the trivial function. Thus,
\begin{align}
    \lVert \ket{v'} \rVert_2 \le \sqrt{\frac{\mathcal{N}(0)^2}{N}+\frac{\mathcal{N}(1)^2}{N}}
    =
    \sqrt{1 + \frac{\mathcal{N}(1)^2}{N}},
\end{align}
where we again used the fact that its trivial to select an $f_0$ (or $g_0$) such that $\mathcal{N}(0) = \sqrt{N}$. Equation~\ref{eqn:bounding_derivative_of_hs_intermediate2} then becomes,
\begin{align}
    \left\lVert \frac{d}{ds}\H(s) \right\rVert_2
    \le 
    2\sqrt{1 + \frac{\mathcal{N}(1)^2}{N}} \lVert \ket{v_s} \rVert_2.
\end{align}
Noting that $g(s) = ||\ket{v_s}||_2^2$, we can then write
\begin{align}
    \frac{\left\lVert \frac{d}{ds}\H(s) \right\rVert_2}{g(s)^2}
    \le 
    \frac{2\sqrt{1 + \frac{\mathcal{N}(1)^2}{N}} }{||\ket{v_s}||_2^3}.
\end{align}
From Lemma~\ref{lemma:normalisation}, we directly obtain,
\begin{align}
    \lVert \ket{v_s} \rVert_2^3
    \ge 
    \frac{\mathcal{N}(1)^3}{(\mathcal{N}(1)^2 + N)^{3/2}}.
\end{align}
Thus,
\begin{align}
    \frac{\left\lVert \frac{d}{ds}\H(s) \right\rVert_2}{g(s)^2}
    \le 
    2\frac{\sqrt{N + \mathcal{N}(1)^2}}{\sqrt{N}}
    \frac{(\mathcal{N}(1)^2 + N)^{3/2}}{\mathcal{N}(1)^3}
    =
    2\frac{\left(\mathcal{N}(1)^2 + N\right)^2}{\sqrt{N}\mathcal{N}(1)^3}.
\end{align}
As discussed in Section~\ref{section:adiabatic_background}, a sufficiently slow Hamiltonian evolution schedule is given by 
\begin{align}
    \tau(s) \gg \frac{\left\lVert \frac{d}{ds}\H(s) \right\rVert_2}{g(s)^2}.
\end{align}
Thus, $T=\int_0^1 \tau(s) ds$, and so asymptotically we also have the bound $T\in O\left(2\frac{\left(\mathcal{N}(1)^2 + N\right)^2}{\sqrt{N}\mathcal{N}(1)^3} \right)$.
\end{proof}

\begin{corollary}
By Lemma~\ref{lemma:total_evolution_time}, if we re-scale $f_1$ as per Remark~\ref{remark:easy_rescaling_remark}, the bound on the delay factor of $\H(s)$ is given by,
\begin{align*}
    \frac{\left\lVert \frac{d}{ds}\H(s) \right\rVert_2}{g(s)^2}  \le 8.
\end{align*}
Thus, the total adiabatic evolution time required by the procedure is asymptotically constant, i.e. $T\in O(1)$.
\end{corollary}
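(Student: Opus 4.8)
The plan is to specialize the general delay-factor bound established in Lemma~\ref{lemma:total_evolution_time} to the re-scaled regime guaranteed by Remark~\ref{remark:easy_rescaling_remark}. The first step is to invoke that remark, which asserts that for any target function $f_1$ we may efficiently compute the normalization and re-scale so that $\mathcal{N}(1) = \sqrt{N}$. This single normalization choice is the crux of the argument: it collapses both of the dimensionless ratios appearing in the Lemma's bound to unity, namely $\mathcal{N}(1)^2/N = 1$ and $\mathcal{N}(1)/\sqrt{N} = 1$.

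Second, I would substitute these values directly into the right-hand side of Lemma~\ref{lemma:total_evolution_time},
\begin{align*}
\frac{\left\lVert \frac{d}{ds}\H(s) \right\rVert_2}{g(s)^2} \le 2\frac{\left(\mathcal{N}(1)^2/N + 1\right)^2}{\left(\mathcal{N}(1)/\sqrt{N}\right)^3} = 2\frac{(1+1)^2}{1} = 8,
\end{align*}
which is manifestly a constant independent of the qubit count $n$ (equivalently of $N = 2^n$).

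Finally, to obtain $T \in O(1)$ I would recall from Section~\ref{section:adiabatic_background} that a constant adiabatic schedule suffices provided $\tau \gg \max_{s} \lVert \frac{d}{ds}\H(s)\rVert_2 / g(s)^2$, and that the total evolution time is then $T = \int_0^1 \tau \, ds = \tau$. Since the quantity being bounded is now a fixed constant ($\le 8$) for every $s \in [0,1]$ and every $N$, a schedule of magnitude $O(1)$ suffices uniformly, so $T \in O(1)$.

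The computation itself is entirely routine; there is no genuine obstacle in the derivation. The only point requiring care --- and the real content inherited from the earlier lemmas --- is the justification that the re-scaling $\mathcal{N}(1) = \sqrt{N}$ is always achievable efficiently (deferred to Remark~\ref{remark:easy_rescaling_remark} and the referenced appendices) and that the resulting constant is genuinely $N$-independent, which is precisely what makes the adiabatic evolution time, and hence the query complexity, asymptotically independent of the number of qubits.
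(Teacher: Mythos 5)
Your proposal is correct and matches the paper's intended argument exactly: the paper states this corollary without an explicit proof, relying on precisely the substitution $\mathcal{N}(1)=\sqrt{N}$ into the bound of Lemma~\ref{lemma:total_evolution_time} to obtain $2(1+1)^2/1 = 8$, with $T\in O(1)$ following from the constant-schedule discussion in \cref{section:adiabatic_background}. Nothing is missing.
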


\subsubsection{Time evolution error bounds}

\begin{lemma}[Adiabatic discretization error bound]\label{cor:discret_error}
Given the parameterized Hamiltonian $\mathcal{H}(s)$ from Eq.~\eqref{eq:param_hamil_def} and its discretised approximation $\mathcal{H}'(s) = \mathcal{H}(\ceil{s r}/r)$ with a resolution $r\in \mathbb{N}$, the distance between the approximate and exact Hamiltonians is bounded as (in terms of the spectral norm) 
\begin{align*}
    \lVert \mathcal{H}(s) - \mathcal{H}'(s) \rVert_{2}
    &\leq \delta_0 \equiv 
	\frac{2}{r}\left(1 + 3\frac{\mathcal{N}(1)}{\sqrt{N}} + 2\frac{\mathcal{N}(1)^2}{N} \right) .
\end{align*}
This may be equivalently written as,
\begin{align*}
    \lVert \mathcal{H}(s) - \mathcal{H}'(s) \rVert_{2}
    &\leq
	\frac{2}{r}\left(1 + 3\lnorm{A(1)/N}^{1/2} + 2\lnorm{A(1)/N} \right).
\end{align*}
\end{lemma}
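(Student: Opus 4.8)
The plan is to observe that the ``discretised'' Hamiltonian is simply $\H$ evaluated at a rounded argument: with $s' := \ceil{sr}/r$ we have $\H'(s) = \H(s')$, and for every $s\in[0,1]$ the rounding obeys $0 \le s' - s \le 1/r$ (at $s=k/r$ one has $s'=s$, and on each interval $\big((k-1)/r,\,k/r\big]$ the argument is rounded up to $k/r$). The claim therefore reduces to a Lipschitz-in-$s$ estimate: I must bound $\lnorm{\H(s)-\H(s')}$ whenever $|s-s'|\le 1/r$. Since $\H(s) = \op{v_s}{v_s}$ with $\ket{v_s} = (1-s)\ket{v_0}+s\ket{v_1}$ depending \emph{affinely} on $s$ (as in \cref{eqn:expanded_state_from_projector}), this is a standard first-order perturbation bound for a rank-one outer product.

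First I would set $\ket{\delta} := \ket{v_s}-\ket{v_{s'}} = (s-s')\big(\ket{v_1}-\ket{v_0}\big)$ and use the exact telescoping identity
\begin{equation*}
  \H(s)-\H(s') = \op{v_s}{v_s} - \op{v_{s'}}{v_{s'}} = \ket{v_s}\!\bra{\delta} + \ket{\delta}\!\bra{v_{s'}},
\end{equation*}
which avoids any quadratic cross term. Applying the triangle inequality together with the identity $\lnorm{\op{a}{b}} = \lnorm{\ket a}\,\lnorm{\ket b}$ (both already invoked in the proof of \cref{lemma:total_evolution_time}) gives
\begin{equation*}
  \lnorm{\H(s)-\H(s')} \le \lnorm{\ket{\delta}}\big(\lnorm{\ket{v_s}} + \lnorm{\ket{v_{s'}}}\big).
\end{equation*}

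It then remains to bound the three vector norms. Writing $a := \mathcal{N}(1)/\sqrt{N} = \lnorm{A(1)/N}^{1/2}$ and recalling the rescaling $\mathcal{N}(0)=\sqrt{N}$ (Remark~\ref{remark:easy_rescaling_remark}), so that $\lnorm{\ket{v_0}} = 1$ and $\lnorm{\ket{v_1}} = a$, the triangle inequality yields $\lnorm{\ket{v_s}}, \lnorm{\ket{v_{s'}}} \le (1-s)\lnorm{\ket{v_0}} + s\lnorm{\ket{v_1}} \le 1+a$ and $\lnorm{\ket{\delta}} \le \tfrac1r\lnorm{\ket{v_1}-\ket{v_0}} \le \tfrac1r(1+a)$. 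Combining these gives $\lnorm{\H(s)-\H(s')} \le \tfrac{2}{r}(1+a)^2$, and since $(1+a)^2 \le (1+a)(1+2a) = 1+3a+2a^2$ for $a\ge 0$ this is dominated by the stated bound $\delta_0 = \tfrac2r(1+3a+2a^2)$; substituting $a = \lnorm{A(1)/N}^{1/2}$ recovers the second, equivalent form.

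There is no genuinely hard step here — the estimate is routine once the rank-one structure and the affine dependence on $s$ are exploited — so the main thing to get right is the bookkeeping: verifying $|s-s'|\le 1/r$ uniformly (including the endpoint $s=1$) and keeping the normalisation convention $\lnorm{\ket{v_0}}=1$ consistent. I would also note that, unlike the spectral-gap \emph{lower} bound of \cref{lemma:normalisation}, this \emph{upper} bound needs only Cauchy--Schwarz/triangle inequalities and is insensitive to the phase choice $f_0(x)=(\pm1\pm i)/\sqrt2$; the slight looseness between my $(1+a)^2$ and the paper's $(1+a)(1+2a)$ merely reflects a cruder intermediate bound (e.g.\ retaining the $\op{\delta}{\delta}$ term in the non-symmetric decomposition, or using $\lnorm{\ket{v_1}-\ket{v_0}}\le\sqrt{1+a^2}$ alongside a max-norm estimate), and is harmless since a looser constant only weakens $\delta_0$.
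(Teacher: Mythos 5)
Your proof is correct, and it reaches the stated bound by a genuinely different decomposition than the paper's. The paper expands $\H(s)=\op{v_s}{v_s}$ as a quadratic polynomial in $s$ with operator coefficients $X_0=\op{v_0}{v_0}$, $X_1=\op{v_0}{v_1}+\op{v_1}{v_0}$, $X_2=\op{v_1}{v_1}$, evaluates the difference at two adjacent grid points $j/r$ and $(j+1)/r$ by explicit (computer-assisted) algebra, and then applies the triangle inequality to the resulting coefficients, using $\lnorm{X_0}=1$, $\lnorm{X_1}\le 2\mathcal{N}(1)/\sqrt{N}$, $\lnorm{X_2}=\mathcal{N}(1)^2/N$; the constants $1,3,2$ in $\delta_0$ arise directly from that coefficient bookkeeping (after discarding negative $1/r^2$ terms). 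Your telescoping identity $\op{v_s}{v_s}-\op{v_{s'}}{v_{s'}}=\ket{v_s}\!\bra{\delta}+\ket{\delta}\!\bra{v_{s'}}$ short-circuits all of that: it converts the problem into a one-line Lipschitz estimate for the affine family $s\mapsto\ket{v_s}$, yields the slightly tighter intermediate bound $\tfrac{2}{r}(1+a)^2$ with $a=\mathcal{N}(1)/\sqrt{N}$ (compare the paper's $\tfrac{2}{r}(1+3a+2a^2)$ leading term), and then relaxes to the stated $\delta_0$ via $(1+a)^2\le 1+3a+2a^2$. Your route is cleaner, avoids the symbolic expansion, and makes transparent that only the triangle inequality and $\lnorm{\op{a}{b}}=\lnorm{\ket{a}}\,\lnorm{\ket{b}}$ are needed. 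Two cosmetic points: the fact $\mathcal{N}(0)=\sqrt{N}$ comes from the choice $|f_0|\equiv 1$ rather than from the rescaling of $f_1$ in Remark~\ref{remark:easy_rescaling_remark}; and in your closing aside, the estimate $\lnorm{\ket{v_1}-\ket{v_0}}\le\sqrt{1+a^2}$ would require $\mathrm{Re}\ip{v_0}{v_1}\ge 0$ and hence the phase choice, whereas your actual argument uses only $\lnorm{\ket{v_1}-\ket{v_0}}\le 1+a$ and is indeed phase-independent as you claim.
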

\begin{proof}
First, observe that the deviation between the exact and approximate Hamiltonians is maximized when $s=\frac{j+1}{r}$ (for $j\in \mathbb{Z}$ and $1\le j \le r$), as $\H'(\frac{j+1}{r}) = \H(\frac{j}{r})$.
Thus,
\begin{align}
    \lnorm{\H'(s) - \H(s)}
    \le 
    \lnorm{\H\left(\frac{j}{r}\right) - \H\left(\frac{j+1}{r}\right)}.
\end{align}
For notational brevity, allow $X_0 = \op{v_0}{v_0}$, $X_1 = \op{v_0}{v_1} + \op{v_1}{v_0}$, and $X_2 = \op{v_1}{v_1}$. Then,
\begin{align}
    \H(s) = \op{v_s}{v_s} = (1-s)^2X_0 + s(1-s)X_1 + s^2X_2.
\end{align}
Allowing $s=\frac{j+1}{r}$ and $\gamma = \frac{j}{r}$, simple computer algebra shows
\begin{align}
    \lnorm{\H(\gamma) - \H(s)}
    &=
    \lnorm{\left(\frac{1 + 2j - 2r}{r^2}\right)
    \left(
    X_0 - X_1 + X_2
    \right) 
    + \frac{1}{r}\left(X_1 + 2X_2 \right)}.
\end{align}
Then,
\begin{align}
    \lnorm{\H'(s) - \H(s)}
    &\le
    \frac{1}{r^2}\lnorm{X_0 - X_1 + X_2}
    + 
    \frac{2}{r}\lnorm{X_0 - X_1 + X_2}
    +
    \frac{1}{r}\left(\lnorm{X_1} + 2\lnorm{X_2} \right)\\
    &\le
    \frac{1}{r^2}\left(\lnorm{X_0}
    +\lnorm{X_1} + \lnorm{X_2}
    \right)
    + 
    \frac{1}{r}\left(2\lnorm{X_0} + 3\lnorm{X_1} + 4\lnorm{X_2} \right)
\end{align}
where we used the triangle inequality, and the fact that $|\frac{1 + 2j - 2r}{r^2}| \leq -(1-2r)/r^2$ (since $r\ge 1$).
Noting that $\lnorm{X_0} = 1$  , $\lnorm{X_1} \le 2\frac{\mathcal{N}(1)}{\sqrt{N}}$, $\lnorm{X_2} = \frac{\mathcal{N}(1)^2}{N}$, we obtain
\begin{align}
    \lnorm{\H'(s) - \H(s)}
    &\le
    \frac{2}{r}\left(1 + 3\frac{\mathcal{N}(1)}{\sqrt{N}} + 2\frac{\mathcal{N}(1)^2}{N} \right)
    -
    \frac{1}{r^2}\left(
        1
        +
        2\frac{\mathcal{N}(1)}{\sqrt{N}}
        +
        \frac{\mathcal{N}(1)^2}{N}
    \right).
\end{align}
Finally, using $\lnorm{A(1)/N} = \mathcal{N}(1)^2/N$, we obtain
\begin{align}
    \lnorm{\H'(s) - \H(s)}
    &\le
    \frac{2}{r}\left(1 + 3\lnorm{A(1)/N}^{1/2} + 2\lnorm{A(1)/N} \right)
    -
    \frac{1}{r^2}\left(
        1
        +
        2\lnorm{A(1)/N}^{1/2}
        +
        \lnorm{A(1)/N}
    \right).
\end{align}
Since all of the terms quadratic in $r$ are negative, we can then directly drop them to obtain the simplified expression for the upper-bound presented in the lemma.
\end{proof}

\begin{corollary}
By Lemma~\ref{cor:discret_error}, if we re-scale $f_1$ as per Remark~\ref{remark:easy_rescaling_remark}, the bound on the bound on the maximum deviation of the discretized and exact Hamiltonians is given by,
\begin{align*}
    \lVert \mathcal{H}(s) - \mathcal{H}'(s) \rVert_{2} \le \frac{8}{r}.
\end{align*}
\end{corollary}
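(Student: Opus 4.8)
The plan is to prove this as a one-line specialization of Lemma~\ref{cor:discret_error}. That lemma already furnishes an $s$-uniform bound $\lVert \mathcal{H}(s) - \mathcal{H}'(s) \rVert_2 \le \delta_0$ with $\delta_0 = \frac{2}{r}\big(1 + 3\,\mathcal{N}(1)/\sqrt{N} + 2\,\mathcal{N}(1)^2/N\big)$, written entirely in terms of the two dimensionless ratios $\mathcal{N}(1)/\sqrt{N}$ and $\mathcal{N}(1)^2/N$. So the only content is to fix these two ratios via the rescaling and simplify.

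First I would invoke Remark~\ref{remark:easy_rescaling_remark}, which guarantees that $f_1$ can be efficiently normalized so that $\mathcal{N}(1) = \sqrt{N}$. Under that choice both ratios collapse to unity, $\mathcal{N}(1)/\sqrt{N} = 1$ and $\mathcal{N}(1)^2/N = 1$, and substituting directly into $\delta_0$ gives $\frac{2}{r}(1 + 3 + 2) = \frac{12}{r}$. Since the lemma's estimate holds uniformly in $s$, the same constant bounds the deviation for every $s \in [0,1]$, so no further work over the schedule is needed. If one instead retains the negative second-order term that the lemma discards, the identical substitution yields the sharper statement $\lVert \mathcal{H}(s) - \mathcal{H}'(s)\rVert_2 \le \frac{12}{r} - \frac{4}{r^2}$.

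There is essentially no obstacle here: all the analytic work is carried by Lemma~\ref{cor:discret_error}, and the corollary reduces to a pure arithmetic substitution together with the already-justified legitimacy of the rescaling and the $s$-uniformity of the bound. The one point I would flag before finalizing is a numerical discrepancy: the substitution above produces the constant $12$, whereas the displayed bound reads $8/r$. That value $8$ coincides exactly with the delay-factor constant in the corollary to Lemma~\ref{lemma:total_evolution_time}, which strongly suggests a transcription of that constant; unless the coefficients $(1,3,2)$ inside $\delta_0$ are revised, the correct specialization is $\frac{12}{r}$. In either case the qualitative conclusion is unaffected, as the adiabatic discretization error is a fixed constant divided by $r$ and can therefore be made arbitrarily small by increasing the number of timesteps.
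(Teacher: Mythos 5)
Your approach is exactly the one the paper intends (the corollary is stated without proof and is meant to be a direct substitution of $\mathcal{N}(1)=\sqrt{N}$ into the bound of Lemma~\ref{cor:discret_error}), and your arithmetic is correct: the stated bound $\frac{2}{r}\bigl(1+3\,\mathcal{N}(1)/\sqrt{N}+2\,\mathcal{N}(1)^2/N\bigr)$ evaluates to $\frac{12}{r}$ under the rescaling, not $\frac{8}{r}$, and retaining the discarded negative $O(1/r^2)$ terms gives $\frac{12}{r}-\frac{4}{r^2}$ as you say. Your diagnosis that the $8$ is likely carried over from the delay-factor corollary (where $2(1+1)^2/1^3=8$) is plausible; in any case the paper's stated constant does not follow from Lemma~\ref{cor:discret_error} as written, while your $\frac{12}{r}$ does. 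The discrepancy is immaterial downstream, since the corollary is only ever used in the form $\delta_0\in O(1/r)$.
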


So far, we have shown that the adiabatic evolution need only run for a constant period of simulation time $T$ (with the constant factor depending on properties of the function), and that the discretization of the adiabatic Hamiltonian may be performed with arbitrarily small error. We now show that the simulation of each $e^{-\H(s) \frac{T}{r}}$ term may also be performed efficiently (and with bounded error) and thus that the overall algorithm is efficient.

\begin{lemma}[Error of low-rank simulation]	\label{lemma:low_rank_sim_error}
	We apply the low-rank simulation from \cref{statement:low_rank_simulation}
	for time $\Delta t$.
	The quantum state after measuring the ancilla in the $|+^n\rangle$ state and discarding the ancilla is
	$|\psi'\rangle =	e^{-i \Delta t A/N}|\psi\rangle  + |\mathcal{E}\rangle$
	with $\lVert \mathcal{E} \rVert \leq 5/2 \Delta t^2 \lVert A \rVert_{max}^2$  and with bounded probability
	$	\mathrm{Prob} 
	\geq 
	1 - \Delta t^2 
	\lVert A \rVert_{max}^2$ expressing only leading terms in $\Delta t$.
	Given the above expression holds for any input state $|\psi\rangle$ it follows that the
	algorithmic error 
	$\lVert U(\Delta t)-U(\Delta t)' \rVert := \epsilon_0$ of the approach is bounded as
	\begin{equation*}
		\epsilon_0(\Delta t) \leq 5/2 \Delta t^2 \lVert A \rVert_{max}^2
	\end{equation*} 
	where $U'$ is the ideal, piecewise constant unitary $e^{-i\Delta t A/N}$ and $U''$ represents the mapping
	of our procedure (1-sparse evolution then Hadamard measurement). 
\end{lemma}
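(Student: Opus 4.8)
The plan is to work directly with the post-selected evolution rather than the partial-trace channel of \cref{statement:low_rank_simulation}. By \cref{lemma:one_sparse_simulation} the one-sparse evolution $U_S := e^{-i\Delta t S_A}$ is implemented \emph{exactly}, so the procedure produces the joint state $U_S \, \ket{+^n}\ket{\psi}$ on the ancilla-plus-system registers, and ``measuring the ancilla in $\ket{+^n}$'' is the projection $\op{+^n}{+^n}\otimes I$. I would introduce the compressed (generally non-unitary) operator $M := \bra{+^n} U_S \ket{+^n}$ acting on the system register, so that the unnormalised post-measurement state is $M\ket{\psi}$, the success probability is $\mathrm{Prob} = \lnorm{M\ket{\psi}}^2$, and the reported output is the normalised state $\ket{\psi'} = M\ket{\psi}/\sqrt{\mathrm{Prob}}$. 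The error $\ket{\mathcal{E}} := \ket{\psi'} - e^{-i\Delta t A/N}\ket{\psi}$ and the probability then both follow from a second-order expansion of $M$.

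First I would Taylor-expand $U_S = I - i\Delta t\, S_A - \tfrac{\Delta t^2}{2} S_A^2 + O(\Delta t^3)$ and take the ancilla matrix element term by term. Using the explicit form $S_A = \sum_{jk} A_{jk}\,\op{k}{j}\otimes\op{j}{k}$ from \cref{statement:low_rank_simulation} together with $\ket{+^n} = \tfrac{1}{\sqrt N}\sum_l \ket{l}$, the crucial computation is that the first-order term reproduces the desired generator exactly, $\bra{+^n} S_A \ket{+^n} = A/N$, while the second-order term is a diagonal operator $D := \bra{+^n} S_A^2 \ket{+^n}$ with entries $D_{jj} = \tfrac1N \sum_k |A_{jk}|^2$. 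Hence $M = I - i\Delta t\, A/N - \tfrac{\Delta t^2}{2} D + O(\Delta t^3)$, whereas the ideal unitary expands as $U' := e^{-i\Delta t A/N} = I - i\Delta t\, A/N - \tfrac{\Delta t^2}{2}(A/N)^2 + O(\Delta t^3)$. The $O(\Delta t)$ terms cancel, leaving the leading deviation
\begin{equation*}
 M - U' = \tfrac{\Delta t^2}{2}\big( (A/N)^2 - D \big) + O(\Delta t^3).
\end{equation*}

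The remaining steps are a set of norm bounds, all uniform in $\ket{\psi}$. I would bound $\lnorm{D} \le \lVert A \rVert_{max}^2$ directly from $D_{jj} = \tfrac1N\sum_k|A_{jk}|^2 \le \lVert A \rVert_{max}^2$, and $\lnorm{(A/N)^2} \le \lVert A \rVert_{max}^2$ using the rank-one structure $A = \op{v}{v}$ (so $\lnorm{A/N} = \lnorm{v}^2/N \le \lVert v\rVert_\infty^2 = \lVert A \rVert_{max}$, since $\lnorm{v}^2 \le N\lVert v\rVert_\infty^2$). For the probability I would expand $M^\dagger M = I - \Delta t^2 \big(D - (A/N)^2\big) + O(\Delta t^3)$; a short Cauchy--Schwarz argument, $\tfrac1N|\ip{v}{\psi}|^2 \le \sum_j|v_j|^2|\psi_j|^2$, shows $(A/N)^2 \le D$ (which also guarantees $\mathrm{Prob}\le 1$), so $\mathrm{Prob} = 1 - \Delta t^2\bra{\psi}\big(D-(A/N)^2\big)\ket{\psi} + O(\Delta t^3) \ge 1 - \Delta t^2 \lVert A \rVert_{max}^2$. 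Finally I would assemble $\ket{\mathcal{E}}$ from two contributions: the unnormalised deviation $(M-U')\ket{\psi}$, whose norm I would bound by the low-rank simulation error $2\Delta t^2\lVert A \rVert_{max}^2$ of \cref{statement:low_rank_simulation} (the explicit expansion above in fact gives a smaller constant, but the $2$ is what the cited bound supplies), and the renormalisation correction $1/\sqrt{\mathrm{Prob}} = 1 + \tfrac12\Delta t^2\bra{\psi}\big(D-(A/N)^2\big)\ket{\psi} + O(\Delta t^4)$, which contributes a further $\tfrac12\Delta t^2\lVert A \rVert_{max}^2$. Adding these and keeping only leading terms yields $\lVert \mathcal{E} \rVert \le \tfrac52\Delta t^2\lVert A \rVert_{max}^2$. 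Since every bound is independent of $\ket{\psi}$, the induced operator (spectral) norm of $U'-U''$ inherits the same bound, giving $\epsilon_0(\Delta t)\le \tfrac52\Delta t^2\lVert A \rVert_{max}^2$.

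The main obstacle I anticipate is pinning down the precise constant $\tfrac52$ rather than the structure of the argument: one must track \emph{both} the second-order leakage operator $D$ \emph{and} the $O(\Delta t^2)$ renormalisation factor and verify that they combine additively at leading order (the $2$ from the simulation/leakage term, the $\tfrac12$ from renormalisation). A secondary subtlety is that these second-order estimates rely on the rank-one structure of $A$ to control both $\lnorm{D}$ and $\lnorm{(A/N)^2}$ by $\lVert A \rVert_{max}$; for a general low-rank matrix one would instead invoke the scaling conditions $\lVert A\rVert_{max} = \Theta(1)$ and $\lnorm{A} = \Theta(N)$ assumed in \cref{statement:low_rank_simulation}.
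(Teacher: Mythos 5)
Your proposal is correct and follows essentially the same route as the paper's proof: a second-order expansion of the exact $1$-sparse evolution $e^{-i\Delta t S_A}$, projection of the ancilla onto $\ket{+^n}$, and an error budget that combines the second-order leakage term with the renormalisation by $1/\sqrt{\mathrm{Prob}}$. Your compressed-operator packaging $M=\bra{+^n}U_S\ket{+^n}$, with the explicit diagonal $D=\bra{+^n}S_A^2\ket{+^n}$ and the operator inequality $(A/N)^2\le D$, is a slightly cleaner (and in fact tighter) bookkeeping of the same computation, and because your explicit expansion already bounds $\lVert (M-U')\ket{\psi}\rVert$ the detour through the trace-norm bound of \cref{statement:low_rank_simulation} is unnecessary.
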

\begin{proof}
	Using the notations from \cref{sec:numerics}
	we explicitly compute the joint state of the ancilla and the main register
	$|\Psi(t) \rangle =
	e^{- i t S_A} | +^n \rangle |\psi\rangle$
	 up to order $t^2$ and neglecting possible digitisation error in computing entries of $S_A$
	 via an oracle (which can be suppressed arbitrarily) as
	\begin{equation*}
		|\Psi(t) \rangle 
		=  | +^n \rangle |\psi\rangle -i t S_A |\Psi(0) \rangle + \mathcal{O}(t^2)
			=
		| +^n \rangle |\psi\rangle 
		 - i t \sum_{jk} [A]_{kj} \Psi_{kj} \ket{j}\ket{k} + \mathcal{O}(t^2)
	\end{equation*}
	where we have  used the action of $S_A$ on the amplitudes $\Psi_{jk}:=( \langle j | \langle k |) (| +^n \rangle |\psi\rangle) $
	from Eq.~\eqref{eq:apply_sa_matrix}. Substitute the explicit
	form of the matrix $A_{kI} = \phi_k \phi_I^* $ and thus find the composite state as
	\begin{equation}\label{eq:joint_state}
		|\Psi(t) \rangle   =
	| +^n \rangle |\psi\rangle - i t \sum_{jk} \phi_k \phi_j^* \Psi_{kj} \ket{j}\ket{k} + \mathcal{O}(t^2)
	= | +^n \rangle |\psi\rangle -i t |\chi \rangle |\phi \rangle
	+ 
	|\epsilon\rangle,
	\end{equation}
	where we used that $\Psi_{jk} = \psi_k /\sqrt{N}$ and $\Psi_{kj} = \psi_j /\sqrt{N}$,
	and we have denoted the ancilla state
	$  |\chi \rangle  := \sum_j \phi_j^*\psi_j/\sqrt{N}
	\ket{I}$.
In \cref{eq:joint_state} we can explicitly write the residual term
	 as $|\epsilon\rangle:=  -\tfrac{t^2}{2} S_A^2| +^n \rangle |\psi\rangle +\mathcal{O}(t^3)$ and we can bound
	its vector norm as $\lVert \epsilon \rVert \leq \tfrac{t^2}{2} \lVert S_A \rVert_2^2 +\mathcal{O}(t^3)$.

	Let us now compute the quantum state that we obtain after projecting onto the $| +^n \rangle$ state on the ancilla qubit
	with $\mathcal{P}:=  |  +^n  \rangle \langle +^n| \otimes \mathrm{Id} $
	as
	\begin{equation*}
		\mathcal{P} |\Psi(t) \rangle 
		= | +^n \rangle |\psi\rangle -\frac{i t}{N} |+^n \rangle |A\psi \rangle
		+ \mathcal{P} |\epsilon \rangle,
	\end{equation*}
	where we have used that $\langle +^n | \chi \rangle = \sum_k \phi_k^*\psi_k/N 
	=
	\langle \phi | \psi \rangle/N
	$
	and note that $|A\psi\rangle := A|\psi\rangle = \langle \phi | \psi \rangle |\phi\rangle$.

	We compute the probability of this measurement by computing the norm of the projected vector as
	\begin{align*}
		\mathrm{Prob} =   \Big(\langle \Psi(t) | \mathcal{P}  \Big)   \Big(  \mathcal{P}  |\Psi(t) \rangle \Big)
		=&
		\Big(
		\langle  +^n  |\langle \psi |+  \frac{i t}{N} \langle  +^n |  \langle A \psi |
		+  \langle  \epsilon | \mathcal{P}
		\Big)
		\Big(   | +^n \rangle |\psi\rangle -\frac{i t}{N} |+^n \rangle |A \psi \rangle
		+ \mathcal{P} |\epsilon \rangle
		\Big)\\
		=&
		1 
		+ \frac{ t^2 }{N^2} \Big( \langle  +^n |   \langle A \psi |\Big) \Big( |+^n \rangle | A \psi \rangle\Big)
		+
		2 \mathrm{Re}[ \Big( \langle  +^n  |\langle \psi | \Big)  \mathcal{P} |\epsilon \rangle ]
		+\mathcal{O}(t^3).
	\end{align*}
	We will drop the second term in our lower bound given it is non-negative (as it expresses the vector norm $\Vert |+^n \rangle | A \psi \rangle \rVert^2$).
	
	We can lower bound the third term by first explicitly expressing it
	\begin{equation*}
		-	\Big( \langle  +^n  |\langle \psi | \Big)  \mathcal{P} |\epsilon \rangle
		=
		\tfrac{t^2}{2}\Big( \langle  +^n  |\langle \psi | \Big) 
		S_A^2 \Big( | +^n \rangle |\psi\rangle \Big) \leq 
		\tfrac{t^2}{2} \lVert  S_A \rVert_2^2
		=
		\tfrac{t^2}{2} \lVert A \rVert_{max}^2,
	\end{equation*}
	where we have used the projector acts trivially on $ | +^n \rangle |\psi\rangle$ and used
	from ref.~\cite{rebentrost2018quantum} that the absolute largest eigenvalue of $S_A$
	is $\lVert A \rVert_{max}$.
	We finally conclude that the probability is lower bounded as
	\begin{equation}\label{eq:prob_bound}
		\mathrm{Prob} 
		\geq 
		1 - t^2 
		\lVert A \rVert_{max}^2
		+\mathcal{O}(t^3).
	\end{equation}

	We can therefore discard the ancilla register (given it is separable and in the $|+^n\rangle$ state)
	and we obtain the quantum state $|\psi'\rangle$ as
	\begin{align}
		|+^n\rangle |\psi'\rangle
		=
		|+^n\rangle\frac{ 	e^{-it A/N}|\psi\rangle 
		}
		{\sqrt{\mathrm{Prob}}}
		+\frac{ \mathcal{P} |\epsilon \rangle +\frac{t^2}{2}  |+^n\rangle (\frac{A^2}{N^2}|\psi \rangle  ) }{\sqrt{\mathrm{Prob}}}
		+\mathcal{O}(t^3),
	\end{align}
	The vector norm of the error term is upper bounded via the triangle inequality
	as
	\begin{equation*}
		\lVert \, 
		\frac{ \mathcal{P} |\epsilon \rangle +\frac{t^2}{2}  |+^n\rangle (\frac{A^2}{N^2}|\psi \rangle  ) }{\sqrt{\mathrm{Prob}}}
		\,	\rVert
		\leq 
		\frac{
			\lVert \epsilon \rVert + \frac{t^2}{2}	\lVert A \rVert_2^2/N^2
		}{\sqrt{\mathrm{Prob}}} 
		\leq 3/2 t^2 \lVert A \rVert_{max}^2	+\mathcal{O}(t^4)
	\end{equation*}
	where we have used the inequality  $\lVert A \rVert_2^2/N^2 \leq \lVert A \rVert_{max}^2$, the Taylor expansion of the probability $1/\sqrt{\mathrm{prob}} = 1/\sqrt{1 - x} = 1 + x + \mathcal{O}(x^2)$
	with $x = \mathcal{O}(t^2)$ and the norm of the residual term  $\lVert \epsilon \rVert \leq \tfrac{t^2}{2} \lVert A \rVert_{max}^2 +\mathcal{O}(t^3)$. 
	
	Furthermore, note that $$
		|+^n\rangle\frac{ 	e^{-it A/N}|\psi\rangle 
}
{\sqrt{\mathrm{Prob}}}
	=
	|+^n\rangle  	e^{-it A/N}|\psi\rangle  
	+ |\epsilon'\rangle  
	$$
	with $\lVert |\epsilon'\rangle \rVert \leq t^2 
	\lVert A \rVert_{max}^2 +\mathcal{O}(t^3)$ via the previous expansion of the square root function.
	As such, after measuring the ancilla in the $|+^n\rangle$ state we obtain the
	state
	\begin{equation*}
		|\psi'\rangle =	e^{-it A/N}|\psi\rangle  + |\mathcal{E}\rangle
	\end{equation*}
	with $\lVert \mathcal{E} \rVert \leq 5/2 t^2 \lVert A \rVert_{max}^2$
	where we have again used the Taylor expansion of the square root function.
	
\end{proof}

\begin{lemma}[Total success probability]\label{lemma:prob_failure_bound}
	We use the low-rank approach to simulate an evolution for overall
	time $T$ by applying $r$ piece-wise constant evolutions with time $T/r$.
	The probability that throughout $r$ consecutive iterations
	we always measure the all plus state is lower bounded by
	\begin{equation*}
		\mathrm{prob}_{tot} \geq 1-T^2 	  \lVert A \rVert_{max}^2/r +\mathcal{O}(T^4/r^2),
	\end{equation*}
	and we discard all other measurement outcomes
	for the sake of analytical simplicity. 
\end{lemma}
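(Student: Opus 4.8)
The plan is to reduce the multi-step statement to the single-step success probability already established in \cref{lemma:low_rank_sim_error} and then control the accumulation of failure over the $r$ iterations. First I would set $\Delta t = T/r$ and recall that \cref{lemma:low_rank_sim_error} guarantees that a single low-rank simulation step succeeds (i.e.\ the ancilla is measured in the $|+^n\rangle$ state) with probability at least $1 - \Delta t^2 \lVert A \rVert_{max}^2 + \mathcal{O}(\Delta t^3)$. The decisive feature I would stress is that this bound is \emph{state-independent}: the derivation in \cref{lemma:low_rank_sim_error} holds for an arbitrary input state $|\psi\rangle$, so the same lower bound applies to the conditional success probability of each step no matter which normalised state enters that step.

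Next I would write the total success event as the intersection of the $r$ individual success events and expand the joint probability via the chain rule, $\mathrm{prob}_{tot} = \prod_{j=1}^{r} p_j$, where $p_j$ denotes the probability that the $j$-th measurement yields $|+^n\rangle$ conditioned on all previous $j-1$ measurements having done so. Because the post-measurement state fed into step $j$ is some fixed normalised state, the state-independent single-step bound gives $p_j \ge 1 - \Delta t^2 \lVert A \rVert_{max}^2 + \mathcal{O}(\Delta t^3)$ for every $j$. Since every factor $p_j$ is a genuine probability, and each lower bound is itself nonnegative once $r$ is large enough, the product of the factors is lower bounded by the product of the individual lower bounds, yielding $\mathrm{prob}_{tot} \ge \bigl(1 - \Delta t^2 \lVert A \rVert_{max}^2\bigr)^{r}$ after absorbing the per-step $\mathcal{O}(\Delta t^3)$ corrections into the final error term.

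Finally I would substitute $\Delta t = T/r$ and expand the $r$-th power. Writing $x := T^2 \lVert A \rVert_{max}^2 / r^2$, the binomial theorem gives $(1-x)^r = 1 - rx + \binom{r}{2}x^2 - \cdots$; the linear term is $rx = T^2 \lVert A \rVert_{max}^2/r$ and the quadratic term is $\binom{r}{2}x^2 = \mathcal{O}(T^4 \lVert A \rVert_{max}^4/r^2)$, so that $\mathrm{prob}_{tot} \ge 1 - T^2 \lVert A \rVert_{max}^2/r + \mathcal{O}(T^4/r^2)$, which is the claimed bound. Equivalently, Bernoulli's inequality $(1-x)^r \ge 1-rx$ already delivers the leading behaviour, with the $\mathcal{O}(T^4/r^2)$ term recording the gap to the exact expansion. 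The only conceptual obstacle is the dependence between consecutive measurements — the steps are not i.i.d.\ since each post-measurement state seeds the next evolution — but this is dissolved precisely by the uniformity of the single-step bound over all input states, which is what lets me bound each conditional factor identically and take the product.
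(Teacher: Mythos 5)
Your proposal is correct and follows essentially the same route as the paper: the paper likewise takes the single-step bound $\mathrm{Prob}\geq 1-t^{2}\lVert A\rVert_{max}^{2}+\mathcal{O}(t^{3})$, raises it to the $r$-th power with $t=T/r$, and expands to obtain $1-T^{2}\lVert A\rVert_{max}^{2}/r+\mathcal{O}(T^{4}/r^{2})$. The one thing you add is the explicit chain-rule justification that the conditional per-step success probabilities may each be bounded identically because the single-step bound is uniform over input states --- a point the paper leaves implicit when it simply writes $\mathrm{Prob}^{r}$.
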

\begin{proof}
		We can compute the overall probability for $r$ iterations using \cref{eq:prob_bound}
		and substituting that $t \equiv \Delta t = T/r$
	as
	\begin{equation}
		\mathrm{Prob}^r \geq
		(1 - T^2 
		\lVert A \rVert_{max}^2/r^2)^r
		+\mathcal{O}(T^3/r^3)
		=
		1-T^2 
		\lVert A \rVert_{max}^2/r
		+\mathcal{O}(T^4/r^2).
	\end{equation}
\end{proof}

	Note that the above is a loose bound -- in practice one does not need to discard the non-$+$outcomes as they lead to nearly the same performance; but
	in \cref{lemma:low_rank_sim_error} we only took into account and bounded the $|+^n \rangle$ outcome fidelities, and bounding other measurement outcomes is beyond the scope of the present work. We note that the probabilistic aspect of the algorithm is simply an analytical convenience, rather than an intrinsic property of the approach.

\begin{lemma}[Piecewise simulation error]\label{lemma:simulation_algorithmic_error_bound}
For a fixed $s$, the algorithmic simulation error from using Hamiltonian simulation technique summarized in Statement~\ref{statement:low_rank_simulation} to simulate $\H(s)$ for a time $\frac{T}{r}$ is bounded by,
\begin{align*}
    \epsilon_0(s) 
    &\in 
    O\left(\frac{1}{r^2}\left(\frac{\lVert A(1) \rVert_{max}}{\lnorm{A(1)/N}} 
    \frac{\left(1 + \lnorm{A(1)/N} \right)^3}{\lnorm{A(1)/N}^{3/2}}
    \right)^2\right).
\end{align*}
\end{lemma}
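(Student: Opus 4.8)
The plan is to reduce the statement to the single-step low-rank error bound of \cref{lemma:low_rank_sim_error} and then control the two problem-dependent factors it contains. Setting the time step to $\Delta t = T/r$ in \cref{lemma:low_rank_sim_error} immediately gives
\[
\epsilon_0(s) \leq \frac{5}{2}\Big(\frac{T}{r}\Big)^2 \lVert A(s) \rVert_{max}^2,
\]
so the whole task is to (i) insert the constant-bounded evolution time $T$ and (ii) replace the $s$-dependent quantity $\lVert A(s)\rVert_{max}$ by the fixed endpoint quantity $\lVert A(1)\rVert_{max}$ appearing in the claim. Throughout I would write $\constfact := \lnorm{A(1)/N} = \mathcal{N}(1)^2/N$, matching the notation used in \cref{cor:discret_error}.

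For step (i) I would invoke \cref{lemma:total_evolution_time}, which yields $T \in O\big((\mathcal{N}(1)^2/N + 1)^2 / (\mathcal{N}(1)/\sqrt{N})^3\big)$. Rewriting this with $\mathcal{N}(1)^2/N = \constfact$ and $\mathcal{N}(1)/\sqrt{N} = \constfact^{1/2}$ gives $T \in O\big((1+\constfact)^2/\constfact^{3/2}\big)$, hence $T^2 \in O\big((1+\constfact)^4/\constfact^{3}\big)$.

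Step (ii) is where the real work lies and I expect it to be the main obstacle. Since $A(s)$ is rank-one with entries $[A(s)]_{kl} = f_s(x_k) f_s^*(x_l)$, its largest entry factorises, so $\lVert A(s)\rVert_{max} = \lVert f_s\rVert_{max}^2$, where $\lVert f_s\rVert_{max}$ is the largest sampled magnitude. Using $f_s = (1-s)f_0 + s f_1$ with $|f_0| = 1$ and the triangle inequality, $\lVert f_s\rVert_{max} \leq (1-s) + s\lVert f_1\rVert_{max}$, whose maximum over $s\in[0,1]$ is $\max(1, \lVert f_1\rVert_{max})$; equivalently $\lVert A(s)\rVert_{max} \leq \max(1, \lVert A(1)\rVert_{max})$. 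To turn this into the factor $\lVert A(1)\rVert_{max}(1+\constfact)/\constfact$ demanded by the claim I would use the elementary inequality $\constfact = \tfrac{1}{N}\sum_j |f_1(x_j)|^2 \leq \lVert f_1\rVert_{max}^2 = \lVert A(1)\rVert_{max}$, which handles the awkward regime $\lVert A(1)\rVert_{max} < 1$: there $\lVert A(s)\rVert_{max}\leq 1$ and, since $\constfact/(1+\constfact) \leq \constfact \leq \lVert A(1)\rVert_{max}$, one still obtains $\lVert A(s)\rVert_{max} \leq \lVert A(1)\rVert_{max}(1+\constfact)/\constfact$. In the complementary regime $\lVert A(1)\rVert_{max} \geq 1$ the trivial bound $\lVert A(s)\rVert_{max}\leq \lVert A(1)\rVert_{max}$ implies the same estimate. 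Squaring gives $\lVert A(s)\rVert_{max}^2 \in O\big(\lVert A(1)\rVert_{max}^2 (1+\constfact)^2/\constfact^2\big)$.

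Finally I would multiply the three contributions. Combining $\epsilon_0(s) \leq \tfrac52 (T/r)^2 \lVert A(s)\rVert_{max}^2$ with $T^2 \in O((1+\constfact)^4/\constfact^3)$ and $\lVert A(s)\rVert_{max}^2 \in O(\lVert A(1)\rVert_{max}^2(1+\constfact)^2/\constfact^2)$ yields $\epsilon_0(s) \in O\big(\lVert A(1)\rVert_{max}^2 (1+\constfact)^6/(r^2\constfact^5)\big)$, which is exactly the claimed $O\big(r^{-2}(\lVert A(1)\rVert_{max}\,\constfact^{-1}(1+\constfact)^3 \constfact^{-3/2})^2\big)$ after regrouping the powers of $\constfact$. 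The one subtlety worth flagging is uniformity in $s$: each factor above is bounded independently of $s$, so the final estimate holds for every fixed $s\in[0,1]$ as required.
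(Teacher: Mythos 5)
Your proposal is correct and reaches exactly the paper's bound via the same overall skeleton: the single-step error $\tfrac{5}{2}(T/r)^2\lVert A(s)\rVert_{max}^2$ from \cref{lemma:low_rank_sim_error}, the delay-factor bound on $T$ from \cref{lemma:total_evolution_time}, and a uniform-in-$s$ bound on $\lVert A(s)\rVert_{max}$. The one place you genuinely diverge is that middle step. The paper asserts that the ratio $\lVert A(s)\rVert_{max}/\lnorm{A(s)/N}$ is bounded by its value at $s=1$ (justified only by noting that $f_s$ has maximal filling ratio at $s=0$, ``for large $N$''), and separately bounds $\lnorm{A(s)/N}\le 1+\lnorm{A(1)/N}$ via Cauchy--Schwarz; the product of the two gives $\lVert A(s)\rVert_{max}\le \lVert A(1)\rVert_{max}\,(1+\lnorm{A(1)/N})/\lnorm{A(1)/N}$. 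You instead bound $\lVert A(s)\rVert_{max}\le\max(1,\lVert A(1)\rVert_{max})$ directly from the triangle inequality on $f_s=(1-s)f_0+sf_1$ with $|f_0|=1$, and then recover the same expression from the elementary inequality $\lnorm{A(1)/N}=\tfrac{1}{N}\sum_j|f_1(x_j)|^2\le\max_j|f_1(x_j)|^2=\lVert A(1)\rVert_{max}$ together with a short case split on whether $\lVert A(1)\rVert_{max}$ exceeds $1$. Your route is arguably preferable here: it is fully rigorous at every finite $N$ and does not rely on the asserted monotonicity of the discrete filling ratio in $s$, which the paper invokes only asymptotically and does not prove. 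Both routes then square, multiply by $T^2\in O\big((1+\lnorm{A(1)/N})^4/\lnorm{A(1)/N}^3\big)$, and land on the identical final bound.
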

\begin{proof}
As summarised in \cref{statement:low_rank_simulation}, the low-rank simulation approach of ref~\cite{rebentrost2018quantum} for time $\Delta t$ under $S_A$ generates the following dynamics under the partial trace
from Eq.~\eqref{eq:low_rank_partial_tr} as
\begin{align*}
    \sigma' = e^{-i \Delta t A(s)/N} \sigma e^{i \Delta t A(s)/N} + \mathcal{O}(\Delta t^2)
\end{align*}
for any pure initial state $\sigma = \op{\phi}{\phi}$.
In \cref{lemma:low_rank_sim_error} we assumed that measurements are performed on the 
ancilla register and only the $|+^n\rangle$ outcome is accepted after which we discard the ancilla register.
This leads to an error in terms of the deviation between the exact $U'(\Delta t, s):= e^{- \Delta t A(s)/N}$
and obtained unitary mapping $\lnorm{U'(\Delta t,s) - U''(\Delta t,s)} = \epsilon_0(\Delta t,s)$.  
For ease of notation we denote the error $\epsilon_0(s) := \epsilon_0(\Delta t,s)$ as
\begin{align}
    \epsilon_0(s) \le \tfrac{5}{2} \lVert A(s) \rVert_{max}^2 \Delta t^2.
\end{align}
In the direct-state preparation algorithm, we must evolve for a total time $T$, which is constant bounded, as shown in Lemma~\ref{lemma:total_evolution_time}. We do so in $r$ evenly spaced intervals, resulting in $\Delta t = \frac{T}{r}$. As a result,
\begin{align}\label{eqn:epsilon_0_s_bound_intermediate}
    \epsilon_0(s) \le \tfrac{5}{2} \left(\lVert A(s) \rVert_{max} \frac{T}{r}\right)^2.
\end{align}
Noting that $f_s$ has a maximum filling ratio when $s=0$, for large $N$ we have,
\begin{align}\label{eqn:filling_ratio_bound}
    \frac{\lVert A(s) \rVert_{max}}{\lnorm{A(s)/N}} 
    \le 
    \frac{\lVert A(1) \rVert_{max}}{\lnorm{A(1)/N}}.
\end{align}
Observe that
\begin{align}
    \lnorm{A(s)/N}
    &\le 
    (1-s)^2\ip{v_0}{v_0} + s^2\ip{v_1}{v_1} + 2s(1-s)\text{Re}\ip{v_0}{v_1}\\
    &\le
    (1-s)^2 + s^2\frac{\mathcal{N}(1)^2}{N} + 2s(1-s)\frac{\mathcal{N}(1)}{\sqrt{N}}\\
    &=
    \left(
    (1-s) + s \frac{\mathcal{N}(1)}{\sqrt{N}}
    \right)^2\\
    &\le 
    \max\left(1, \frac{\mathcal{N}(1)^2}{N}\right)\\
    &\le\label{eqn:norm_a_s_over_N_bound}
    1+\frac{\mathcal{N}(1)^2}{N},
\end{align}
where the second inequality follows from the application of the Cauchy-Schwarz inequality on $\ip{v_0}{v_1}$. 
Combining Equations~\ref{eqn:epsilon_0_s_bound_intermediate} and \ref{eqn:filling_ratio_bound}, we obtain the bound
\begin{align}
    \epsilon_0(s) &\le \tfrac{5}{2}\left(\frac{\lVert A(1) \rVert_{max}}{\lnorm{A(1)/N}} \lnorm{A(s)/N} \frac{T}{r}\right)^2\\
    &\le
    \tfrac{5}{2}\left(\frac{\lVert A(1) \rVert_{max}}{\lnorm{A(1)/N}} \left(1 + \frac{\mathcal{N}(1)^2}{N} \right)
    \frac{T}{r}\right)^2
\end{align}
where the inequality follows from using Equation~\ref{eqn:norm_a_s_over_N_bound}. Using Lemma~\ref{lemma:total_evolution_time}, we have an upper-bound on the delay factor, and thus an \textit{asymptotic} upper-bound on $T$. To allow treatment as a strict upper-bound, we introduce the constant $k$, such that $T \le 2k\frac{\left(\mathcal{N}(1)^2 + N\right)^2}{\sqrt{N}\mathcal{N}(1)^3}$ (in practice, we find that $k\approx 100$). 
Of course, $\left(1 + \frac{\mathcal{N}(1)^2}{N} \right) = \frac{1}{N}\left(N + \mathcal{N}(1)^2 \right)$.
Then,
\begin{align}
    \epsilon_0(s) 
    &\le
     \frac{10 k^2}{r^2}\left(\frac{\lVert A(1) \rVert_{max}}{\lnorm{A(1)/N}} 
    \frac{1}{N^{3/2}}
    \frac{\left(\mathcal{N}(1)^2 + N\right)^3}{\mathcal{N}(1)^3}
    \right)^2\\
    &=
    \frac{10 k^2}{r^2}\left(\frac{\lVert A(1) \rVert_{max}}{\lnorm{A(1)/N}} 
    \left(\frac{\sqrt{N}}{\mathcal{N}(1)}\right)^3\left(1 + \frac{\mathcal{N}(1)^2}{N} \right)^3
    \right)^2\\
    &=
    \frac{10 k^2}{r^2}\left(\frac{\lVert A(1) \rVert_{max}}{\lnorm{A(1)/N}} 
    \frac{\left(1 + \lnorm{A(1)/N} \right)^3}{\lnorm{A(1)/N}^{3/2}}
    \right)^2,
\end{align}
where we used the fact that $\lnorm{A(1)/N} = \frac{\mathcal{N}(1)^2}{N}$.
\end{proof}

\begin{lemma}\label{lemma:hamiltonian_simulation_query_complexity}
The query complexity of simulating $e^{-i \H(s) \Delta t}$ as per Lemmas~\ref{lemma:low_rank_sim_error} and~\ref{lemma:simulation_algorithmic_error_bound} is simply $4$, i.e. is bounded by $O(1)$, while the total number of qubits required are $2n + d + 2$, and thus the algorithm requires $O(n + d)$ ancillary qubits.
Therefore, the total state preparation algorithm has query complexity $O(r)$.
\end{lemma}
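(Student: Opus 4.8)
The plan is to read off both resource counts directly from the low-rank simulation primitive (\cref{statement:low_rank_simulation}) together with its one-sparse backbone (\cref{lemma:one_sparse_simulation}), and then to propagate the per-step cost through the $r$-fold discretisation of the adiabatic evolution in \cref{eqn:adiabatic_discretization_informal}. There is essentially no hard analysis here; the work is careful bookkeeping of queries and registers.

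First I would settle the per-step query complexity. Each factor $e^{-i\frac{T}{r}\H(\tfrac{j}{r})}$ is precisely the map whose error and success probability were already controlled in \cref{lemma:low_rank_sim_error,lemma:simulation_algorithmic_error_bound}: it is the low-rank simulation of $\H(s) = -A(s)/N$ for time $\Delta t = T/r$. By \cref{statement:low_rank_simulation} this is realised by embedding $A(s)$ into the one-sparse matrix $S_A$ and invoking \cref{lemma:one_sparse_simulation}, which consumes a total of $4$ oracle calls (two to compute the matrix element $[S_A]_{xy}=f_s(x_j)f^*_s(x_k)$ and two to uncompute it). Since this count is a fixed constant with no dependence on $n$, $d$, or $s$, the per-step query complexity is $4 \in O(1)$.

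Second I would count qubits by tracking the registers through the embedding. The state being evolved lives on the $n$-qubit main register; the low-rank construction couples it to an $n$-qubit ancilla initialised in the uniform superposition $\ket{\phi}=\frac{1}{\sqrt{N}}\sum_j \ket{j}$, and $S_A$ acts jointly on these $2n$ qubits. Running the one-sparse procedure of \cref{lemma:one_sparse_simulation} on $S_A$ then needs a $d$-qubit register to hold the matrix element and two single-qubit flag registers (the comparison flag and the equality flag). The apparent need for a further $2n$-qubit register to store the non-zero column index collapses here: for $S_A$ the only non-zero entry in row $(j,k)$ sits in column $(k,j)$, so the column-index oracle $O_f$ is merely an in-place SWAP of the two $n$-qubit blocks and requires no additional storage. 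Summing gives $2n + d + 2$ qubits, hence $O(n+d)$ ancillae. The full algorithm is the product of $r$ such steps as in \cref{eqn:adiabatic_discretization_informal}, so multiplying the constant per-step cost by $r$ yields total query complexity $O(r)$.

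I expect the only genuine subtlety to be exactly this qubit bookkeeping: justifying that the generic $2n$-qubit column-index ancilla of \cref{lemma:one_sparse_simulation} is unnecessary for the structured matrix $S_A$, which is what makes the count tight at $2n+d+2$ rather than the naive $4n+d+2$. One should also be careful that ``query'' refers to calls to the function oracle $O_{f_s}$ (hidden inside the element oracle for $S_A$) and not to the cheap internal comparison, swap, and equality operations, which involve no oracle access. Everything else is a direct citation of the constant query cost of \cref{statement:low_rank_simulation} followed by the trivial multiplication by $r$.
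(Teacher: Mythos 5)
Your proposal is correct and follows essentially the same route as the paper's proof: cite the one-sparse embedding of $A(s)$ into $S_A$ for the constant count of $4$ oracle calls, tally the registers to $2n+d+2$ qubits, and multiply the constant per-step cost by the $r$ steps of the discretised adiabatic evolution. Your explicit justification that the column-index register of \cref{lemma:one_sparse_simulation} needs no extra storage for the structured matrix $S_A$ (since the non-zero entry in row $(j,k)$ sits in column $(k,j)$) is a detail the paper leaves implicit, and it correctly explains why the count is $2n+d+2$ rather than $4n+d+2$.
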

\begin{proof}
As discussed at length, the simulation of $e^{-i \H(s) \Delta t}$ is implemented using the low-rank Hamiltonian simulation technique of ref~\cite{rebentrost2018quantum}, where they embed the $N\times N$ matrix $A_s \equiv \H(s)$ into the one-sparse $N^2 \times N^2$ matrix $S_{A_s}$. The one-sparse matrix can then be simulated as described in Lemma~\ref{lemma:one_sparse_simulation} using a total of 4 calls to the oracles $O_f$ and $O_H$, and with a total of $2n + d + 2$ qubits (i.e. $n + d + 2$ ancillary qubits). 
As the algorithm consists of performing $r$ consecutive such Hamiltonian simulations, the total query complexity is given by $O(r)$.
\end{proof}

\begin{lemma}\label{lemma:joint_error_bound}
The total error of the direct state preparation procedure, as measured by the deviation of the exact unitary evolution $U(T)$ from the worst-case unitary evolution we implement $U''(T)$, is given by
\begin{align}
    \lnorm{U(T) - U''(T)} \le \sqrt{2T(\delta_0 + \delta_1)},
\end{align}
where $\delta_0$ comes from Lemma~\ref{cor:discret_error} and $\delta_1=5/2 \lVert A \rVert_{max}^2\frac{T}{r}$.
\end{lemma}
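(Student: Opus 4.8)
The plan is to recognise this as a direct application of the ``triangle-type'' bound in \cref{lemma:double_deviation_induced_unitary_bound}, once I identify the correct three-Hamiltonian chain along the adiabatic trajectory. I would set $\mathcal{H}(s)$ to be the exact continuous rank-1 Hamiltonian from \cref{eq:param_hamil_def}, which generates the ideal evolution $U(T)$; let $\mathcal{H}'(s) = \mathcal{H}(\lceil sr\rceil/r)$ be its piecewise-constant temporal discretisation, generating $U'(T)$; and let $\mathcal{H}''(s)$ be an \emph{effective} Hamiltonian that generates the unitary $U''(T)$ we actually implement through the low-rank simulation subroutine. The whole statement then reduces to establishing the two uniform Hamiltonian-distance bounds $\lnorm{\mathcal{H}(s) - \mathcal{H}'(s)} \le \delta_0$ and $\lnorm{\mathcal{H}'(s) - \mathcal{H}''(s)} \le \delta_1$, after which \cref{lemma:double_deviation_induced_unitary_bound} delivers $\lnorm{U(T) - U''(T)} \le \sqrt{2T(\delta_0 + \delta_1)}$ immediately.

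The first distance is handed to me directly: \cref{cor:discret_error} bounds the deviation between the exact and discretised Hamiltonians uniformly in $s$ as $\lnorm{\mathcal{H}(s) - \mathcal{H}'(s)} \le \delta_0$, so no further work is needed there. The second distance $\delta_1$ is the crux and requires converting a unitary-level error into a Hamiltonian-level one. From \cref{lemma:low_rank_sim_error}, a single implemented step over time $\Delta t = T/r$ deviates from the ideal step $e^{-i\Delta t \mathcal{H}(s)}$ by $\lnorm{U'(\Delta t,s) - U''(\Delta t,s)} \le \tfrac{5}{2}\lVert A\rVert_{max}^2\,\Delta t^2$; that is, the unitary error has the form $\kappa\,\Delta t^2$ with $\kappa = \tfrac{5}{2}\lVert A\rVert_{max}^2$. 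I then feed this into \cref{corollary:implied_unitary_deviation_bound}, which for the fixed step time $t=\Delta t = T/r$ yields $\lnorm{\mathcal{H}'(s) - \mathcal{H}''(s)} \le \kappa\,\Delta t + O(\Delta t^2) = \tfrac{5}{2}\lVert A\rVert_{max}^2\,\tfrac{T}{r}$, which is exactly $\delta_1$ to leading order in $T/r$. Combining $\delta_0$ and $\delta_1$ through \cref{lemma:double_deviation_induced_unitary_bound} then closes the proof.

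The main obstacle is making the effective Hamiltonian $\mathcal{H}''(s)$ rigorous. The implemented map $U''$ is obtained by measuring and discarding the ancilla register and, because of the residual term $|\mathcal{E}\rangle$ in \cref{lemma:low_rank_sim_error}, is not \emph{exactly} unitary. Since \cref{corollary:implied_unitary_deviation_bound} presupposes a genuine unitary $U''$ admitting a Hermitian generator, I would need to argue that the accepted (post-selected and renormalised) branch of the map is norm-preserving to the order at which we work, so that assigning it an effective generator $\mathcal{H}''$ is legitimate and the $O(\Delta t^2)$ deviation from unitarity is already subsumed into the error terms being tracked. I would also need to confirm two uniformity points so that a \emph{single} pair $(\delta_0,\delta_1)$ bounds \emph{every} timestep: that the $O(\Delta t^2)$ correction in \cref{corollary:implied_unitary_deviation_bound} remains subleading relative to $\delta_1 = O(\Delta t)$ along the whole schedule, and that $\lVert A\rVert_{max}$ may be taken as the worst-case value over $s\in[0,1]$ (attained at the final function, as used in \cref{lemma:simulation_algorithmic_error_bound}). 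These are the places where care, rather than calculation, is required; the remainder is a mechanical chaining of the already-established lemmas.
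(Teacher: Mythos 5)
Your proposal matches the paper's proof essentially step for step: the same three-Hamiltonian chain $\mathcal{H}(s)$, $\mathcal{H}'(s)$, $\mathcal{H}''(s)$, with $\delta_0$ taken from \cref{cor:discret_error}, $\delta_1$ obtained by feeding the per-step unitary error of \cref{lemma:low_rank_sim_error} into \cref{corollary:implied_unitary_deviation_bound} with $\kappa = \tfrac{5}{2}\lVert A\rVert_{max}^2$ and $t = T/r$, and the conclusion via \cref{lemma:double_deviation_induced_unitary_bound}. Your closing remarks on the non-unitarity of the post-selected map and the uniformity of the bounds over the schedule identify genuine points the paper passes over silently, but they do not change the argument.
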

\begin{proof}
First, we allow $\H(s)$ to represent the exact time-dependent Hamiltonian we wish to simulate, $\H'(s)$ the discretized Hamiltonian as per Lemma~\ref{cor:discret_error}, and $\H''(s)$ the discretized Hamiltonian which induces the unitary with algorithmic error obtained via the low-rank simulation technique as per Lemmas~\ref{lemma:low_rank_sim_error} and ~\ref{lemma:simulation_algorithmic_error_bound}. $\H(s)$ and $\H'(s)$ have both already been clearly defined, however, we have never explicitly considered $\H''(s)$ -- instead we have directly treated with the unitary evolution it induces and through this unitary mapping we can define $\H''(s)$ as its generator.  

As shown in~\cref{lemma:low_rank_sim_error} the operator deviation is bounded by,
\begin{align}
    \lnorm{U'(\Delta t) - U''(\Delta t)} \le 5/2 \Delta t^2 \lVert A \rVert_{max}^2.
\end{align}
Using \cref{corollary:implied_unitary_deviation_bound} by observing that $\kappa = 5/2 \lVert A \rVert_{max}^2$ and that $\Delta t = \frac{T}{r}$, we get
\begin{align}
    \lnorm{\H'(s) - \H''(s)} \le 5/2 \lVert A \rVert_{max}^2\frac{T}{r}.
\end{align}
Moreover, from Lemma~\ref{cor:discret_error}, 
\begin{align}
    \lVert \mathcal{H}(s) - \mathcal{H}'(s) \rVert_{2}
    &\leq
	\frac{2}{r}\left(1 + 3\frac{\mathcal{N}(1)}{\sqrt{N}} + 2\frac{\mathcal{N}(1)^2}{N} \right) = \delta_0.
\end{align}
As a result, Lemma~\ref{lemma:double_deviation_induced_unitary_bound} gives the bound
\begin{align}
    \lnorm{\H(s) - \H''(s)} \le \delta_0 + \delta_1.
\end{align}
Thus, Lemma~\ref{lemma:implied_unitary_deviation_bound} gives,
\begin{align}
    \lnorm{U(T) - U''(T)} \le \sqrt{2T(\delta_0 + \delta_1)}.
\end{align}
\end{proof}

\subsection{Asymptotic limits \label{sec:asymptotic}}

Above we have derived our bounds in terms of two quantities as the matrix norms 
$\lnorm{A(1)/N} = \mathcal{N}(1)^2/N$ and $\lVert A(1) \rVert_{max}$. 
We prove in \cref{sec:further_asymptotic} that both these quantities naturally admit
asymptotic limits for $N \rightarrow \infty$ for any continuous function. Furthermore,
we also prove that Lipschitz continuous functions---which cover nearly all instances in practice---approach
these limits in exponential order in the number of qubits $n$ as illustrated in \cref{fig:matrix_norms} and in \cref{fig:matrix_norms2}.
We apply these results and establish asymptotic limits of our error bounds.

\begin{figure*}[tb]
	\begin{centering}
		\includegraphics[width=\textwidth]{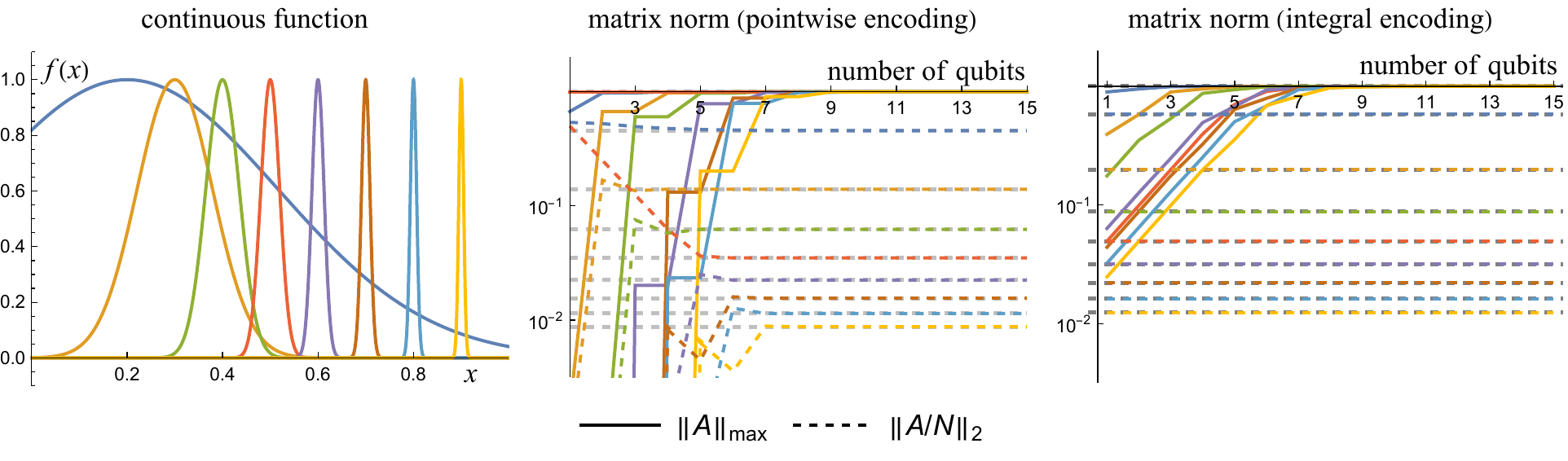}
		\caption{
			(left) Encoding a Gaussian function into our rank-1 matrix $A \propto | \psi \rangle \langle \psi |$
			from \cref{eq:param_hamil_def} at $s=1$.
			(middle-right)
			The matrix norms 
$\lnorm{A(1)/N} = \mathcal{N}(1)^2/N$ and $\lVert A(1) \rVert_{max}$ (solid and dashed colored lines) determine the simulation error of the 1-sparse encoding in case of (middle) the integral  and (right) pointwise encoding as we proved in \cref{sec:non-asymptotic}. 			Both matrix norms converge to a constant value (dashed grey lines) asymptotically in $n$ where the constants are given by function norms of $f$ as established in Lemmas~\ref{lemma:amatr_pointwise}-\ref{lemma:amatrix_integral} -- and the convergence rate is exponential given Lipschitz continuity of the functions. Colour coding represents the width of the Gaussian function and thus a decreasing function norm.
The asymptotic runtime of our state-preparation only depends on the ratio of these two matrix norms and thus depends only on the absolute area occupied by the function (relative to its maximum value) as captured by our filling ratio $\mathcal{F}$ which we illustrate in \cref{fig:matrix_norms2}.			\label{fig:matrix_norms}
		}
	\end{centering}
\end{figure*}

\begin{figure*}[tb]
	\begin{centering}
		\includegraphics[width=\textwidth]{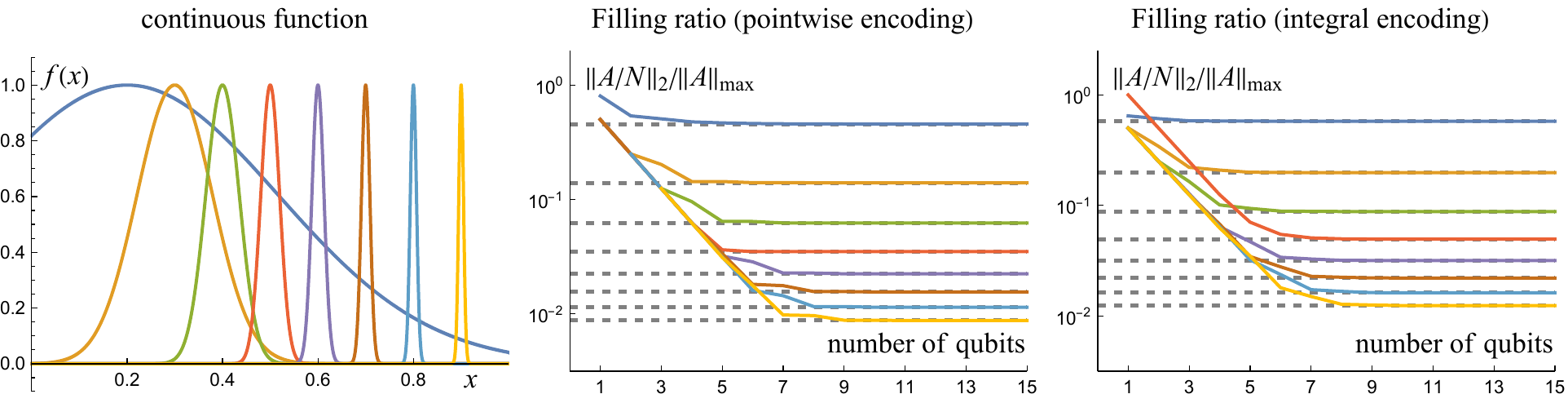}
		\caption{
			Same as in \cref{fig:matrix_norms2} but showing the ratio of the two matrix
			norms which asymptotically approaches the respective filling ratios.	\label{fig:matrix_norms2}
		}
	\end{centering}
\end{figure*}

\begin{remark}[Growth rate of normalisation]\label{remark:asymptotically_constant_ratio}
	For any fixed continuous function $f_1$ we obtain the scaling for our pointswise encoding $\mathcal{N}(1) \in O(\sqrt{N})$ in the asymptotic limit of big $N$.
	More generally we can state $\mathcal{N}(1)/\sqrt{N} \in \Theta(1)$, which
	immediately follows from the fact that the normalisation
	constant asymptotically approaches the Riemann integral $\mathcal{N}(1)^2/N \propto \lVert f_1 \rVert^2_2$ as per \cref{lemma:riemann_sum}, \cref{lemma:amatr_pointwise} and \cref{lemma:amatrix_integral}.
\end{remark}
These asymptotic limits are illustrated in \cref{fig:matrix_norms}.
Intuitively, each $\epsilon$ neighbourhood around any $f_1(x)$ is asymptotically constant and so the summation for $\mathcal{N}(1)$ is a sum of $O(N)$ constants and therefore taking the square root results in a quantity which clearly grows as $O(\sqrt{N})$.
As a consequence of this asymptotic limit, we are guaranteed that the total adiabatic evolution time from
\cref{lemma:total_evolution_time} is $T \in \Theta (1)$, which property we exploit in the following.

\begin{corollary}[Asymptotic simulation error]\label{lemma:simulation_algorithmic_error_bound_asymp}
	By Lemma~\ref{lemma:simulation_algorithmic_error_bound}, if we re-scale $f_1$ as per Remark~\ref{remark:easy_rescaling_remark}, the algorithmic simulation error from simulating $\H(s)$ for time $T/r$ using the low-rank Hamiltonian simulation technique as per Statement~\ref{statement:low_rank_simulation} is bounded by $
		\epsilon_0(s) \in O  (\lVert A(1) \rVert_{max}^2 / r^2  )
	$.
	In the limit of big $N$, this may equivalently be written in terms of the filling ratio $\fillr$ from \cref{def:peakedness}
	as
	\begin{align*}
		\epsilon_0(s) \in O\left(\frac{\mathcal{F}^p}{r^2}\right)
	\end{align*}
	where $p=-4$ in the point-wise sampling case, and $p=-2$ in the integration case
	via \cref{lemma:amatr_pointwise} and \cref{lemma:amatrix_integral}.
\end{corollary}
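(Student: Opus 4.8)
The plan is to reduce the general bound of \cref{lemma:simulation_algorithmic_error_bound} to a statement purely about $\lVert A(1) \rVert_{max}$ by exploiting the freedom to rescale $f_1$, and then to translate that matrix norm into the filling ratio $\fillr$ using the asymptotic identities for $\lVert A(1) \rVert_{max}$ and $\mathcal{N}(1)^2/N$. First I would invoke \cref{remark:easy_rescaling_remark} to rescale $f_1$ so that $\mathcal{N}(1) = \sqrt{N}$, which is equivalent to $\lnorm{A(1)/N} = \mathcal{N}(1)^2/N = 1$. Substituting this single value into the bound of \cref{lemma:simulation_algorithmic_error_bound} collapses every factor of $\lnorm{A(1)/N}$ and $(1 + \lnorm{A(1)/N})^3$ to an absolute numerical constant (here $(1+1)^3 = 8$), which is absorbed into the $O(\cdot)$. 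This leaves immediately the first claimed form $\epsilon_0(s) \in O(\lVert A(1) \rVert_{max}^2 / r^2)$.

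The remaining work is to express $\lVert A(1) \rVert_{max}$, after rescaling, in terms of $\fillr$, and this is where the two encodings diverge. Here I would appeal to the asymptotic matrix-norm identities: in the pointwise case \cref{lemma:amatr_pointwise} gives $\lVert A(1) \rVert_{max} \to \lVert f_1 \rVert_{max}^2$ directly, since $[A(1)]_{kl} = f_1(x_k) f_1^*(x_l)$ and the discrete maximum converges to the continuous one, whereas in the integral case \cref{lemma:amatrix_integral} gives $\lVert A(1) \rVert_{max} \to (b-a)\lVert f_1 \rVert_{max}$, linear rather than quadratic in the maximum because the piecewise integration carries a factor $\Delta_N = (b-a)/N$ that cancels the explicit $N$ in $[A(1)]_{kl} = N g_1(x_k) g_1(x_l)$. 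This difference between a squared and a linear dependence on $\lVert f_1 \rVert_{max}$ is precisely what will produce $p = -4$ versus $p = -2$.

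To finish, I would use the scale-invariance of $\fillr = \lonenorm{f_1}/\lVert f_1 \rVert_{max}$ together with the normalisation pinned down by the rescaling. In the integral case the condition $\lnorm{A(1)/N} = 1$ reads $\lonenorm{f_1} = 1$ (the encoded density is $L^1$-normalised), so $\lVert f_1 \rVert_{max} = \lonenorm{f_1}/\fillr = \fillr^{-1}$, and hence $\lVert A(1) \rVert_{max}^2 \in O(\fillr^{-2})$. In the pointwise case the rescaling instead fixes the $L^2$ norm, since $\mathcal{N}(1)^2/N \to \lVert f_1 \rVert_2^2/(b-a) = 1$ by \cref{lemma:riemann_sum}; I would then apply Cauchy--Schwarz, $\lonenorm{f_1} \le (b-a)^{1/2}\lVert f_1 \rVert_2 = b-a$, to conclude $\lonenorm{f_1} \in O(1)$, whence $\lVert f_1 \rVert_{max} = \lonenorm{f_1}/\fillr \in O(\fillr^{-1})$ and $\lVert A(1) \rVert_{max}^2 = \lVert f_1 \rVert_{max}^4 \in O(\fillr^{-4})$. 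Substituting each case into the first form yields the advertised $\epsilon_0(s) \in O(\fillr^p/r^2)$.

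I expect the main obstacle to be the bookkeeping of the two competing normalisations: the filling ratio mixes the $L^1$ and $L^\infty$ norms, while the rescaling fixes an $L^2$ quantity in the pointwise encoding but an $L^1$ quantity in the integral encoding. The delicate point is showing that $\lonenorm{f_1}$ remains an asymptotic constant after rescaling in the pointwise case, so that the scale-invariant $\fillr$ can be inserted cleanly, and verifying that all domain-length factors $(b-a)$ enter only as constants independent of $N$; both of these rest on the Riemann-sum convergence of \cref{lemma:riemann_sum} and on the asymptotic forms established in \cref{lemma:amatr_pointwise} and \cref{lemma:amatrix_integral}.
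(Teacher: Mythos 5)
Your proposal is correct and follows essentially the same route as the paper: rescale so that $\lnorm{A(1)/N}=1$, collapse the bound of \cref{lemma:simulation_algorithmic_error_bound} to $O(\lVert A(1)\rVert_{max}^2/r^2)$, and then convert $\lVert A(1)\rVert_{max}$ into the filling ratio via \cref{lemma:amatr_pointwise} and \cref{lemma:amatrix_integral}. The only difference is that you re-derive the bound $\lVert A(1)\rVert_{max}/\lnorm{A(1)/N}\leq\fillr^{-2}$ (respectively $(b-a)\fillr^{-1}$) from the component norm identities and Cauchy--Schwarz, whereas those ratio bounds are already stated as conclusions of the two cited lemmas and can be invoked directly.
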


Using the low-rank simulation approach has two significant advantages.
First, as the matrix entries of $A$ are computed using arithmetic operations, it is beneficial that the entries have asymptotically constant magnitudes (i.e. $[A]_{kl} \in \mathcal{O}(N^0)$). Thus, we can use a fixed resolution in encoding the matrix entries $A_{kl}$ as integers, whereas the desired quantum state might have exponentially small amplitudes.
Second, the fact that our simulation actually simulates $A/N$ instead of just $A$ is actually advantageous, as the simulation of $A/N$ automatically absorbs the exponentially decreasing (in terms of the number of qubits) absolute values of the amplitudes of the quantum states (that encode continuous function) --  which we prove in \cref{prop:pointwise_amplitude} and in \cref{prop:integral_amplitude}. In other approaches, encoding the wave function may necessitate using exponentially decreasing rotation angles, but here this is entirely avoided.

\begin{lemma}[Total success probability]\label{cor:prob_failure_bound_asympt}
	The success probability in \cref{lemma:prob_failure_bound} admits the asymptotic limit
	\begin{equation*}
		\mathrm{prob}_{tot} \geq 1- O(\mathcal{F}^p/r),
	\end{equation*}
	where $p=-4$ in the point-wise sampling case, and $p=-2$ in the integration case
	via \cref{lemma:amatr_pointwise} and \cref{lemma:amatrix_integral}.
\end{lemma}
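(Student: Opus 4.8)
The plan is to treat this statement as a direct asymptotic corollary of the non-asymptotic bound already proven in \cref{lemma:prob_failure_bound}, in exact analogy with how \cref{lemma:simulation_algorithmic_error_bound_asymp} was obtained from \cref{lemma:simulation_algorithmic_error_bound}. Concretely, \cref{lemma:prob_failure_bound} gives $\mathrm{prob}_{tot} \ge 1 - T^2 \lVert A \rVert_{max}^2 / r + \mathcal{O}(T^4/r^2)$, so all that remains is to feed two asymptotic facts into this inequality: that the total evolution time $T$ is a constant, and that the squared maximal matrix entry $\lVert A(1)\rVert_{max}^2$ can be re-expressed in terms of the filling ratio $\fillr$.

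First I would re-scale $f_1$ as permitted by \cref{remark:easy_rescaling_remark} so that $\mathcal{N}(1) = \sqrt{N}$; by the corollary to \cref{lemma:total_evolution_time} this forces $T \in O(1)$, and hence $T^2 \in O(1)$ and $\mathcal{O}(T^4/r^2) \subseteq \mathcal{O}(1/r^2)$. Dropping the latter subleading term then leaves $\mathrm{prob}_{tot} \ge 1 - O(\lVert A(1)\rVert_{max}^2/r)$, where I would also note --- exactly as in the proof of \cref{lemma:simulation_algorithmic_error_bound} --- that asymptotically the largest matrix entry over the whole discretized schedule occurs at $s=1$, so replacing $\lVert A(s)\rVert_{max}$ by $\lVert A(1)\rVert_{max}$ is legitimate up to constants. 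Finally I would invoke the conversion established in \cref{lemma:amatr_pointwise} and \cref{lemma:amatrix_integral}, namely $\lVert A(1)\rVert_{max}^2 \in O(\fillr^{p})$ with $p=-4$ in the point-wise case and $p=-2$ in the integral case (the identical relation used in \cref{lemma:simulation_algorithmic_error_bound_asymp}, since the failure probability and the simulation error both scale with the same power $\lVert A \rVert_{max}^2$). Substituting this yields $\mathrm{prob}_{tot} \ge 1 - O(\fillr^{p}/r)$, as claimed.

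Since every ingredient is already available, there is no substantial new obstacle; the only points requiring care are bookkeeping ones. The first is ensuring the $\lVert A \rVert_{max}$ appearing in \cref{lemma:prob_failure_bound} is interpreted as the worst case $\max_s \lVert A(s)\rVert_{max}$ over the schedule and that this is indeed $O(\lVert A(1)\rVert_{max})$ after rescaling --- this reuses the same filling-ratio monotonicity observation (that $f_s$ has maximal filling ratio at $s=0$) already exploited in \cref{lemma:simulation_algorithmic_error_bound}. The second is confirming that the exponent $p$ carries over unchanged: because both the per-step failure term and the per-step algorithmic error are proportional to $\lVert A \rVert_{max}^2$, the same $p=-4$ (point-wise) and $p=-2$ (integral) that appear in \cref{lemma:simulation_algorithmic_error_bound_asymp} apply here verbatim. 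The genuine analytic content --- the Riemann-sum limits relating $\lVert A(1)\rVert_{max}$ and $\mathcal{N}(1)$ to the continuous function norms, and thence to $\fillr$ --- is entirely deferred to \cref{lemma:amatr_pointwise} and \cref{lemma:amatrix_integral}, so I would not reprove it within this lemma.
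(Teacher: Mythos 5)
Your proposal is correct and follows essentially the same route as the paper, whose proof is a one-line substitution of the asymptotic limits from \cref{lemma:amatr_pointwise} and \cref{lemma:amatrix_integral} into the bound of \cref{lemma:prob_failure_bound}; you simply spell out the intermediate bookkeeping (constancy of $T$ after rescaling and the identification $\lVert A(1)\rVert_{max}^2 \in O(\fillr^{p})$) that the paper leaves implicit.
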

\begin{proof}
	We obtain the asymptotic bounds by substituting
	limits from \cref{lemma:amatr_pointwise} and \cref{lemma:amatrix_integral}
	into our bounds in \cref{lemma:prob_failure_bound}.
\end{proof}

\begin{corollary}[Total simulation error]\label{cor:total_error_deviaton}
	By Lemma~\ref{lemma:joint_error_bound} and Remark~\ref{remark:asymptotically_constant_ratio}, the total error of the direct state preparation procedure, as measured by the deviation of the exact unitary evolution $U(T)$ from the worst-case unitary evolution we implement $U''(T)$, is given by,
	\begin{align}
		\lnorm{U(T) - U''(T)} \in O\left( 
		\sqrt{\frac{\mathcal{F}^p}{r}}
		\right),
	\end{align}
	in the limit of big $N$ via \cref{lemma:amatr_pointwise} and \cref{lemma:amatrix_integral}..
\end{corollary}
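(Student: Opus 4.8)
The plan is to assemble the final bound by feeding the large-$N$ scalings of the two matrix norms into the finite-$n$ estimate already established in \cref{lemma:joint_error_bound}, namely $\lnorm{U(T) - U''(T)} \le \sqrt{2T(\delta_0 + \delta_1)}$, where $\delta_0$ is the adiabatic discretisation bound of \cref{cor:discret_error} and $\delta_1 = \tfrac{5}{2}\lVert A \rVert_{max}^2\, T/r$ is the low-rank simulation bound. This corollary is essentially a bookkeeping step: every ingredient is already proven, and the only work is tracking how each factor scales as $N \to \infty$.

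First I would invoke \cref{remark:easy_rescaling_remark} to rescale $f_1$ so that $\mathcal{N}(1)/\sqrt{N} = 1$. Under this normalisation the corollary to \cref{lemma:total_evolution_time} fixes $T \in O(1)$, and the corollary to \cref{cor:discret_error} fixes $\delta_0 \le 8/r \in O(1/r)$. Next I would substitute the asymptotic limit of the remaining matrix norm: by \cref{lemma:amatr_pointwise,lemma:amatrix_integral} the quantity $\lVert A(1)\rVert_{max}^2$ converges to a constant scaling as $\fillr^{p}$, with $p=-4$ in the pointwise case and $p=-2$ in the integral case; this is exactly the reduction already carried out for the per-step error in \cref{lemma:simulation_algorithmic_error_bound_asymp}. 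Hence $\delta_1 = \tfrac{5}{2}\lVert A(1)\rVert_{max}^2\, T/r \in O(\fillr^{p}/r)$.

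The one genuinely conceptual point is recognising which error contribution dominates. Since the filling ratio obeys $0 < \fillr \le 1$ and $p < 0$, we have $\fillr^{p} \ge 1$, so the simulation error $\delta_1 \in O(\fillr^{p}/r)$ dominates the discretisation error $\delta_0 \in O(1/r)$, giving $\delta_0 + \delta_1 \in O(\fillr^{p}/r)$. Combining this with $T\in O(1)$ and taking the square root yields $\lnorm{U(T) - U''(T)} \le \sqrt{2T(\delta_0+\delta_1)} \in O\!\left(\sqrt{\fillr^{p}/r}\right)$, as claimed.

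I would expect the only subtlety to lie in verifying that the asymptotic constants of \cref{lemma:amatr_pointwise,lemma:amatrix_integral}, which express $\lVert A(1)\rVert_{max}$ and $\lnorm{A(1)/N}$ through function norms of $f_1$, indeed collapse to powers of $\fillr = \lVert f_1\rVert_1/\lVert f_1\rVert_{max}$ once the rescaling $\mathcal{N}(1)=\sqrt{N}$ is applied consistently to \emph{both} norms; this is precisely where the pointwise versus integral distinction produces the differing exponents $p=-4$ and $p=-2$. Everything else is routine $O$-arithmetic on the single inequality of \cref{lemma:joint_error_bound}.
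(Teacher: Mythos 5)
Your proposal is correct and follows essentially the same route as the paper's proof: substitute $T\in O(1)$, $\delta_0\in O(1/r)$, and $\delta_1\in O(\fillr^{p}/r)$ into the bound of \cref{lemma:joint_error_bound}, then use $\fillr^{p}\ge 1$ to absorb the discretisation term into the simulation term before taking the square root. The paper phrases this last step as $1+\fillr^{p}\le 2\fillr^{p}$, which is exactly your dominance argument.
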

\begin{proof}
	Taking the limit of the bound in \cref{lemma:joint_error_bound} for big $N$, assuming a normalized $f_1$ (although this assumption is not necessary), $T\in O(1)$, $\delta_0 \in \tfrac{1}{r}$, $\epsilon_0(s) \in O(\frac{\mathcal{F}^p}{r^2})$, we get the asymptotic upper bound
	\begin{align}
		\lnorm{U(T) - U''(T)} \in O\left( 
		\sqrt{\frac{1 + \mathcal{F}^p}{r}}
		\right).
	\end{align}
	Noting that $\mathcal{F}^p \ge 1$, $1 + \mathcal{F}^p \le 2 \mathcal{F}^p$, and so our asymptotic statement simplifies to,
	\begin{align}
		\lnorm{U(T) - U''(T)} \in O\left( 
		\sqrt{\frac{\mathcal{F}^p}{r}}
		\right).
	\end{align}
	It is worth explicitly noting that this asymptotic bound also holds when $f_1$ is not normalized. If $f_1$ is not normalized, the ratio $\mathcal{N}(1)^2/N$ is still asymptotically $O(1)$, however it may be a large constant, depending on how ``unnormalized'' $f_1$ is. As a result, this will be off by at most a constant factor, and so the asymptotic bound remains unchanged. 
\end{proof}

\begin{remark}[When adiabatic error is negligible]
	In practical applications (especially for small filling ratios $\fillr$)  the total error
	is not dominated by the error of the discretised adiabatic evolution but rather by the low-rank simulation from \cref{lemma:simulation_algorithmic_error_bound} for which we incur an error at each small timestep $\Delta t$ asymptotically via 
	\cref{lemma:simulation_algorithmic_error_bound_asymp} as $ O(  \mathcal{F}^p/ r^2 )$.
	Thus neglecting the adiabatic simulation error and applying the triangle inequality $r$ times consecutively
	(given $\epsilon_0(s)$ is a bound on the unitary operator distance) we obtain the bound
	\begin{align*}
		\lnorm{U(T) - U''(T)}  \, \lessapprox \, O\left(\frac{\mathcal{F}^{p}}{r}\right),
	\end{align*}
	where $p=-4$ in the point-wise sampling case, and $p=-2$ in the integration case.
	This scaling is tighter in $r$ but looser in the filling ratio $\fillr$ than in \cref{cor:total_error_deviaton}.
\end{remark}

\subsection{Generalisation to arbitrary functions (mappings)\label{sec:generalisation}}

So far we have considered continuous functions $f_s(x)$ and discretised them to obtain our desired normalised quantum state from \cref{eq:encoding}. We proved in \cref{sec:non-asymptotic} that for any finite qubit count our approach allows us to prepare our final desired state with complexity that depends only on the two matrix norms
$\lnorm{A(1)/N} = \mathcal{N}(1)^2/N$ and $\lVert A(1) \rVert_{max}$, in fact on the ratio of these two.
We did not actually  have to invoke the property that the function is continuous thus far. In fact we invoked continuity because it guarantees that asymptotically in $N$ the matrix norm $\lVert A(1) \rVert_{max} \rightarrow \max_{x}|f(x)|^2$  is approached via the boundedness theorem and that 
$\lVert A/N \rVert_{2}
\rightarrow (b-a) \lVert f \rVert_2^2 $ due to Riemann integrability
as we prove in \cref{lemma:amatr_pointwise}. For this reason we do not actually need to necessary assume continuity of the function as we explain now.

In fact our proofs in \cref{sec:non-asymptotic} are applicable to the case of \emph{any discrete mapping}
as $F^{(n)}: \{0,1\}^n \mapsto \mathbb{C}$ in the sense that we can efficiently prepare \textit{any} $n$-qubit quantum state of the form
\begin{equation}\label{eq:general_encode}
    \ket{\psi}_n = (\mathcal{N}_N)^{-1}  \sum_{j\in \{0,1\}^n}    F^{(n)}(j) \ket{j}_n,
\end{equation}
given the following conditions are satisfied.

\begin{theorem}
Given an arbitrary efficiently computable mapping
$F^{(n)}: \{0,1\}^n \mapsto \mathbb{C}$ that depends on the qubit count $n$
with absolute maximum value $\lVert F^{(n)} \rVert_{max} = \max_{k \in \{0,1\}^n} |F^{(n)}(k)|$ 
and vector norm $\mathcal{N}^2 = \sum_{k \in \{0,1\}^n} |F^{(n)}(k)|^2$
we can prepare the
quantum state in \cref{eq:general_encode} efficiently
using our adiabatic approach given that $\lVert F^{(n)} \rVert_{max} \in O(1)$ is asymptotically upper bounded that $\mathcal{N}/\sqrt{N} \in \Theta(1)$ is asymptotically both upper and lower bounded.
In direct analogy with \cref{theorem:total_query_complexity}
the query complexity then depends on the generalised filling ratio $r \in O(   \tilde{\fillr}^{-4} /\epsilon^2 )$.
\end{theorem}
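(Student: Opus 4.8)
The plan is to observe that nothing in the finite-qubit analysis of \cref{sec:non-asymptotic} ever used continuity of $f$; it is entirely a statement about a rank-one matrix built from a list of $N$ complex numbers. First I would reproduce the construction of \cref{eq:param_hamil_def} for the mapping: set $F^{(n)}_0 \propto (\pm1\pm i)/\sqrt2$ (an easy initial state with the phase fixed by the verification protocol), $F^{(n)}_1 := F^{(n)}$, interpolate $F^{(n)}_s := (1-s)F^{(n)}_0 + s F^{(n)}_1$, and encode it in the rank-one $A(s)$ with $[A(s)]_{kl} := F^{(n)}_s(k)\,\big(F^{(n)}_s(l)\big)^{*}$ and $\H(s) := -A(s)/N$. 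The same piecewise-constant evolution \cref{eqn:adiabatic_discretization_informal} is then applied to $\ket{+}^{\otimes n}$. Since \cref{lemma:normalisation,lemma:total_evolution_time,cor:discret_error,lemma:low_rank_sim_error,lemma:prob_failure_bound,lemma:simulation_algorithmic_error_bound,lemma:hamiltonian_simulation_query_complexity,lemma:joint_error_bound} are each stated purely in terms of the two scalars $\lnorm{A(1)/N} = \mathcal{N}^2/N$ and $\lVert A(1)\rVert_{max} = \lVert F^{(n)}\rVert_{max}^2$, every one of those bounds carries over verbatim with $f$ replaced by $F^{(n)}$.

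The next step is to identify precisely where continuity was actually invoked. It entered only in \cref{sec:asymptotic}, through \cref{lemma:amatr_pointwise,lemma:amatrix_integral}, to guarantee that the two scalars above converge to constants as $N\to\infty$ (the boundedness theorem supplying $\lVert A(1)\rVert_{max}\to\max_x|f(x)|^2$ and Riemann integrability supplying $\mathcal{N}^2/N\to\lVert f\rVert_2^2$). I would therefore simply substitute the two hypotheses of the theorem for this limiting input: $\lVert F^{(n)}\rVert_{max}\in O(1)$ furnishes the upper bound on $\lVert A(1)\rVert_{max}$ controlling the low-rank simulation error, while the \emph{two-sided} bound $\mathcal{N}/\sqrt N \in \Theta(1)$ furnishes both the lower bound that keeps the spectral gap constant in \cref{lemma:normalisation} --- hence $T\in O(1)$ through \cref{lemma:total_evolution_time} --- and the accompanying control of the discretisation error in \cref{cor:discret_error}. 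With $T$ constant and the per-step error bounded, \cref{lemma:joint_error_bound} reproduces the total-error statement of \cref{cor:total_error_deviaton}.

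I would then define the generalised filling ratio as the exact discrete analogue of \cref{def:peakedness}, $\tilde{\fillr} := (\mathcal{N}/\sqrt N)/\lVert F^{(n)}\rVert_{max}$, which satisfies $0<\tilde{\fillr}\le1$ and, under the two hypotheses, is bounded below by a positive constant. The only function-dependent quantity entering the error is the ratio $\lVert A(1)\rVert_{max}/\lnorm{A(1)/N} = \lVert F^{(n)}\rVert_{max}^2/(\mathcal{N}^2/N) = \tilde{\fillr}^{-2}$, so that the per-step bound of \cref{lemma:simulation_algorithmic_error_bound} collapses to $O(\tilde{\fillr}^{-4}/r^2)$ once the $\Theta(1)$ prefactor $(1+\mathcal{N}^2/N)^3/(\mathcal{N}^2/N)^{3/2}$ is absorbed. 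Note that the generalised ratio is naturally mean-square rather than $L^1$, which is exactly why the exponent $-4$ appears directly here without the $L^1$-to-$L^2$ conversion that \cref{lemma:amatr_pointwise} performs in the continuous case. Combining with \cref{cor:total_error_deviaton}, reaching spectral error $\epsilon$ requires $r\in O(\tilde{\fillr}^{-4}/\epsilon^2)$; the query complexity is then $O(r)$ by \cref{lemma:hamiltonian_simulation_query_complexity}, the ancilla count is $O(n+d)$, and the failure probability is $O(\epsilon^2)$ by \cref{lemma:prob_failure_bound}, in exact parallel with \cref{theorem:total_query_complexity}.

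The main obstacle is the verification itself rather than any single computation: one must check line by line that continuity truly never enters \cref{sec:non-asymptotic} --- in particular that the Cauchy--Schwarz and triangle-inequality steps bounding $\lnorm{A(s)/N}$, $\lVert X_1\rVert_2$ and $\lVert v'\rVert_2$ rely only on the finite sums defining $\mathcal{N}$ and never on a limiting integral --- and that the two-sided $\Theta(1)$ assumption on $\mathcal{N}/\sqrt N$ is the minimal admissible replacement, the lower bound being indispensable because it alone prevents the spectral gap from collapsing and the delay factor from diverging. The unstructured-search instance, where $\lVert F^{(n)}\rVert_{max}=1$ but $\mathcal{N}/\sqrt N = 2^{-n/2}$ so that $\tilde{\fillr}=2^{-n/2}$ is non-constant and the query count degrades to $O(2^{2n})$, confirms that this lower bound cannot be waived.
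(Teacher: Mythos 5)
Your proposal is correct and follows essentially the same route as the paper: re-use the finite-$n$ lemmas of \cref{sec:non-asymptotic} unchanged (observing they depend only on $\lnorm{A(1)/N}=\mathcal{N}^2/N$ and $\lVert A(1)\rVert_{max}$), replace the continuity-derived asymptotic limits by the hypotheses $\lVert F^{(n)}\rVert_{max}\in O(1)$ and $\mathcal{N}/\sqrt{N}\in\Theta(1)$, and read off the query complexity through the generalised filling ratio defined as (an asymptotic bound on) $\mathcal{N}/(\sqrt{N}\,\lVert F^{(n)}\rVert_{max})$. Your added observations --- that the two-sided bound on $\mathcal{N}/\sqrt{N}$ is what keeps the gap and delay factor under control, and that the unstructured-search instance shows the lower bound cannot be dropped --- match the paper's own discussion.
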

\begin{proof}
We define the pointwise encoded matrix $[A(s)]_{kl}:= F_s(k) F^*_s(l)$ following \cref{eq:param_hamil_def} with the parametric mapping $F_s :=  (1-s) F_0
    + s F_1$ where $F_0$ is the same trivial function with a properly chosen phase as in \cref{sec:non-asymptotic} and $F_1$ is our desired mapping  $F_1 := F^{(n)}$ where we will  drop the dependence on $n$ for ease of notation. We proved in \cref{lemma:total_evolution_time} that the total adiabatic evolution time is bounded as
    $T\in O\left(\frac{\left(\mathcal{N}(1)^2/N + 1\right)^2}{\left(\mathcal{N}(1)/\sqrt{N}\right)^3} \right)$
    and in \cref{lemma:simulation_algorithmic_error_bound} that the simulation error is bounded as
    \begin{align*}
    \epsilon_0(s) 
    &\in 
    O\left(\frac{1}{r^2}\left(\frac{\lVert A(1) \rVert_{max}}{\lnorm{A(1)/N}} 
    \frac{\left(1 + \lnorm{A(1)/N} \right)^3}{\lnorm{A(1)/N}^{3/2}}
    \right)^2\right).
\end{align*}
for any finite qubit count. As such, given the matrix norms $\lnorm{A(1)/N} = \mathcal{N}^2(1)/N$ and $\lVert A(1) \rVert_{max} = \lVert F_1 \rVert_{max}$ are asymptotically in $n$ both lower and upper bounded as $\mathcal{N}(1)/\sqrt{N}\in \Theta(1)$ and upper bounded $\lVert F_1 \rVert_{max} \in O(1)$ we can guarantee a bounded total evolution time
$T \in O(1)$ and that the simulation error is bounded as $\epsilon_0 \in O(1/r^2)$. 
Indeed in the special case when  $F(j) = f(x_j)$, $F(j)$ is just a discretisation of a continuous function, and the  norms are convergent for $n\rightarrow \infty$ as
 $\lVert A(1) \rVert_{max} \rightarrow \max_{x}|f(x)|^2$  is approached via the boundedness theorem and that 
$\mathcal{N}(1)/\sqrt{N}
\rightarrow \sqrt{(b-a)} \lVert f \rVert_2 $ due to Riemann integrability
as we prove in \cref{lemma:amatr_pointwise}. In contrast,  our more general asymptotic conditions do not require convergence
and it suffices that there exist constants such that the $n$-dependent norms
$\mathcal{N}(1)/\sqrt{N}$ and $\lVert F_1 \rVert_{max}$ are asymptotically bounded.
Clearly, in analogy with the filling ratio in \cref{theorem:total_query_complexity} in the more
general case the query complexity depends on the  generalised filling ratio $r \in O(   \tilde{\fillr}^{-4} /\epsilon^2 )$ which we can define as an asymptotic bound on the ratio of norms as
\begin{equation}
\frac{   \mathcal{N}(1)  } {   \sqrt{N}\lVert F_1 \rVert_{max}     } \in O( \tilde{\fillr}  ).
\end{equation}
\end{proof}

Let us illustrate the above result on two examples.

\noindent \textbf{Example:} Let us consider a mapping  $F^{(n)}: \{0,1\}^n \mapsto [-1,1]$ that takes a binary integer $j \in \{0,1\}^n$ and uses it as a seed to efficiently compute a sample $F^{(n)}(j) = x_j$ from a random distribution $X$ such that the mean is $\mu(X) =0$.  
By definition this mapping is bounded $\lVert F^{(n)} \rVert_{max} \leq 1$ and we can see that the norm converges to the asymptotic constant  \begin{equation*}
    \lim_{n \rightarrow \infty}\mathcal{N}^2/N =
    \lim_{n \rightarrow \infty}\frac{1}{N}\sum_{k \in \{0,1\}^n} x_k^2
    = \mathrm{Var}[X].
\end{equation*}
As such, we can efficiently prepare an exponentially large list of $N$ random numbers of zero mean with complexity determined by the variance via the generalised filling ratio $\tilde{\fillr} = \mathrm{Var}[X]$.

\noindent
\textbf{Counterexample:}
Let us consider the unstructured search problem, where we have a function $F^{(n)}: \{0,1\}^n \mapsto \{0,1\}$  such that there is only one marked input $m$ for which $F^{(n)}(m) = 1$ and for all other inputs $F^{(n)}(j) =0$ with $j\neq m$.
By definition this mapping is bounded $\lVert F^{(n)} \rVert_{max} = 1$ and we can straightforwardly evaluate the normalisation as
\begin{equation*}
\mathcal{N}^2/N = \frac{1}{N}\sum_{k \in \{0,1\}^n} |F^{(n)}(k)|^2
=
\frac{1}{N}.
\end{equation*}
Indeed we find the generalised filling ratio  $\tilde{\fillr} = 2^{-n/2}$ and thus the query complexity of preparing this mapping grows exponentially as $O(2^{2n})$. Given efficiently preparing this state would solve the unstructured problem we see it is in fact less efficient than a classical direct search.

\section{State Preparation via Quantum Phase Estimation}\label{section:qpe_state_prep}

In this section, we introduce a non-deterministic variant of the state preparation procedure utilizing quantum phase estimation.
Begin by defining the two $n$-qubit states,
\begin{align}
    \ket{\psi} = \frac{1}{\mathcal{N}}\sum_{j=0}^{2^n - 1}f(x_j)\ket{j},
    \quad \quad \text{and} \quad \quad
    \ket{\phi} = \sum_{j=0}^{2^n - 1}f(x_j)\ket{j}
\end{align}
where $\ket{\psi}$ is normalized (with normalizing coefficient $\frac{1}{\mathcal{N}}$) and $\ket{\phi}$ is not. Our low-rank Hamiltonian simulation procedure is capable of performing the evolution $e^{-i \frac{t}{N} \op{\phi}{\phi}}$ for some time $t$ with error scaling as $O(t^2)$, as per Statement~\ref{statement:low_rank_simulation}. Thus, we define the unitary $U(t) = e^{-i \frac{t}{N} \op{\phi}{\phi}}$, and note that $U(t)^k = U(tk)$. 

Noting that $\op{\phi}{\phi} = \mathcal{N}^2\op{\psi}{\psi}$, we can rewrite
$ e^{-i \frac{t}{N} \op{\phi}{\phi}} 
    =
    e^{-i \frac{t \mathcal{N}^2}{N} \op{\psi}{\psi}}$.
Suppose $\{\ket{\psi_j}\}_j$ forms an orthonormal eigenbasis for $e^{-i \frac{t \mathcal{N}^2}{N} \op{\psi}{\psi}}$, such that $\ket{\psi_0} \equiv \ket{\psi}$ and,
\begin{align}
    e^{-i \frac{t \mathcal{N}^2}{N} \op{\psi}{\psi}}
    \ket{\psi_j}
    =
    e^{-i \gamma_j}\ket{\psi_j},
\end{align}
where $\gamma_0 = \frac{t\mathcal{N}^2}{N}$ and $\gamma_j = 0$ for $j>0$.
Then, we create the initial state $\ket{\gamma}$, and expand it in the aforementioned eigenbasis,
\begin{align}
    \ket{\gamma} = \sum_{j=0}^{2^n - 1}c_j\ket{\psi_j},
\end{align}
where $c_j$ is the complex amplitude of each eigenvector in this basis associated with our state. Moreover, let $t<0$, so that $-t$ is positive.

Now suppose we have an ancillary $m$-qubit register initially in the state $\ket{+}^{\otimes m}$. Then the initial joint state of our system is given by,
\begin{align}
    \ket{+}^{\otimes m}\ket{\gamma} = \sum_{j=0}^{2^n - 1}c_j \ket{+}^{\otimes m}\ket{\psi_j}.
\end{align}
We now apply a sequence of $m$ controlled $U(t)$ gates, such that when the control is applied on the $j^{th}$ qubit in the $m$-qubit register, we apply $U(t)^{2^{m - j - 1}} = U(t 2^{m - j - 1})$ on the $n$-qubit register, as shown in Figure~\ref{fig:phase_estimation_state_preparation}.
Assuming the absence of algorithmic error in the simulation of $U(t)$, this yields the state,
\begin{align}
    \frac{1}{\sqrt{2^m}}\sum_{j=0}^{2^n - 1}c_j (\ket{0} + e^{2\pi i\gamma_j 2^{m - 1}}\ket{0})\otimes ... \otimes (\ket{0} + e^{2\pi i\gamma_j 2^{0}}\ket{0})\ket{\psi_j}.
\end{align}
For ease of explanation, assuming that $\gamma_j$ can be exactly represented in $m$ bits, applying an inverse-QFT in the first register yields,
\begin{align}
    \sum_{j=0}^{2^m - 1}c_j\ket{\gamma_j}\ket{\psi_j}
    =
    c_0\ket{\gamma_0}\ket{\psi_0} + \ket{0}\sum_{j=1}^{2^m - 1}c_j\ket{\psi_j},
\end{align}
where the equality follows from the fact that $\gamma_j \neq 0$ if and only if $j=0$. Thus, measuring the first register collapses the system to the desired state $\ket{\psi_0}$ with probability $|c_0|^2$ (if $t$ is selected to be \textit{sufficiently large} -- defined shortly) and collapses the state to a superposition of all orthogonal eigenvectors with probability $1-|c_0|^2$. 

We must now consider how a \textit{sufficiently large} $t$ is selected. In this analysis, we assume the value of $\mathcal{N}$ is known (noting that if it is not known, the procedure in the following subsection can be used to find it). If $t$ is selected to be too small, each evolution performed is too close to identity, and the value in the first register after the inverse QFT will always be $\ket{0}$ (and thus our target register does not collapse to the desired state). However, we want to pick the minimum $t$ possible, as in practice, the error in each simulation scales with the square of the total simulation time.
We note that the value of the most-significant qubit in the $m$-qubit ancillary register must perform a rotation of at least $2\pi$ for the inverse QFT to yield a non-zero result. As such, since $\gamma_0 = \frac{t \mathcal{N}^2}{N}$ and the longest time evolution $U(t 2^{m-1})$ we implement should be at least a full period, the minimum possible value of $t$ must satisfy
\begin{align}
    \gamma_0 2^{m - 1} = 1 
    \implies
    \frac{t \mathcal{N}^2 2^{m - 1}}{N} = 1.
\end{align}
Thus, we have a bound for the smallest possible $t$ that gives a non-zero result in the first register with probability $|c_0|^2$:
\begin{align}
    t \ge \frac{N}{\mathcal{N}^2 2^{m-1}}.
\end{align}
Moreover, we also note that in QPE, the phase we learn must be between $0$ and $1$. As such, we have the constraint on the maximum value of $t$ of
\begin{align}\label{eqn:t_upperbound}
    e^{2\pi i \gamma_0} = e^{2\pi i \frac{\mathcal{N}^2 t}{N}}
    \implies
    t \le \frac{N}{\mathcal{N}^2}.
\end{align}
In order for this procedure to allow the efficient preparation of $\ket{\psi}$, all that remains is to show that it is possible to prepare some initial $\gamma$ such that $|c_0|^2$ is reasonably bounded (and thus that we have an efficient probability of success). To do this, we simply use Property~\ref{property:overlap} and prepare the state described in \cref{eqn:initial_overlap} in \cref{section:prep_via_destructive_interference}. Doing so guarantees that $c_0 \ge \mathcal{F}/(b - a)$ and so our probability of successfully preparing the state is lower-bounded by $\mathcal{F}^2/(b - a)^2$ (i.e. it only depends on properties of the function and on the width of the interval being considered). We note that the probability of success presented here is a very loose lower-bound, and there are a number of straightforward ways to boost this.

\begin{figure*}[tb]\label{fig:phase_estimation_state_preparation}
    \[
        \Qcircuit @C=1em @R=2em {
            \lstick{\ket{+}_1}      & \qw    & \ctrl{3}                & \qw & \hdots  && \qw & \qw                 & \multigate{2}{QFT^{\dagger}} & \qw & \meter \\
                                    &  &                         &     &   \vdots      &&     &                     & \nghost{QFT^{\dagger}} & &   \\
            \lstick{\ket{+}_1}      & \qw    & \qw                     & \qw & \hdots  && \qw & \ctrl{1}            & \ghost{QFT^{\dagger}} & \qw & \meter  \\
            \lstick{\ket{\gamma}_n} & \qw    & \gate{U(t)^{2^{m - 1}}} & \qw & \hdots  && \qw & \gate{U(t)^{2^{0}}} & \qw & \qw & \qw 
        }
    \]
    \caption{Quantum circuit used for preparing a continuous quantum state via phase estimation, or for estimating the normalization constant of a given function. The initial state is given by $\ket{+}_1^{\otimes m}\ket{\gamma}_n$.}
\end{figure*}
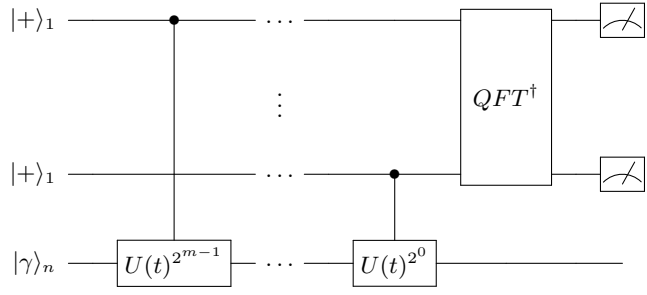

\subsection{Application: Estimating Normalization Factor}\label{subsec:estimating_normalization_factor}
Again, suppose $\{\ket{\psi_j}\}_j$ forms an orthonormal eigenbasis for $e^{-i \frac{t \mathcal{N}^2}{N} \op{\psi}{\psi}}$, such that $\ket{\psi_0} \equiv \ket{\psi}$ and, $e^{-i\frac{t \mathcal{N}^2}{N} \op{\psi}{\psi}}\ket{\psi_j} = e^{-i \gamma_j}\ket{\psi_j}$, where $\gamma_0 = \frac{t\mathcal{N}^2}{N}$ and $\gamma_j = 0$ for $j>0$.

Moreover, suppose we have prepared the state $\ket{\tilde{\psi}}$ such that
\begin{align}
    \ket{\tilde{\psi}} = \sqrt{\lambda}\ket{\psi_0} + \sqrt{1-\lambda}\ket{\psi_{\perp}},
\end{align}
where $\ket{\psi_{\perp}} = \sum_{j=1}^{N-1}c_j\ket{\psi_j}$ and $\ket{\psi_{\perp}}$ is normalized.
We can then apply the phase estimation circuit shown in Figure~\ref{fig:phase_estimation_state_preparation}, assuming perfect error-free simulation, to obtain the state
\begin{align}
    \sqrt{\lambda}\ket{\gamma_0}\ket{\psi_0} + \sqrt{1-\lambda}\sum_{j=1}^{N-1} c_j\ket{\gamma_j}\ket{\psi_j}.
\end{align}
Of course, for $j>0$, $\gamma_j = 0$, and so this can be simplified to
\begin{align}\label{eqn:prepared_normalization_comp_state}
    \sqrt{\lambda}\ket{\gamma_0}\ket{\psi_0} + \sqrt{1-\lambda}\ket{0}\ket{\psi_{\perp}}.
\end{align}
If we have selected a sufficiently large value for $t$ (\textit{sufficiently large} is defined in Section~\ref{section:qpe_state_prep}), then the first register only collapses to the $\ket{0}$ state when the second register collapses to the $\ket{\psi_{\perp}}$ state. In this scenario, we measure the first register until we see a non-zero value, which is guaranteed to be $\ket{\gamma_0}$. This value can then be directly used to compute the normalization factor with,
\begin{align}
    \mathcal{N} = \sqrt{\frac{N \gamma_0}{t}}.
\end{align}
Here, $N$ is clearly known based off of the system size, $t$ is a user specified parameter, and $\gamma_0\in[0,1]$ is read out from the quantum register.
In the absence of simulation error, naive direct sampling of the quantum state takes an expected number of trials of order $\Theta(\frac{1}{\lambda})$ to measure the state $\ket{\gamma_0}$ and thus to compute the normalization factor. Noting that we can easily create a state $\ket{\tilde{\psi}}$ such that $\lambda \ge \mathcal{F}^2/(b-a)^2$, the expected number of trials required for success is then bounded by $\Omega\left((b-a)^2/\mathcal{F}^2 \right)$ (i.e. it depends only on the width of the interval, and on properties of the function).

However, if $t$ is not sufficiently large, then the first register will be unentangled and in the $\ket{0}$ state, regardless of the state of the second register. As a result, no matter how many times you sample the resulting quantum state, you will always measure the first register in the $\ket{0}$ state, and thus gain no information about the normalization factor. Then, if we take $s$ samples from the state in Equation~\ref{eqn:prepared_normalization_comp_state} and never observe a non-zero state, we can conclude with high probability that $t$ is too small, so long as $s$ is selected to be a large enough value. Precisely, the probability of failing in any given sample (assuming that $t$ is sufficiently large) is $1-\lambda$. As a result, the probability of failing after $s$ samples is simply,
\begin{align}
    (1-\lambda)^s \le \left(\frac{(b - a + \mathcal{F})(b - a - \mathcal{F})}{(b - a)^2}\right)^s.
\end{align}
This motivates an algorithm. Suppose we select an acceptable probability of failure $\epsilon \in (0, 1)$, such that we want to ensure if we conclude $t$ is too small, that we are only wrong with probability at most $\epsilon$. Then we have the bound,
\begin{align}
    \left(\frac{(b - a + \mathcal{F})(b - a - \mathcal{F})}{(b - a)^2}\right)^s \le \epsilon,
\end{align}
which implies we need to perform a number of samples with lower-bound,
\begin{align}
    \left|\frac{\log (b - a + \mathcal{F}) + \log (b - a - \mathcal{F}) - 2\log(b - a)}{\log\epsilon}\right| \le s.
\end{align}
Thus, $s$ depends logarithmically on constants such as the width of the grid, and properties of the function, and has inverse-logarithmic scaling in the desired probability of failure. Then, if we conclude that $t$ is too small (by taking $s$ samples and observing all $\ket{0}$ outcomes) we can exponentially increase $t$ (e.g. by doubling) and repeat this process until we observe a non-zero measurement in the first register -- at which point we can directly compute the value of the normalization factor, as already described. Note that the specific rate at which we increase $t$ each time probably requires some more specific consideration, as we need to be careful to ensure that $t$ never exceeds the maximum value prescribed by Equation~\ref{eqn:t_upperbound}.

However, if we do not wish to rely on knowledge of properties of the function such as its filling ratio $\mathcal{F}$, we can instead combine this procedure with the adiabatic direct state preparation approach. In doing so, we iteratively construct the state $\ket{\tilde{\psi}}$ by computing the normalization factor for each of the $r$ discretizations of $\op{\psi(s)}{\psi(s)}$ (as defined in Equation~\ref{eq:param_hamil_def}). In each stage, we compute the normalization factor for $\frac{1}{N}\op{\psi(s)}{\psi(s)}$, and use it to perform the appropriate evolution, giving the state $\ket{\psi(s+\frac{1}{r})}$. We then repeat this process until $s=1$, at which point we have some initial state that is arbitrarily close to the exact desired state, giving this procedure an arbitrarily high probability of success. 
Note that the probability of success does not exponentially decay, as once we have computed the normalization factor at time $s$ we can restart the procedure and deterministically prepare the state up to time $s$.

However, this analysis assumes that we can implement each of the controlled evolutions of $e^{-i\frac{t}{N}\op{\phi}{\phi}}$ with no error -- in practice, we incur error proportional to $t^2$ for each application of $U(t)$. We leave the analysis of this error prone situation to future work.

\subsection{Application: Efficient Integration of Lipschitz Continuous Functions}\label{appendix:efficient_integration_routine}
The procedure for computing the normalization factor of a function immediately leads to an efficient algorithm for integrating that function over an arbitrary domain.
In principle, in the absence of simulation error, this algorithm allows for an exponential improvement in integration over quantum sampling based methods (e.g. using the SWAP test) or any classical methods.

Let us first introduce the approach on the example of a non-negative function.
Formally, we wish to integrate the Lipschitz continuous non-negative function $h$ on the interval $[a, b]$, $\int_{a}^{b}h(x)dx$. Using Lemma~\ref{lemma:riemann_sum}, we immediately get
\begin{align}
    \int_{a}^{b}h(x)dx = \Delta_{N}\sum_{j=0}^{N-1} h(x_j) + \epsilon,
\end{align}
where $\epsilon \in O(N^{-1})$ for Lipschitz continuous functions. Define the function $f(x) = \sqrt{h(x)}$. Then,
\begin{align}
    \Delta_{N}\sum_{j=0}^{N-1} h(x_j) + \epsilon
    &=
    \Delta_{N}\sum_{j=0}^{N-1} f(x_j)^2 + \epsilon \\
    &= \Delta_{N} \mathcal{N}^2 + O(N^{-1}), 
\end{align}
where the last equality follows from the fact that the normalization constant for the function $f$ is given by $\mathcal{N} = \sqrt{\sum_{j=0}^{N-1}|f(x_j)|^2}$. Clearly, computing the normalization factor for the function $f$ then immediately leads to an approximation for the integral of $g$ in the domain $[a, b]$ with error $O(N^{-1})$ (in the absence of simulation error). The asymptotic complexity of this procedure is then dominated by the complexity of computing $\mathcal{N}$ as described in Subsection~\ref{subsec:estimating_normalization_factor}.

It follows that any real, Lipschitz continuous function can be integrated this way. We split $f$ into two non-negative  functions as $f_+:=(f + |f|)/2$ and $f_-:=-(f - |f|)/2$, and apply our approach to these two functions separately. We then subtract the results from each other. This similarly applies to complex functions by splitting them into four quadrants in the complex plane and separately integrating four non-negative functions.

\section{State Preparation via Destructive Interference}\label{section:prep_via_destructive_interference}

\subsection{The Procedure}

Here present a simplified version of the above phase-estimation protocol that only requires a single ancilla qubit.
Assume we know the normalisation constant $c^2 =\lVert A/N \rVert_2 = \mathcal{N}^2/N$
we can set an evolution time $t=T/c^2$ and thus we can implement the fixed-time evolution under the projector as $e^{-i T/c^2 A/N} = e^{-i T  |\psi \rangle \langle \psi |}$.

Recall that we can write any quantum state in terms of the ideal $| \psi \rangle$ and an orthogonal component as
$\sqrt{\lambda} |\psi \rangle + \sqrt{1-\lambda} |\psi_\perp \rangle$
up to trivial global phases.
Setting $T=\pi$ allows us to invert the sign of the ideal component $| \psi \rangle$ in any quantum state as
\begin{equation*}
   e^{-i \pi  |\psi \rangle \langle \psi |}
   [
   \sqrt{\lambda} |\psi \rangle + \sqrt{1-\lambda} |\psi_\perp \rangle
   ]
   =- \sqrt{\lambda} |\psi \rangle + \sqrt{1-\lambda} |\psi_\perp \rangle,
\end{equation*}
without affecting the orthogonal component $|\psi_\perp \rangle$
in Hilbert space. Here $\lambda$ is the fidelity with respect to the ideal
component.

Applying this evolution as part of the circuit in Fig.~\ref{fig:destructive}
allows us to prepare the quantum state and we obtain the joint state after the controlled evolution
but before the second Hadamard gate as
\begin{align*}
    |\Psi\rangle = &[\sqrt{\lambda} |\psi \rangle + \sqrt{1-\lambda} |\psi_\perp \rangle] \otimes |0\rangle/\sqrt{2}
    +
    [- \sqrt{\lambda} |\psi \rangle + \sqrt{1-\lambda} |\psi_\perp \rangle] \otimes |1\rangle /\sqrt{2}
    \\
    &\downarrow\\
    & \text{Hadamard gate}\\
    &\downarrow\\
    |\Psi'\rangle = & \sqrt{1-\lambda} |\psi_\perp \rangle \otimes |0\rangle
    +
    \sqrt{\lambda} |\psi \rangle
    \otimes |1\rangle.
\end{align*}
As such, upon measuring the ancilla qubit in the $|1\rangle$ state we perfectly obtain $|\psi \rangle$
with probability exactly given by the fidelity $\lambda$.

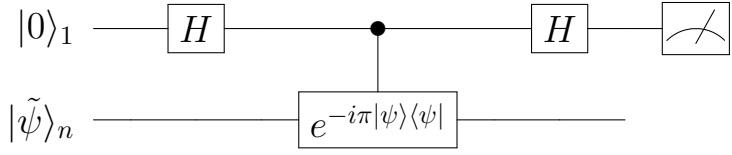
\begin{figure*}[tb]
	\begin{centering}
	\Large
	\[
        \Qcircuit @C=1em @R=1em {
            \lstick{\ket{0}_1}      & \qw  &\gate{H} & \qw  & \ctrl{1}                & \qw &\gate{H}& \qw & \meter \\
            \lstick{\ket{\tilde{\psi}}_n} & \qw  & \qw   & \qw    & \gate{e^{-i\pi \op{\psi}{\psi}}}         & \qw & \qw & \qw \\
        }
    \]
		\caption{
        Circuit for preparing the continuous function through a
        controlled time evolution that reflects the sign of the ideal component
        $|\psi\rangle$ in an erbitrary input state $\ket{\tilde{\psi}}$ (the subscript $n$ in  $\ket{\tilde{\psi}}_n$ in the figure illustrates that the single line represents an $n$-qubit register).
			\label{fig:destructive}
		}
	\end{centering}
\end{figure*}


The approach is most efficent when we can initialise $|\Psi\rangle$ such that
we have a good approximation of $|\psi\rangle$, e.g., by first performing an
adiabatic evolution or by efficiently preparing an approximation to $|\psi\rangle$.
The success probability of the approach is then completely determined
by the quality of the inital state via the fidelity $\mathrm{prob}_1 = \lambda$.

In principle we can also prepare our desired quantum state just by starting
from the all plus state without having an initial approximation to $|\psi \rangle$, i.e., a worst-case approximation.
Let us illustrate this on the example of real functions whereby we can perform the procedure in two steps. First we initialise in the all plus state and prepare
\begin{equation}\label{eqn:initial_overlap}
    f_+(x) := [f(x) \pm |f(x)|  ]/2,
\end{equation}
which has a fidelity at least $\lambda \geq \fillr^2/ (b-a)^2$
in case of the pointwise encoding
via Property~\ref{property:overlap} with the all plus state for a properly chosen sign. We then  prepare $f(x)$ in a second stage which has fidelity at least $1/2$ with respect to $\tilde{f}(x)$. The overall probability of success is then larger then $\fillr^2/ (b-a)^2/2$.
This ensures us that even in the worst case, by choosing the initial state to be the trivial all-plus state we can guarantee a polynomially bounded (in the filling ratio) success probability.

We can straightforwardly bound the error of the simulation approach.
\begin{statement}\label{statement:destructive_error}
Setting
the evolution time as $T=\pi / \lVert A/N \rVert_2$
to simulate the time evolution under the projector as
$e^{-i T A/N} = e^{-i \pi  |\psi \rangle \langle \psi |}$ in $r$ consecutive steps,
the algorithmic error is generally bounded as
$|\epsilon| \leq 
\mathcal{O}(\fillr^p/r)
$
where $p=-2, -4$ depending on the encoding and $\fillr$
is the filling ratio from Definition~\ref{def:peakedness}.
\end{statement}

\begin{proof}
Given the normalisation $c^2 =\lVert A/N \rVert_2 = \mathcal{N}^2/N$
we 
use the low-rank simulation approach to implement the piecewise evolutions $e^{-i \Delta t/c^2 A/N}$ for $\Delta  t = T/r$ and repeat the procedure $r$ times
to simulate the dynamics under
the projector as $e^{-i T/c^2 A/N} = e^{-i T  |\psi \rangle \langle \psi |}$. 
We have already established the simulation error for a small time step $\Delta t =T/r$ in \cref{lemma:simulation_algorithmic_error_bound_asymp} in terms of a unitary operator distance as
\begin{align*}
		\epsilon_0(s) \in O\left(\frac{\mathcal{F}^p}{r^2}\right).
	\end{align*}
When repeating the small time simulation $r$ times
the error accumulates via a triangle inequality at worst as $|\epsilon| \leq r |\epsilon_0| \in
\mathcal{O}(\fillr^p/r)
$.
\end{proof}

\subsection{Application: approximating the normalisation factor}\label{subappendix:estimating_normalization_factor_v2}

If we do not exactly know the normalisation $c^2 =\lVert A/N \rVert_2 = \mathcal{N}^2/N$ we can approximate it the following way. Applying the circuit from Fig.~\ref{fig:destructive} but with a parametric time evolution $e^{-i t A/N}$ where we vary $t$ we obtain the following quantum state immediately before measurement
\begin{equation} \label{eq_destructive_intf}
    |\Psi'\rangle =
    [1 - e^{-i t c^2}] \sqrt{\lambda} |\psi \rangle  \otimes |1\rangle/2
    + \dots.
\end{equation}
The probability of the outcome $|1\rangle$ is 
\begin{equation}
    \mathrm{prob}_1(t) = \lambda |1- e^{-i t c^2}|^2/2
    = \lambda (1-\cos{t c^2}).
\end{equation}
As such, by estimating the probability $\mathrm{prob}_1(t)$ for an increasing $t$ and, e.g., curve fitting or Fourier transforming the time-dependent signal $\mathrm{prob}_1(t)$, allows us to obtain an estimate of $c^2$ which is subject to shot noise and also to the algorithmic error in Statement~\ref{statement:destructive_error}. Note that none of the techniques in the present work need an exact knowledge of the normalisation $c^2$ and a rough estimate is always sufficient. 

In particular, note that even if we do not exactly know $c^2$ we still obtain exactly the desired state in Eq.~\eqref{eq_destructive_intf} after measuring the $\ket{1}$ ancilla outcome which we obtain with a  probability $\mathrm{prob}_1(t)$
that is non-zero
 as long as $ t \neq n 2\pi/c^2$
for any $n\in \mathbb{N}$.
Indeed the maximum of the probability is at $t=\pi/c^2$.

\subsection{Application: efficient state verification \label{sec:verification}}

Given  a state $|\phi \rangle =\sqrt{\lambda} |\psi \rangle + \sqrt{1-\lambda} |\psi_\perp \rangle$
that we have for example prepared via the adiabatic approach in \cref{sec:direct_state_prep}
such that $\lambda \approx 1$, we can use circuit in Fig.~\ref{fig:destructive}
to efficiently verify that we have indeed prepared the desired state and, e.g.,
the hyperparameters of adiabatic evolution are properly chosen: we perform the ancilla
measurement in Fig.~\ref{fig:destructive} and we indeed obtain the quantum state 
$|\psi \rangle$ with probability $\mathrm{prob}_1 = \lambda$ (up to the alogrithmic error of the
time evolution). When the ancilla measurement returns the $\ket{0}$
outcome then we need to restart the procedure. The probability that
we still fail after $m$ trials then decreases exponentially as $(1-\lambda)^m$.
Another significant advantage of this technique is that we can this way remove/project out the
algorithmic error of adiabatic evolution due to finite, discretised time evolution
at the cost of introducing a small probabilty of failure -- non-deterministic approach.

\section{Details of \cref{fig:fig2} \label{sec:numerics}}

In this section we discuss how we numerically simulated our state-preparation approach in  \cref{fig:fig2}.
Note that the low-rank simulation approach we utilise requires two quantum registers.
In our implementation we explicitly represent the two quantum
registers of $n$ qubits each which we use to perform the exact
1-sparse simulation of $S_A$ for a small time step $\Delta t$.
We thus numerically \emph{exactly} simulate the low-rank simulation approach and thus \emph{explicitly}
take into account all error sources of our state preparation approach. 
After each time step we reset the ancilla state by  by first performing Hadamard gates
and then measuring it in the standard basis.

Via the 1-sparse simulation approach we can exactly simulate the dynamics under a matrix $S_A$ whose matrix entries
are given by the function values in \cref{eq:encoding}. Given these function values are computed via an oracle as a binary integer in \cref{def:eff_comp} they are thus prone to a (practically negligible) digitisation error when compared to the continuous functions $\tilde{f}:[a,b]\rightarrow\mathbb{R}$ we consider in \cref{fig:fig2}, e.g., $\alpha e^{-\beta |x - x_0|}$.
We take this digitisation error into account by assuming $d=32$ bits resolution and simulate its effect by computing the digitised function values $f(x_k) = \ceil{ 2^d \tilde{f}(x_k)/\lVert \tilde{f}\rVert_{max} }/2^d  \lVert \tilde{f}\rVert_{max}$ where the symbol $\ceil{\cdot}$ here stands for rounding to the nearest integer.

In our numerical simulations we took explicitly into account the algorithmic error in the low-rank simualtion approach by explicitly simulating the dynamics under $S_A$ over the two registers as
\begin{equation*}
    |\Psi(t)\rangle =
    e^{-i t S_A } 
    |\Psi\rangle =
    e^{-i t S_A } 
     | +^n \rangle |\psi \rangle,
\end{equation*}
where the state of the ancilla register is $| +^n \rangle := |+\rangle^{\otimes n}$ the uniform superposition. We represent the joint state of the two quantum registers via the rank-2 tensor as $\Psi_{kl} := (\langle k | \langle l |) \ket{\Psi}$ with basis states indexed as $k,l \in \{0,1\}^n$.
The action of the matrix $\ket{\Psi^{(1)}}: =  S_A \ket{\Psi}$ can be conveniently computed in terms of the ``entry-wise'' product of the corresponding tensors as
\begin{equation}\label{eq:apply_sa_matrix}
    \Psi^{(1)}_{kl}: =  S_A \ket{\Psi} = ( [A]_{kl}\Psi_{kl} )^T
    =
    [A]_{lk}\Psi_{lk},
\end{equation}
where the transpose operation interchanges the indexes of the rank-2 tensor. We can iteratively compute higher powers of the matrix acting on the vector as $\ket{\Psi^{(2)}} =  S_A \ket{\Psi^{(1)}}$  etc. via the above formula. This allows us to numerically compute the effect of the matrix exponential on the tensor using the Taylor expansion
without explicitly computing the large matrix $S_A$  as
\begin{equation*}
|\Psi(t)\rangle=
   e^{-i t S_A } 
    |\Psi\rangle
=	
	\sum_{k,l \in \{0,1\}^n}
	[
    \sum_{j=0}^m
    \frac{(-i t)^j }{j!}
    \Psi_{kl}^{(j)}
    ]
    |k\rangle |l\rangle,
\end{equation*}
where we compute the tensor in the square brackets up to $m=7$ to minimise numerical error.

In the numerical simulations we computed the joint quantum state $|\Psi( \Delta t) \rangle$ for small time steps $\Delta t$, then performed a Hadamard transformation on the ancilla register $H^{\otimes n} \otimes \mathrm{Id} |\Psi( \Delta t) \rangle$ and then performed a measurement on the ancilla rgister in the standard basis. We can accept any outcome and reset the ancilla register accordingly to $|+^n\rangle$.
However, note that the probability that the ancilla register collapses into $|+^n\rangle$ approaches $1$ for small target $\epsilon$ in \cref{theorem:total_query_complexity} 
and  in \cref{fig:fig2} we indeed consider the practically relevant scenario when the state preparation procedure is very accurate
$\epsilon \ll 1$, i.e., the infidelity is smaller than $10^{-3}$.
As such, in \cref{fig:fig2} (b) at 11 qubits with the Slater-type orbital preparation the probability was above $(1-\mathrm{prob})^r > 0.95$ that all measurements throughout the entire state-preparation procedure result in the outcome $|+^n\rangle$.

\section{Generalisation to multivariate functions\label{sec:multivar}}

Many applications actually require multivariate input states, such as in quantum chemistry~\cite{chan2022grid}, quantum field theory~\cite{jordan2012quantum}, and derivative pricing with multiple underlying assets~\cite{stamatopoulos2020option}. However, there is only limited work exploring the preparation of such input states. In 2008, Kitaev and Webb presented an algorithm for preparing multivariate normal distributions~\cite{kitaev2008wavefunction}, which was expanded upon by a 2021 paper conducting resource estimates and optimizations~\cite{bauer2021practical}.  Finally, some work in preparing chemistry-specific multivariate input states has also been explored~\cite{ward2009preparation}.

Recall that our algorithm uses an oracle that efficiently computes the function values in the sense
of the mapping $O_f\ket{k}\ket{0} = \ket{k}\ket{f(x_k)}$ for a discretised $x$ value with $k \in \{0,1\}^n$.
We can straightforwardly extend the present approach to multivariate functions which we illustrate on the example
of a 2-variate function $f(x,y)$. As such, we aim to prepare a quantum state of $2n$ qubits such that
its amplitudes are proportional to samples of the function.
It is straightforward to show that this is equivalent to using our single-variate approach on $2n$ qubits such that the oracle implements the mapping $O_f\ket{k}\ket{0} = \ket{k}\ket{f(  x_{k_M} , y_{k_L}  )}$ for a discretised $x$ value with $k \in \{0,1\}^{2n}$ and $k_M$ are the $n$ most significant bits of $k$ while $k_L$ are the $n$ least significant bits of $k$. With this definition
of an oracle, our approach then prepares the quantum state
$$\ket{\psi} = (\mathcal{N}_N)^{-1} \sum_{k_M\in \{0,1\}^n} \sum_{k_L\in \{0,1\}^n}    f(x_{k_M} , y_{k_L}) \ket{k_M} \ket{k_L}.$$
The preparation of general $d$-dimensional (i.e. $d$-variate) functions then clearly follows by using our approach
on $dn$ qubits and implementing the oracle $O_f\ket{k}\ket{0} = \ket{k}\ket{f(  x_1(k), x_2(k), \dots x_d(k)   )}$
with $k\in \{0,1\}^{dn}$ and, e.g, $x_1(k)$ is computed from the most significant $n$ bits of $k$.

While above we have related the preparation of multivariate distributions to our single-variate approach, it is important to note that the main limitation of our technique is that our state-preparation error bounds depend on the filling ratio of the function $f(x)$. It is easy to see that the filling ratio might decrease exponentially with an increasing number of variables, e.g., in case of separable functions $f_1(x_1)f_2(x_2) \cdots f_d(x_d)$ the filling ratios are products of the individual filling ratios. Nevertheless, such separable functions can be prepared by $d$ applications of our approach onto $d$ registers each preparing a single-variate function, e.g., $f_1(x)$. The joint quantum state of the $d$ registers is then naturally of a tensor product form.
In typical applications it may already be a significant advantage to be able to prepare low-dimensional functions to high resolution in separate registers such that the joint quantum state is a separable high-dimensional function. An example can be found in preparing 3-dimensional molecular orbitals in real-space quantum chemistry simulations~\cite{chan2022grid}.
The initial product state is only an approximation and the quantum computer is then used to entangle the registers anyway, e.g., by performing a simulation between interactive particles that are initially in a separable state.

\section{Further technical details\label{sec:further_technical}}

\subsection{Grover-Rudolph State Preparation}\label{si:grover_rudolph}
The objective of the Grover-Rudolph state preparation procedure~\cite{grover2002creating} is to prepare the $n$ qubit state,
\begin{align}
    \ket{\psi}_n = \sum_{j = 0}^{2^n - 1}g(x_j)\ket{j}_n,
\end{align}
where $g(x_j)$ is the (normalized) density of the given function $f$ in the $j^{th}$ grid interval. 
In essence, given the correct $k$ qubit state (with $1 \le k < n$), they construct the $k+1$ qubit state, iterating this process until the desired $n$ qubit state is obtained.
They then define the function,
\begin{align}
    h^{(k)}(j) = \frac{\int_{x^{(k)}_j}^{x^{(k)}_j + \frac{1}{2}\Delta_{K}} f(x) dx}{\int_{x^{(k)}_{j}}^{x^{(k)}_{j+1}} f(x) dx},
\end{align}
where $K=2^k$ and $\Delta_{K}$ is the step-size in a $k$ qubit grid.
That is, $h^{(k)}(j)$ represents the conditional probability of $x$ being on the left half of grid interval $j$ on a $k$-qubit grid, given that $x$ lies in the $j^{th}$ interval. Clearly, if $f$ can be integrated efficiently, then $h^{(k)}(j)$ can be computed efficiently and so can $\theta^{(k)}_j\equiv \arccos\left(h^{(k)}(j)\right)$. An ancillary $d$ qubit register is then added, as well as a least-significant $1$ qubit register.
The following state is then obtained by evaluating the oracle performing the mapping $\ket{j}_k\ket{0}_d \mapsto \ket{j}_k\ket{\theta^{(k)}_j}_d$,
\begin{align}\label{eqn:grover_intermediate_step}
    \sum_{j = 0}^{2^k - 1}g(x^{(k)}_j)\ket{j}_k\ket{\theta^{(k)}_j}_d\ket{0}_1.
\end{align}
It is important to note that this step requires integration to be performed in superposition. As observed in the Grover paper, any log-concave distribution can be efficiently integrated using Monte Carlo integration, and can thus be integrated efficiently on a quantum computer.
Define a CRy gate (with a definition analogous to the CRx gate shown in Figure~\ref{fig:crx_gate_definition}) such that it acts on the last two registers and performs the mapping $\ket{a}_d\ket{0}_1 \mapsto \ket{a}_d\left(\cos a\ket{0}_1 + \sin a\ket{1}_1\right)$. Note that such a gate is derived in Section~\ref{section:hamiltonian_simulation_techniques} (in the general setting, in this case $t=1$). The CRy gate is then applied to the last two registers in Equation~\ref{eqn:grover_intermediate_step}, yielding the state
\begin{align}
    \sum_{j = 0}^{2^k - 1}g(x^{(k)}_j)\ket{j}_k\ket{\theta^{(k)}_j}_d\left(
    \cos \theta^{(k)}_j \ket{0}_1 + \sin \theta^{(k)}_j \ket{1}_1
    \right)
    =
    \sum_{j = 0}^{2^k - 1}g(x^{(k)}_j)\ket{j}_k\ket{\theta^{(k)}_j}_d\left(
    h^{(k)}(j) \ket{0}_1 + (1-h^{(k)}(j)) \ket{1}_1
    \right).
\end{align}
We then uncompute the $d$ qubit ancillary register (by applying the inverse of the integration circuit -- noting that if the integration is classically stochastic, it must be seeded) and discard it, yielding the state
\begin{align}
    \sum_{j = 0}^{2^k - 1}g(x^{(k)}_j)\ket{j}_k\left(
    h^{(k)}(j) \ket{0}_1 + (1-h^{(k)}(j)) \ket{1}_1
    \right)
    =
    \sum_{j = 0}^{2^{k+1} - 1}g(x^{(k+1)}_j)\ket{j}_{k+1}
    =
    \ket{\psi}_{k+1}.
\end{align}
The procedure is then repeated until the desired $n$ qubit state is obtained.

\subsection{Asymptotic properties of the encoded quantum states \label{sec:further_asymptotic}}

Recall that in this work we refer to a continuous function $f: [a,b] \mapsto \mathbb{C}$ on the \emph{closed interval}
$a \leq x \leq b$ as $f(x)$, and will now briefly summarize some notion related to continuous functions which we will frequently utilize in the following proofs. First, recall the general definition of a continuous function $f \in \mathcal{C}$  as $\lim_{c \rightarrow x} f(c) = f(x)$
and due to the boundedness theorem all such functions are bounded as $|f(x)|\leq \lVert f \rVert_{max}$.
Second, we define the subset of Lipschitz continuous functions as $f \in \mathcal{C}_{L} \subseteq \mathcal{C}$, which are slightly more restrictive as they satisfy $|f(x_1) - f(x_{2})| \leq K |x_1 - x_{2}|$ (for a real constant $K\ge 0$), ensuring that they asymptotically approach any value of the function through discretization in \emph{order one} $\mathcal{O} (N^{-1})$.
Third, given all continuous functions are Riemann integrable, it will be natural for us to approximate the finite sum $\sum_k f(x_k) \Delta_N$ in terms of the integral $\int_a^{b}f(x) \, \mathrm{d} x$, for which we incur an asymptotically vanishing approximation error 
$\lim_{N\rightarrow\infty} \epsilon =0$ while for Lipschitz continuous functions we can additionally guarantee the scaling
 $\epsilon \in \mathcal{O} (N^{-1})$  exponential in the number of qubits as we prove now.

\begin{lemma}[Riemann sums]\label{lemma:riemann_sum}
Given an arbitrary continuous function $f(x): [a,b] \mapsto \mathbb{C}$ on the interval
$a \leq x \leq b$, the finite Riemann sum approximates the integral as
\begin{equation*}
    \int_a^b f(x) \,\mathrm{d}x =  \sum_{k=0}^{N-1} f(x_k) \Delta_N + \epsilon
\end{equation*}
up to an asymptotically vanishing error $\lim_{N\rightarrow\infty} \epsilon =0$.
For functions of bounded variation the error term additionally satisfies $\epsilon\in \mathcal{O}(N^{-1})$, see refs.~\cite{owens2014exploring,tasaki2009convergence} which also applies to Lipschitz continuous functions given
they are of bounded variation.
\end{lemma}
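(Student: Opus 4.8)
The plan is to reduce everything to a single telescoped decomposition of the quadrature error and then bound each grid cell by the local oscillation of $f$. Writing $x_k = a + k\Delta_N$ with $\Delta_N=(b-a)/N$ and using additivity of the integral together with $\int_{x_k}^{x_{k+1}} f(x_k)\,\mathrm{d}x = f(x_k)\Delta_N$, I would start from
\[
\epsilon = \int_a^b f(x)\,\mathrm{d}x - \sum_{k=0}^{N-1} f(x_k)\Delta_N = \sum_{k=0}^{N-1}\int_{x_k}^{x_{k+1}}\bigl(f(x) - f(x_k)\bigr)\,\mathrm{d}x .
\]
For complex-valued $f$ I would either split into real and imaginary parts or simply apply the triangle inequality for integrals directly, which gives
\[
|\epsilon| \le \sum_{k=0}^{N-1}\int_{x_k}^{x_{k+1}} |f(x) - f(x_k)|\,\mathrm{d}x \le \Delta_N \sum_{k=0}^{N-1} \omega_k ,
\]
where $\omega_k := \sup_{x\in[x_k,x_{k+1}]}|f(x)-f(x_k)|$ is the oscillation of $f$ on the $k$-th cell.

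For the general vanishing-error claim I would invoke the Heine--Cantor theorem: since $f$ is continuous on the \emph{closed, bounded} interval $[a,b]$, it is uniformly continuous there, so for every $\eta>0$ there is a $\delta>0$ with $|f(x)-f(y)|<\eta$ whenever $|x-y|<\delta$. Once $N$ is large enough that $\Delta_N<\delta$, every cell has width below $\delta$ and hence $\omega_k<\eta$, so the bound above yields $|\epsilon| \le \Delta_N \cdot N \cdot \eta = (b-a)\,\eta$. As $\eta>0$ is arbitrary, this forces $\lim_{N\to\infty}\epsilon = 0$.

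For the refined rate I would specialise the same decomposition to the Lipschitz case: using $|f(x)-f(x_k)|\le K|x-x_k|\le K\Delta_N$ on each cell gives $\omega_k\le K\Delta_N$, and the estimate collapses to $|\epsilon|\le N\cdot\Delta_N\cdot K\Delta_N = K(b-a)^2/N \in \mathcal{O}(N^{-1})$, which is the claimed order-one convergence (and, since $N=2^n$, exponential in the qubit count $n$). For the broader class of functions of bounded variation I would defer to the standard quadrature estimate that the error of a left Riemann sum is at most $\Delta_N$ times the total variation $V_a^b(f)$, as established in the cited references~\cite{owens2014exploring,tasaki2009convergence}, and observe that every Lipschitz function has bounded variation so the two statements are mutually consistent. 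The only genuine subtlety is the reliance on compactness: the general vanishing-error claim would fail for merely pointwise-continuous functions on open or unbounded domains, so the closedness of $[a,b]$---assumed throughout the paper---is essential and must be flagged; the remaining manipulations and the complex-valued reduction are routine once the telescoped error is in hand.
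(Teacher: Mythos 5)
Your proof is correct, and it takes a more self-contained route than the paper's. The paper's own argument is essentially an appeal to standard facts: it invokes Riemann integrability of continuous functions to get $\lim_{N\rightarrow\infty}\epsilon=0$, and then cites refs.~\cite{owens2014exploring,tasaki2009convergence} for the bound $|\epsilon|\leq V(b-a)/N$ in terms of the total variation $V$, noting that Lipschitz functions have bounded variation. You instead prove the two claims from first principles: the telescoped decomposition $\epsilon=\sum_k\int_{x_k}^{x_{k+1}}(f(x)-f(x_k))\,\mathrm{d}x$, the oscillation bound, Heine--Cantor for the vanishing-error claim, and the direct estimate $|\epsilon|\leq K(b-a)^2/N$ in the Lipschitz case. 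What your version buys is an explicit constant for the Lipschitz rate (the paper only says the constant is ``proportional to the largest value of the derivative'') and an argument that makes visible exactly where compactness of $[a,b]$ is used; what the paper's version buys is coverage of the strictly larger bounded-variation class with the sharper constant $V(b-a)$, which you (like the paper) correctly defer to the cited references rather than reprove. The two are consistent, and your reduction of the complex-valued case via the integral triangle inequality matches what the paper leaves implicit.
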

\begin{proof}
Given all continuous functions are Riemann integrable
we can approximate their Riemann integral via the convergent
(for $N \rightarrow \infty$) 
sum as
\begin{equation*}
\int_a^b f(x) \,\mathrm{d}x 
=
\sum_{k=0}^{N-1} f(x_k) \Delta_N + \epsilon,
\end{equation*}
where the error term satisfies $\lim_{N\rightarrow\infty} \epsilon =0$ for all Riemann integrable functions.
Here we denote the discretisation interval length as $\Delta_N = (b-a)/N$ and the asymptotic scaling of the approximation error
can be established $\epsilon \leq \mathrm{const} \, \Delta_N$~
\cite{
owens2014exploring,tasaki2009convergence} for a broad class of functions, where the constant factor depends on properties of the functions. For example, if $f$ is  differentiable then it is proportional to the largest value of the derivative function withing the interval $a \leq x \leq b$. Of course, $f$ need not be differentiable for the above error bound to hold and, e.g., ref~\cite{owens2014exploring} established the bound $|\epsilon| \leq V (b-a)/N$ using the absolute largest variation $V$ of the function $f$. Recall that functions of bounded variation $V$ contain all Lipscitz continuous functions.
\end{proof}

\begin{property}\label{prop:encodig_equiv}
Given an arbitrary continuous, non-negative function $f$, the quantum state $|\psi\rangle$ in Eq.~\ref{eq:encoding} obtained via the integral encoding $g(x)$ becomes identical to the pointwise encoding of the corresponding function $\sqrt{ f(x) }$ for large $N$.
\end{property}
\begin{proof}
Recall that all continuous functions are Riemann integrable and thus via
Lemma~\ref{lemma:riemann_sum} we can approximate the Riemann integral as 
\begin{align}
    g(x_j) = [\int_{x_j}^{x_j + \step}f(x)dx]^{1/2} =
     [\Delta_N 
     f(x_j) + \epsilon]^{1/2},
\end{align}
and the error generally satisfies $\lim_{N \rightarrow \infty} \epsilon = 0$
and for Lipschitz continuous functions we can also guarantee the scaling $\epsilon \in \mathcal{O}(N^{-1})$. The encoded quantum state thus satisfies
\begin{align}
    \ket{\psi} = \sum_{j\in \{0,1\}^n}    
    g(x_j)
    \ket{j} = \sqrt{\Delta_N} \sum_{j\in \{0,1\}^n}\sqrt{f(x_j)}\ket{j}
    + \epsilon,
\end{align}
where $\sqrt{\Delta_N}$ approaches the normalisation constant $(\mathcal{N}_N)^{-1}$ from a pointwise encoding of $\sqrt{f}$ -- and thus indeed the two encodings become identical for large $N$.
Note that we satisfy our normalisation condition from Definition~\ref{def:integrable_fct} that
$\lVert \psi \rVert^2 = 1 \approx \sum_{j\in \{0,1\}^n} f(x_j) \step$.
\end{proof}

\begin{property}[pointwise encoded amplitudes]\label{prop:pointwise_amplitude}
Given an arbitrary continuous function, the pointwise encoded amplitudes satisfy the general bound $N |\psi_k|^2 
\leq
(b-a) \lVert f \rVert_{max}^2 / \lVert f \rVert_2^2 + \epsilon$.
The error term asymptotically vanishes $\lim_{N\rightarrow\infty} \epsilon =0$
and for Lipschitz continuous function it additionally satisfies $\epsilon\in \mathcal{O}(N^{-1})$
\end{property}
\begin{proof}

Note that at a level of discretisation $N$ the amplitudes are specified as $|\psi_k|^2 = |f(x_k)|^2/ \mathcal{N}^2 $, where the normalisation $\mathcal{N}^2 = \sum_{k=0}^{N-1} |f(x_k)|^2$ ensures that our quantum state has unit norm. Let us now upper bound the amplitudes as
\begin{equation*}
N |\psi_k|^2 = 
\frac{N |f(x_k)|^2 }{ \mathcal{N}_N^2 }
=
  \frac{(b-a) |f(x_k)|^2 }{ \mathcal{N}_N^2 \Delta_N }
\end{equation*}

It is a direct consequence of Lemma~\ref{lemma:riemann_sum} that the normalisation condition
$\mathcal{N}^2 \Delta_N   = \lVert f \rVert_2 ^2 + \epsilon$ can be approximated via the Riemann integral of $|f(x)|^2$.
Substituting this back we obtain the formula
\begin{equation*}
N |\psi_k|^2 =   \frac{(b-a) |f(x_k)|^2 }{ \lVert f \rVert_2 ^2 + \epsilon }
\leq 
\frac{(b-a) \lVert f \rVert_{max}^2 }{ \lVert f \rVert_2^2 + \epsilon }\\
=
\frac{(b-a) \lVert f \rVert_{max}^2 }{ \lVert f \rVert_2^2}
+ \epsilon.
\end{equation*}
Above we have first used the defintion of the infinity norm $\lVert f \rVert_{max} := \max_x |f(x)|$
and then we have expanded the fraction into a series in $\epsilon = \mathcal{O}(N^{-1})$. 

The above upper bound is actually saturated given $f$ is Lipschitz continuous as $\max_k |f(x_k)|^2 = \lVert f \rVert_\infty ^2 + \mathcal{O}(N^{-1})$. As such, we can establish that the largest amplitude converges in exponential order to the expression that is independent of the resolution $N$
\begin{equation*}
    N \max_k  |\psi_k|^2 = \frac{(b-a) \lVert f \rVert_{max}^2 }{ \lVert f \rVert_2^2}
    + \epsilon.
\end{equation*}
The error term asymptotically vanishes $\lim_{N\rightarrow\infty} \epsilon =0$
and for Lipschitz continuous function it additionally satisfies $\epsilon\in \mathcal{O}(N^{-1})$ via  Lemma~\ref{lemma:riemann_sum}.

\end{proof}

\begin{property}[integral encoded amplitudes]\label{prop:integral_amplitude}
Given an arbitrary continuous function, the integral encoded amplitudes satisfy the general bound $N |\psi_k|^2 
\leq
(b-a) \lVert f  \rVert_{max} +  \epsilon$.
The error term asymptotically vanishes $\lim_{N\rightarrow\infty} \epsilon =0$
and for Lipschitz continuous function it additionally satisfies $\epsilon\in \mathcal{O}(N^{-1})$.
\end{property}

\begin{proof}
Let us approximate the probabilities via the Riemann summation where we only use $1$ summand as
\begin{equation*}
    |\psi_k|^2 = g^2(x_k) = \int_{x_k}^{x_k + \Delta_N}f(x) \, \mathrm{d}x 
    =
    f(x_k) \Delta_N + \epsilon
\end{equation*}
via \cref{lemma:riemann_sum}.
As such we can establish the general bound
\begin{equation*}
    \max_k g(x_k) = \max_k |\psi_k|^2  = \lVert f  \rVert_{max} \Delta_N +  \epsilon,
\end{equation*}
Therefore we conclude that
\begin{equation*}
 N \max_k g(x_k) =   N \max_k |\psi_k|^2 = (b-a) \lVert f  \rVert_{max} +  \epsilon.
\end{equation*}
The error term asymptotically vanishes $\lim_{N\rightarrow\infty} \epsilon =0$
and for Lipschitz continuous function it additionally satisfies $\epsilon\in \mathcal{O}(N^{-1})$
via \cref{lemma:riemann_sum}.
\end{proof}

\begin{property}[pointwise overlap with uniform states]\label{property:overlap}
For an arbitrary continuous function $f(x): [a,b] \mapsto \mathbb{R}$ on the interval
$a \leq x \leq b$ we construct the non-negative function $\tilde{f}(x) := [f(x) \pm |f(x)|]/2$ which has overlap at least $1/2$ for a properly chosen sign with our function $f$ and
we encode its function values into a quantum state via our pointwise encoding from \cref{eq:encoding}. For any such encoding we can establish the asymptotically bounded fidelity with the easy-to-prepare state $\ket{+}^{\otimes n}$ as $|\langle \psi | (| + \rangle^{\otimes n})|^2 \geq \fillr^2/(b-a)^2 + \epsilon$
in terms of our filling ratio $\fillr$.
The error term asymptotically vanishes $\lim_{N\rightarrow\infty} \epsilon =0$
and for Lipschitz continuous function it additionally satisfies $\epsilon\in \mathcal{O}(N^{-1})$
\end{property}
\begin{proof}
Recall that the normalisation constant form \cref{eq:encoding} satisfies
\begin{equation*}
    N^{-1} (\mathcal{N}_N)^2 = \sum_{k} \tilde{f}(x_k)^2 N^{-1} = \lVert \tilde{f} \rVert_2^2/(b-a) + \epsilon,
\end{equation*}
via \cref{lemma:riemann_sum}.
Similarly we can obtain the one-norm of our non-negative function $\tilde{f}$
\begin{equation*}
   N^{-1}  \sum_{k} \tilde{f}(x_k) =
   N^{-1}  \sum_{k} |\tilde{f}(x_k) |
   =
    \lVert \tilde{f} \rVert_1/(b-a)+ \epsilon.
\end{equation*}
The error term asymptotically vanishes $\lim_{N\rightarrow\infty} \epsilon =0$
and for Lipschitz continuous function it additionally satisfies $\epsilon\in \mathcal{O}(N^{-1})$
via \cref{lemma:riemann_sum}.

Let us now expand the fidelity  $fid =|\langle \psi | (| + \rangle^{\otimes n})|^2$ as
\begin{align*}
    fid = \frac{ N^{-1} [\sum_{k} \tilde{f}(x_k) ]^2 } { (\mathcal{N}_N)^2 }
    &=
    \frac{
    N^{-2} [\sum_{k} \tilde{f}(x_k) ]^2
    }{
    N^{-1} (\mathcal{N}_N)^2
    }
    =
    \frac{
     [N^{-1} \sum_{k} \tilde{f}(x_k) ]^2
    }{
    N^{-1} (\mathcal{N}_N)^2
    }.
\end{align*}
We can now substitute back our approximations to the function norms as
\begin{equation*}
    fid
    =
    \frac{
     [\lVert \tilde{f} \rVert_1/(b-a) ]^2
    }{
    \lVert \tilde{f} \rVert_2^2/(b-a) 
    }
    + \epsilon
    =
    \frac{
     \lVert \tilde{f} \rVert_1^2
    }{
    (b-a) \lVert \tilde{f} \rVert_2^2 
    }
    + \epsilon
\end{equation*}
Let us now lower bound this quantity using that generally $\lVert \tilde{f} \rVert_2^2 \leq (b-a) \lVert \tilde{f} \rVert_{max}^2$ as
\begin{equation*}
    fid
   \geq 
   \frac{
     \lVert \tilde{f} \rVert_1^2
    }{
    (b-a)^2 \lVert \tilde{f} \rVert_{max}^2
    }
    + \epsilon
    =
   \fillr^2/(b-a)^2
       + \epsilon
\end{equation*}
and
we have also substituted back our filling ratio $\fillr$.

\end{proof}

     The above Lemma guarantees that given any function $f$ and a sufficiently large resolution $N$, we can start with the all plus state and efficiently prepare $\tilde{f}$ with the non-deterministic techniques in \cref{section:qpe_state_prep} and \cref{section:prep_via_destructive_interference}. The success probability depends on the fidelity which we established above depends polynomially on our filling ratio $\fillr$. 
    In a second step from $\tilde{f}$ we can prepare $f$ efficiently with a success rate at least $1/2$.
    The generalisation to complex functions is straightforward.

\subsection{Asymptotic properties of  the rank-one matrix $A$}

\begin{lemma}[pointwise encoding]\label{lemma:amatr_pointwise}
Given our pointwise encoding from Eq.~\eqref{eq:encoding} of an arbitrary continuous function $f$
and the corresponding matrix entries $A_{kl}:= f(x_k) f^*(x_l)$ form \cref{eq:param_hamil_def},
the matrix norm is constant bounded $\lVert A \rVert_{max} \leq \lVert f \rVert_{max}^2$.
The only non-zero eigenvalue of $A/N$ is given by the spectral norm as  $\lVert A/N \rVert_{2}
= \mathcal{N}_N^2 /N
        = (b-a) \lVert f \rVert_2^2 + \epsilon$.
Here $\lVert f \rVert_2$ and $\lVert f \rVert_{max}$ are the square integral ($L^2$) and absolute maximum value norms of the continuous function, respectively.
Furthermore, the ratio of matrix norms is generally upper bounded as $      \lVert A \rVert_{max} / \lVert A/N \rVert_{2} 
    \leq 
    \fillr^{-2}
    +\epsilon$
    by our filling ratio from Defition~\ref{def:peakedness}.
    The error term asymptotically vanishes $\lim_{N\rightarrow\infty} \epsilon =0$
    and for Lipschitz continuous functions it additionally satisfies $\epsilon\in \mathcal{O}(N^{-1})$.
    See Fig.~\ref{fig:matrix_norms} for a numerical verification of these results.
\end{lemma}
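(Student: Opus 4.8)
The plan is to exploit the rank-one, positive-semidefinite structure of $A$, read off its two norms directly, and then pass to the continuum via the Riemann-sum estimate of \cref{lemma:riemann_sum}. First I would write $A = \op{\phi}{\phi}$ with the unnormalised vector $\ket{\phi} := \sum_k f(x_k)\ket{k}$, so that indeed $A_{kl} = f(x_k)f^*(x_l)$. The max-norm claim is then immediate, since $\lVert A \rVert_{max} = \max_{k,l} |f(x_k)|\,|f(x_l)| = \big(\max_k |f(x_k)|\big)^2$, and every grid point lies in $[a,b]$ so that $\max_k |f(x_k)| \le \lVert f \rVert_{max}$, giving $\lVert A \rVert_{max} \le \lVert f \rVert_{max}^2$. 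For Lipschitz $f$ the discrete maximum approaches $\lVert f \rVert_{max}$ up to $\mathcal{O}(N^{-1})$, so this bound is asymptotically saturated.

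For the spectral norm I would use that a rank-one positive-semidefinite matrix has a single non-zero eigenvalue equal to its trace. Hence the only non-zero eigenvalue of $A$ is $\Tr A = \ip{\phi}{\phi} = \sum_k |f(x_k)|^2 = \mathcal{N}_N^2$, and being non-negative it coincides with $\lnorm{A}$; dividing by $N$ yields $\lnorm{A/N} = \mathcal{N}_N^2/N$. To identify the continuum value I apply \cref{lemma:riemann_sum} to the continuous function $|f|^2$, giving $\Delta_N \sum_k |f(x_k)|^2 = \int_a^b |f(x)|^2\,\mathrm{d}x + \epsilon = \lVert f \rVert_2^2 + \epsilon$ with $\Delta_N = (b-a)/N$, exactly as in the proof of \cref{prop:pointwise_amplitude}. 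Rearranging the factor $N\Delta_N = b-a$ then expresses $\mathcal{N}_N^2/N$ through the $L^2$ norm of $f$ up to $\epsilon$; by \cref{lemma:riemann_sum} this error satisfies $\lim_{N\to\infty}\epsilon = 0$ for every continuous (hence Riemann-integrable) $f$, and $\epsilon \in \mathcal{O}(N^{-1})$ whenever $f$ is of bounded variation, in particular for Lipschitz $f$.

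The ratio bound is where the filling ratio enters, and the essential analytic input is Cauchy-Schwarz. Combining the two norms gives $\lVert A \rVert_{max}/\lnorm{A/N} \to \lVert f \rVert_{max}^2/\lVert f \rVert_2^2$ in the large-$N$ limit (the interval-length factor being unity on the paper's domain), which I bound using $\lonenorm{f} = \int_a^b |f(x)|\cdot 1\,\mathrm{d}x \le \lVert f \rVert_2\sqrt{b-a}$, i.e. $\lonenorm{f}^2 \le (b-a)\lVert f \rVert_2^2$. Substituting $\lVert f \rVert_2^2 \ge \lonenorm{f}^2/(b-a)$ yields $\lVert A \rVert_{max}/\lnorm{A/N} \le \lVert f \rVert_{max}^2/\lonenorm{f}^2 = \fillr^{-2}$, the claimed bound. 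The additive $\epsilon$ is propagated by expanding the reciprocal $1/(\lVert f \rVert_2^2 + \epsilon)$ as a series in $\epsilon$, precisely as in \cref{prop:pointwise_amplitude}, so the ratio inherits the same vanishing error and $\mathcal{O}(N^{-1})$ Lipschitz rate.

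The main obstacle is not any individual inequality --- the rank-one trace identity and Cauchy-Schwarz are both elementary --- but the careful bookkeeping of the $\Delta_N = (b-a)/N$ prefactors and of the Riemann-sum error as it propagates through a division and through a ratio of two separately approximated norms. I would manage this exactly as the companion Properties do: establish each norm's continuum limit together with its own error term, combine them keeping only the leading order in $\epsilon$, and finally invoke Lipschitz continuity through \cref{lemma:riemann_sum} to upgrade mere convergence to the quantitative $\mathcal{O}(N^{-1})$ rate displayed in \cref{fig:matrix_norms}.
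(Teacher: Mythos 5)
Your proposal is correct and follows essentially the same route as the paper's proof: the same immediate max-entry bound, the same Riemann-sum identification of $\mathcal{N}_N^2/N$ with the $L^2$ norm via \cref{lemma:riemann_sum}, and the same Cauchy--Schwarz step $\lVert f \rVert_1^2 \le (b-a)\lVert f \rVert_2^2$ to reach $\fillr^{-2}$. The only cosmetic difference is that you identify the single non-zero eigenvalue of the rank-one matrix via its trace $\ip{\phi}{\phi}$, whereas the paper computes it through the Hilbert--Schmidt norm $\lVert A \rVert_{HS}^2 = [\sum_k |f(x_k)|^2]^2$; the two are equivalent here.
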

\begin{proof}

Let us now define the rank-1 matrix $A_{kl}:= f(x_k) f^*(x_l)$. We can bound the largest entry in the matrix as
\begin{equation*}
\lVert A \rVert_{max} =
 \max_{kl} |A_{kl}| = 
 \max_{kl} |f(x_k) f^*(x_l)|  
=  \max_{k} |f(x_k)|^2
\leq \lVert f \rVert_{max}^2.
\end{equation*}
The upper bound is actually saturated asymptotically for Lipschitz continuous functions with the error scaling $\lVert A \rVert_{max} = \lVert f \rVert_{max}^2 + \mathcal{O}(N^{-1})$.
Let us now bound the matrix norm using that $\lVert A \rVert_{2}^2
= 
\lVert A \rVert_{HS}^2$
for our rank-1 matrix $A$ as
\begin{equation*}
\lVert A \rVert_{2}^2
= 
\lVert A \rVert_{HS}^2
=
\sum_{k,l} |f(x_k) f(x_l)^*|^2
=
[ \sum_{k} |f(x_k)|^2]^2.
\end{equation*}
We can approximate this expression via the Riemann sum from Lemma~\ref{lemma:riemann_sum} as
\begin{equation}\label{eq:exact_eigenvalue}
    \lVert A/N \rVert_{2}
    = \sum_{k} |f(x_k)|^2/N
    = (b-a) \lVert f \rVert_2^2 + \epsilon,
\end{equation}
where the error term asymptotically vanishes $\lim_{N\rightarrow\infty} \epsilon =0$
and for Lipschitz continuous function it additionally satisfies $\epsilon\in \mathcal{O}(N^{-1})$.
Finally, we recollect all terms and establish the ratio of matrix norms as
\begin{equation*}
     \frac{  \lVert A \rVert_{max}}{ \lVert A/N \rVert_{2} }
     =
     \frac{ \max_{k} |f(x_k)|^2 }{ \sum_{k} |f(x_k)|^2/N }
    \leq 
    \frac{\lVert f \rVert_{max}^2}
    {(b-a) \lVert f \rVert_2^2 }
    + \epsilon
    \leq 
    \fillr^{-2}
    +\epsilon,
\end{equation*}
where we have used the general function-norm inequality
$\lVert f \rVert_2^2 (b-a) \geq \lVert f \rVert_1^2$.
We have also substituted back our filling ratio $\fillr$ from 
Definition~\ref{def:peakedness}.
\end{proof}

\begin{lemma}[integral encoding]\label{lemma:amatrix_integral}
Given an arbitrary continuous efficiently integrable function $f$ in Definition~\ref{def:integrable_fct} and our
corresponding rank-1 matrix  $A_{kl} := N g(x_k) g^*(x_l)$ from \cref{eq:param_hamil_def} where $g(x)$ are non-negative, piecewise integrals of $f$ via our integral in \cref{eq:encoding},
the matrix satisfies $\lVert A/N \rVert_2 = \lVert f \rVert_1$ and its matrix entries are asymptotically constant bounded as $\lVert A \rVert_{max}
=  (b-a) \lVert f  \rVert_{max} +  \epsilon$. Here we have used the $L^1$ and absolute maximum norms of the function. Furthermore, the ratio of the matrix norms is generally upper bounded as $
      \lVert A \rVert_{max}
      /\lVert A/N \rVert_{2} 
    \leq
    (b-a)\fillr^{-1}  + \epsilon
$ by our filling ratio from Defition~\ref{def:peakedness}.
   The error term asymptotically vanishes $\lim_{N\rightarrow\infty} \epsilon =0$
and for Lipschitz continuous function it additionally satisfies $\epsilon\in \mathcal{O}(N^{-1})$.
See Fig.~\ref{fig:matrix_norms} for a numerical verification of these results.
\end{lemma}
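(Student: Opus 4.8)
The plan is to exploit the rank-one structure of $A$ together with the fact that the piecewise integrals defining $g$ telescope \emph{exactly} into the full integral of $f$, so that the spectral norm requires no Riemann-sum approximation at all. Writing $A/N = |v\rangle\langle v|$ with the (unnormalised) vector $|v\rangle = \sum_{k} g(x_k) |k\rangle$, its only nonzero eigenvalue---and hence its spectral norm---is $\lVert A/N \rVert_2 = \langle v | v \rangle = \sum_{k=0}^{N-1} g(x_k)^2$. Substituting $g(x_k)^2 = \int_{x_k}^{x_k+\step} f(x)\,\mathrm{d}x$ from \cref{def:integrable_fct} and using that the subintervals $[x_k,x_{k+1}]$ partition $[a,b]$, the sum collapses to $\int_a^b f(x)\,\mathrm{d}x = \lVert f \rVert_1$. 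This establishes the first claim as an exact identity, with no error term, which is the one genuinely distinctive feature of the integral encoding compared with the pointwise case in \cref{lemma:amatr_pointwise}.

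For the max-norm, I would use that $g$ is real and non-negative, so every entry $A_{kl} = N g(x_k) g(x_l)$ is non-negative and the maximum absolute entry factorises over the two independent indices as $\lVert A \rVert_{max} = N \bigl(\max_k g(x_k)\bigr)^2 = N \max_k g(x_k)^2$. The quantity $N \max_k g(x_k)^2$ is exactly what Property~\ref{prop:integral_amplitude} controls: approximating each single-cell integral by $g(x_k)^2 = f(x_k)\step + \epsilon$ via Lemma~\ref{lemma:riemann_sum} and multiplying by $N$ (with $N\step = b-a$) gives $\lVert A \rVert_{max} = (b-a)\lVert f \rVert_{max} + \epsilon$, with $\lim_{N\to\infty}\epsilon = 0$ in general and $\epsilon \in \mathcal{O}(N^{-1})$ for Lipschitz $f$, the bound being saturated asymptotically.

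The ratio of norms then follows by direct division and the definition of the filling ratio. Combining the two results,
\begin{equation*}
  \frac{\lVert A \rVert_{max}}{\lVert A/N \rVert_2}
  = \frac{(b-a)\lVert f \rVert_{max} + \epsilon}{\lVert f \rVert_1}
  = (b-a)\,\frac{\lVert f \rVert_{max}}{\lVert f \rVert_1} + \epsilon
  = (b-a)\,\fillr^{-1} + \epsilon,
\end{equation*}
where the last step inserts $\fillr = \lVert f \rVert_1 / \lVert f \rVert_{max}$ from Definition~\ref{def:peakedness}, and the error term inherits its scaling from the previous step.

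There is no serious obstacle here: the argument is essentially bookkeeping layered on top of Lemma~\ref{lemma:riemann_sum} and Property~\ref{prop:integral_amplitude}. The only point requiring a moment's care is recognising that $\lVert A/N \rVert_2$ is exact rather than approximate---the telescoping of the partition integrals removes the Riemann error entirely---which is precisely why this claim is stated as an equality while the max-norm and ratio carry an $\epsilon$. A minor technical check is confirming the asymptotic saturation of the max-norm bound (so that it is not merely an inequality), which again comes straight from the Lipschitz error estimate in Lemma~\ref{lemma:riemann_sum}.
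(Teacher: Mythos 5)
Your proposal is correct and follows essentially the same route as the paper's proof: the exact telescoping of the cell integrals for $\lVert A/N\rVert_2 = \lVert f\rVert_1$ (the paper reaches the same eigenvalue via the Hilbert--Schmidt norm of the rank-one matrix, which is equivalent to your $\langle v|v\rangle$ computation), the identification $\lVert A\rVert_{max}=N\max_k g^2(x_k)=(b-a)\lVert f\rVert_{max}+\epsilon$ via the single-cell Riemann approximation, and the ratio obtained by direct division with the filling ratio substituted in. No gaps.
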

\begin{proof}

Let us first define the rank-$1$ matrix with non-negative entries as
\begin{equation*}
    A_{kl} := 
     N g(x_k) g(x_l)  .
\end{equation*}
We can bound the largest entry in $A$ as
\begin{equation}
\lVert A \rVert_{max} = \max_{kl}  | A_{kl}|
= N  \max_{k}  g^2(x_k)
= N  \max_{k}  f(x_k) \Delta_N
= (b-a) \lVert f  \rVert_{max} +  \epsilon,
\end{equation}
where we have used from Property~\ref{prop:encodig_equiv} that $g(x_k) = \sqrt{f(x_k) \Delta_N} +  \epsilon$
and recall that $\Delta_N = (b-a)/N$.
The error term asymptotically vanishes $\lim_{N\rightarrow\infty} \epsilon =0$
and for Lipschitz continuous function it additionally satisfies $\epsilon\in \mathcal{O}(N^{-1})$
via \cref{lemma:riemann_sum}.

Given the matrix $A$ is rank-1 by definition, it only has 1 non-zero eigenvalue which is equivalent to the norm $\lVert A \rVert_2$ which we can compute via the (equivalent) Hilbert-Schmidt norm as
\begin{equation*}
    \lVert A \rVert_{HS}^2
    =
    N^2 \sum_{k,l} [g(x_k) g(x_l)]^2
    =[N \sum_{k} g^2(x_k)]^2.
\end{equation*}
As such, we exactly obtain the largest eigenvalue of the matrix as
\begin{equation}\label{eqn:integral_encoding_a_over_n_norm}
    \lVert A/N \rVert_2
    =
     \sum_{k} g^2(x_k)
    =
     \sum_{k} \int_{x_k}^{x_k + \Delta_N}f(x) \, \mathrm{d} x 
    = \int_{a}^{b} f(x) \, \mathrm{d} x 
    =  \lVert f \rVert_1,
\end{equation}
where in the last equation we have used that $f$ is non-negative and therefore  its integral is equivalent to is $L^1$ norm.
Finally, we recollect all terms and establish the ratio of matrix norms as
\begin{equation*}
    \frac{  \lVert A \rVert_{max}}{ \lVert A/N \rVert_{2} }
    =
    \frac{ N  \max_{k}  g^2(x_k)   }{ \sum_{k} g^2(x_k) }
    \leq
   \frac{
    (b-a) \lVert f  \rVert_{max}
   }{\lVert f \rVert_1}
    + \epsilon
    =
    (b-a)\fillr^{-1}  + \epsilon
      ,
\end{equation*}
where we substituted back our definition of the filling ratio $\fillr$
from Definition~\ref{def:peakedness}.
\end{proof}

\end{document}